\pgfplotsset{every linear axis/.append style={width=4.5cm, height=3.5cm}}
\pgfplotsset{every axis plot/.append style={mark size=0.15cm},compat=newest}
\let\oldState\State
\RenewDocumentCommand{\State}{o}{
  \IfValueTF{#1}{\makeatletter\setcounter{ALG@line}{#1}\addtocounter{ALG@line}{-1}\makeatother}{}%
  \oldState\ignorespaces%
}
\newcommand\resetstackedplots{
\makeatletter
\pgfplots@stacked@isfirstplottrue
\makeatother
\addplot [forget plot,draw=none] coordinates{(1,0) (2,0) (3,0)};
}
\title{Approximation Schemes for \\Many-Objective Query Optimization}
\author{
%
%
\alignauthor
Immanuel Trummer and Christoph Koch\\
       \affaddr{\'Ecole Polytechnique F\'ed\'erale de Lausanne}\\
       \email{\{firstname\}.\{lastname\}@epfl.ch}
}
\begin{document}

\maketitle

\newcommand*{\codeF}{\fontfamily{\sfdefault}\selectfont}

\newlength{\abovecaptionskip}

\newcommand{\reMetricOne}{Buffer Space}
\newcommand{\reMetricTwo}{Time}
\newcommand{\planCost}{Plan Cost}
\newcommand{\optimalCost}{Optimal Cost}
\newcommand{\paretoFrontier}{Pareto Frontier}
\newcommand{\approxParetoFrontier}{Approximate \\Pareto Frontier}
\newcommand{\weights}{Weights}
\newcommand{\bounds}{Bounds}
\newcommand{\dominated}{Dominated Area}
\newcommand{\approxDominated}{Approximately\\Dominated Area}

\newcommand{\AG}{\textsf{EXA}}
\newcommand{\AWM}{\textsf{RTA}}
\newcommand{\ABM}{\textsf{IRA}}

\newcommand{\PW}{weighted MOQO}
\newcommand{\PB}{bounded-weighted MOQO}

\newtheorem{lemma}{Lemma}
\newtheorem{theorem}{Theorem}
\newtheorem{corollary}{Corollary}
\newtheorem{example}{Example}
\newtheorem{assumption}{Assumption}
\newtheorem{observation}{Observation}
\newtheorem{remark}{Remark}

\newdef{definition}{Definition}
\newdef{scenario}{Scenario}


\newcommand{\firstConfIndex}{6}
\newcommand{\secondConfIndex}{5}
\newcommand{\thirdConfIndex}{4}
\newcommand{\fourthConfIndex}{3}

\newcommand{\oneObjWeighted}{v23_1Obj_0Bou}
\newcommand{\threeObjWeighted}{v23_3Obj_0Bou}
\newcommand{\sixObjWeighted}{v23_6Obj_0Bou}
\newcommand{\nineObjWeighted}{v23_9Obj_0Bou}

\newcommand{\threeObjBounded}{v29_9Obj_3Bou}
\newcommand{\sixObjBounded}{v29_9Obj_6Bou}
\newcommand{\nineObjBounded}{v29_9Obj_9Bou}

\newcommand{\memoryOpt}{memoryOpt}

\newlength{\benchWidth}
\setlength{\benchWidth}{\columnwidth}
\newcommand*{\horScale}{0.975}

\pgfplotscreateplotcyclelist{benchmarkCycle}{%
        {black,pattern=horizontal lines,mark=none, /pgf/bar shift=-3pt},%
        {black,fill=black!60!white,mark=none, /pgf/bar shift=-1pt},%
        {black,fill=black!30!white,mark=none, /pgf/bar shift=1pt},%
        {black,fill=white,mark=none, /pgf/bar shift=3pt},
        {black,fill=black, /pgf/bar shift=-3pt}}

\pgfplotscreateplotcyclelist{benchmarkExtractCycle}{%
        {black,fill=black,mark=none, /pgf/bar shift=-2pt},%
        {black,fill=black!60!white,mark=none, /pgf/bar shift=0pt},%
        {black,fill=black!30!white,mark=none, /pgf/bar shift=2pt}}

\begin{abstract}
The goal of multi-objective query optimization (MOQO) is to find query plans that realize a good compromise between conflicting objectives such as minimizing execution time and minimizing monetary fees in a Cloud scenario. A previously proposed exhaustive MOQO algorithm needs hours to optimize even simple TPC-H queries. This is why we propose several approximation schemes for MOQO that generate guaranteed near-optimal plans in seconds where exhaustive optimization takes hours. 

We integrated all MOQO algorithms into the Postgres optimizer and present experimental results for TPC-H queries; we extended the Postgres cost model and optimize for up to nine conflicting objectives in our experiments. The proposed algorithms are based on a formal analysis of typical cost functions that occur in the context of MOQO. We identify properties that hold for a broad range of objectives and can be exploited for the design of future MOQO algorithms.

\end{abstract}

\section{Introduction}
\label{introductionSec}

Minimizing execution time is the only objective in classical query optimization~\cite{Selinger1979}. Nowadays, there are however many scenarios in which additional objectives are of interest that should be considered during query optimization. This leads to the problem of multi-objective query optimization (MOQO) in which the goal is to find a query plan that realizes the best compromise between conflicting objectives. Consider the following example scenarios.

\begin{scenario}
A Cloud provider lets users submit SQL queries on data that resides in the Cloud. Queries are processed in the Cloud and users are billed according to the accumulated processing time over all nodes that participated in processing a certain query. The processing time of aggregation queries can be reduced by using sampling but this has a negative impact on result quality. From the perspective of the users, this leads to the three conflicting objectives of minimizing execution time, minimizing monetary costs, and minimizing the loss in result quality. Users specify preferences in their profiles by setting weights on different objectives, representing relative importance, and by optionally specifying constraints (e.g., an upper bound on execution time). Upon reception of a query, the Cloud provider needs to find a query plan that meets all constraints while minimizing the weighted sum over different cost metrics.
\end{scenario}

\begin{scenario}
A powerful server processes queries of multiple users concurrently. Minimizing the amount of system resources (such as buffer space, hard disk space, I/O bandwidth, and number of cores) that are dedicated for processing one specific query and minimizing that query's execution time are conflicting objectives (each specific system resource would correspond to an objective on its own). Upon reception of a query, the system must find a query plan that represents the best compromise between all conflicting objectives, considering weights and bounds defined by an administrator.
\end{scenario}

The main contribution in this paper are several MOQO algorithms that are generic enough to be applicable in a variety of scenarios (including the two scenarios outlined above) and are much more efficient than prior approaches while they formally guarantee to generate near-optimal query plans.

\subsection{State of the Art}

The goal of MOQO, according to our problem model, is to find query plans that minimize a weighted sum over different cost metrics while respecting all cost bounds. This means that multiple cost metrics are finally combined into a single metric (the weighted sum); it is still not possible to reduce MOQO to single-objective query optimization and use classic optimization algorithms such as the one by Selinger~\cite{Selinger1979}. Ganguly et al.\ have thoroughly justified why this is not possible~\cite{ganguly1992query}; we quickly outline the reasons in the following. Algorithms that prune plans based on a single cost metric must rely on the single-objective principle of optimality: replacing subplans (e.g., plans generating join operands) within a query plan by subplans that are better according to that cost metric cannot worsen the entire query plan according to that metric. This principle breaks when the cost metric of interest is a weighted sum over multiple metrics that are calculated according to diverse cost formulas. 

\begin{example}
Assume that each query plan is associated with a two-dimensional cost vector of the form $(t,e)$ where $t$ represents execution time in seconds and $e$ represents energy consumption in Joule. Assume one wants to minimize the weighted sum over time and energy with weight 1 for time and weight 2 for energy, i.e.\ the sum $t+2e$. Let $p$ be a plan that executes two subplans $p_1$ with cost vector $(7,1)$ and $p_2$ with cost vector $(6,2)$ in parallel. The cost vector of $p$ is $(7,3)$ since its execution time is the maximum over the execution times of its subplans ($7=\max(7,6)$) while its energy consumption is the sum of the energy consumptions of its subplans ($3=1+2$). Replacing $p_1$ within $p$ by another plan $p_1'$ with cost vector $(1,3)$ changes the cost vector of $p$ from $(7,3)$ to $(6,5)$. This means that the weighted cost of $p$ becomes worse (it increases from 13 to 16) even if the weighted cost of $p_1'$ (7) is better than the one of $p_1$ (9).
\end{example}

The example shows that the single-objective principle of optimality can break when optimizing a weighted sum of multiple cost metrics. Based on that insight, Ganguly et al.\ proposed a MOQO algorithm that uses a multi-objective version of the principle of optimality~\cite{ganguly1992query}. This algorithm guarantees to generate optimal query plans; it is however too computationally expensive for practical use as we will show in our experiments. The algorithm by Ganguly et al.\ is the only MOQO algorithm that we are aware of which is generic enough to handle all objectives that were mentioned in the example scenarios before. Most existing MOQO algorithms are specific to certain combinations of objectives where the single-objective principle of optimality holds~\cite{Agarwal2012, xu2012pet, Kambhampati02havasu:a}. 

\subsection{Contributions and Outline}

We summarize our contributions before we provide details:

\begin{itemize}
\item Our primary contribution are \textbf{two approximation schemes} for MOQO that scale to many objectives. They formally guarantee to return near-optimal query plans while speeding up optimization by several orders of magnitude in comparison with exact algorithms. 
\item We \textbf{formally analyze cost formulas} of many relevant objectives in query optimization and derive several common properties. We exploit these properties to design efficient approximation schemes and believe that our observations can serve as starting point for the design of future MOQO algorithms. 
\item We integrated the exact MOQO algorithm by Ganguly et al.~\cite{ganguly1992query} and our own MOQO approximation algorithms into the Postgres optimizer and \textbf{experimentally compare} their performance on TPC-H queries. 
\end{itemize}

Our approximation schemes formally guarantee to generate query plans whose cost is within a multiplicative factor $\alpha$ of the optimum in each objective. Parameter $\alpha$ can be tuned seamlessly to trade near-optimality guarantees for lower computational optimization cost. The near-optimality guarantees distinguish our approximation schemes from \textit{heuristics}, since heuristics can produce arbitrarily poor plans in the worst case. We show in our experimental evaluation that our approximation schemes reduce query optimization time from hours to seconds, comparing with an existing exact MOQO algorithm proposed by Ganguly et al.\ that is referred to as the \AG{} in the following. 

We discuss related work in Section~\ref{relatedSec} and introduce the formal model in Section~\ref{modelSec}. Our experimental evaluation is based on an extended version of Postgres that we describe in Section~\ref{implementationSec}. Note that our algorithms for MOQO are not specific to Postgres and can be used within any database system. We present the first experimental evaluation of the formerly proposed \AG{} in Section~\ref{optimalEvalSec}. Our experiments relate the poor scalability of the \AG{} to the high number of Pareto plans (i.e., plans representing an optimal tradeoff between different cost objectives) that it needs to generate. The representative-tradeoffs algorithm (\AWM{}), that we present in Section~\ref{wMOQOsec}, generates only one representative for multiple Pareto plans with similar cost tradeoffs and is therefore much more efficient than the \AG{}. We show that most common objectives in MOQO allow to construct near-optimal plans for joining a set of tables out of near-optimal plans for joining subsets. Due to that property, the \AWM{} formally guarantees to generate near-optimal query plans if user preferences are expressed by associating objectives with weights (representing relative importance). If users can specify cost bounds in addition to weights (representing for instance a monetary budget or a deadline), the \AWM{} cannot guarantee to generate near-optimal plans anymore and needs to be extended. We present the iterative-refinement algorithm (\ABM{}) in Section~\ref{bMOQOsec}. The \ABM{} uses the \AWM{} to generate a representative plan set in every iteration. The approximation precision is refined from one iteration to the next such that the representative plan set resembles more and more the Pareto plan set. The \ABM{} stops once it can guarantee that the generated plan set contains a near-optimal plan. A carefully selected precision refinement policy guarantees that the amount of redundant work (by repeatedly generating the same plans in different iterations) is negligible. We analyze time and space complexity of all presented algorithms and experimentally compare our two approximation schemes (the \AWM{} and the \ABM{}) against the \AG{} in Section~\ref{experimentalSec}.

\section{Related Work}
\label{relatedSec}
Algorithms for \textbf{Single-Objective Query Optimization (SOQO)} are not applicable to MOQO or cannot offer any guarantees on result quality. Selinger et al.~\cite{Selinger1979} presented one of the first exact algorithms for SOQO which is based on dynamic programming. \textbf{Multi-Objective Query Optimization} is the focus of this paper. The algorithm by Ganguly et al.~\cite{ganguly1992query} is a generalization of the SOQO algorithm by Selinger et al. This algorithm is able to generate optimal query plans considering a multitude of objectives with diverse cost formulas. We describe it in more detail later, as we use it as baseline for our experiments. 

Algorithms for MOQO have not been experimentally evaluated for more than three objectives. They are usually tailored to very specific combinations of objectives. Neither the proposed algorithms nor the underlying algorithmic ideas can be used for many-objective QO with diverse cost formulas. Allowing only additive cost formulas (and user preference functions)~\cite{xu2012pet, Kambhampati02havasu:a} excludes for instance run time as objective in parallel execution scenarios (where time is calculated as maximum over parallel branches). The approach by Aggarwal et al.~\cite{Agarwal2012} is specific to the two objectives run time and confidence. Multiple objectives are only considered by selecting an optimal set of table samples prior to join ordering which does not generalize to different objectives. Optimizing different objectives separately misses optimal tradeoffs between conflicting objectives~\cite{abul2012alternative}. Separating join ordering and multi-objective optimization (e.g., by generating a time-optimal join tree first, and mapping join operators to sites considering multiple objectives later~\cite{garofalakis1996multi, Papadimitriou2001}) assumes that the same join tree is optimal for all objectives. This is only valid in special cases. Papadimitriou and Yannakakis~\cite{Papadimitriou2001} present multi-objective approximation algorithms for mapping operators to sites. Their algorithms do not optimize join order and the underlying approach does not generalize to more than one bounded objective. Algorithms for multi-objective optimization of data processing workflows~\cite{Simitsis2005, Simitsis2012, kllapi2011schedule} are not directly applicable to MOQO. Furthermore, the proposed approaches can be classified into heuristics that do not offer near-optimality guarantees~\cite{Simitsis2012, kllapi2011schedule}, and exact algorithms that do not scale~\cite{Simitsis2005}. 

\textbf{Parametric Query Optimization (PQO)} assumes that cost formulas depend on parameters with uncertain values. The goal is for instance to find robust plans~\cite{Babu2005, Babcock2005} or plans that optimize expected cost~\cite{Chu2002}. PQO and MOQO share certain problem properties while subtle differences prevent us from applying PQO algorithms to MOQO problems in general. Several approaches to PQO split for instance the PQO problem into several SOQO problems~\cite{Ganguly1998, Hulgeri2004, Bizarro2009} by fixing parameter values. This is not possible for MOQO since cost values, unlike parameter values, are only known once a query plan is complete and cannot be fixed in advance. Other PQO algorithms~\cite{Hulgeri2004} directly work with cost functions instead of scalar values during bottom-up plan construction. This assumes that all parameter values can be selected out of a connected interval which is typically not the case for cost objectives such as time or disc footprint. Our work connects to \textbf{Iterative Query Optimization} since we propose iterative algorithms. Kossmann and Stocker~\cite{Kossmann2000} propose several iterative algorithms that break the optimization of a large table set into multiple optimization runs for smaller table sets, thereby increasing efficiency. Their algorithm is only applicable to SOQO and does not offer formal guarantees on result quality. Work on \textbf{Skyline Queries}~\cite{kossmann2002shooting} and \textbf{Optimization Queries}~\cite{guha2003efficient} focuses on \textit{query processing} while we focus on \textit{query optimization}. Our work is situated in the broader area of \textbf{Approximation Algorithms}. We use generic techniques such as \textit{coarsening} that have been applied to other optimization problems~\cite{Erlebach2014, Marinescu2011}; the corresponding algorithms are however not applicable to query optimization and the specific coarsening methods differ. 





\section{Formal Model}
\label{modelSec}

We represent \textbf{queries} as set of tables $Q$ that need to be joined. This model abstracts away details such as join predicates (that are however considered in the implementations of the presented algorithms). \textbf{Query plans} are characterized by the join order and the applied join and scan operators, chosen out of a set $\mathbb{J}$ of available operators. The two plans generating the inputs for the final join in a query plan $p$ are the \textbf{sub-plans} of $p$. The set $\mathbb{O}$ contains all \textbf{cost objectives} (e.g., $\mathbb{O}=\{\text{buffer space, execution time}\}$); we assume that a cost model is available for every objective that allows to estimate the cost of a plan. The function $\mathbf{c}(p)$ denotes the multi-dimensional cost of a plan $p$ (bold font distinguishes vectors from scalar values). Cost values are real-valued and non-negative. Let $o\in\mathbb{O}$ an objective, then $\mathbf{c}^o$ denotes the cost for $o$ within vector $\mathbf{c}$. Let $\mathbf{W}$ a vector of non-negative weights, then the function $C_{\mathbf{W}}(\mathbf{c})=\sum_{o\in\mathbb{O}}\mathbf{c}^o\mathbf{W}^o$ denotes the \textbf{weighted cost} of $\mathbf{c}$. Let $\mathbf{B}$ a vector of non-negative bounds (setting $\mathbf{B}^o=\infty$ means no bounds), then cost vector $\mathbf{c}$ \textbf{exceeds} the bounds if there is at least one objective $o$ with $\mathbf{c}^o>\mathbf{B}^o$. Vector $\mathbf{c}$ \textbf{respects} the bounds otherwise. The following two variants of the MOQO problem distinguish themselves by the expressiveness of the user preference model.


\begin{definition}
\textbf{Weighted MOQO Problem.}
A weighted MOQO problem instance is defined by a tuple $I=\langle Q,\mathbf{W}\rangle$ where $Q$ is a query and $\mathbf{W}$ a weight vector. A solution is a query plan for $Q$. An optimal plan minimizes the weighted cost $C_{\mathbf{W}}$ over all plans for $Q$. 
\end{definition}

\begin{definition}
\textbf{Bounded-Weighted MOQO Problem.}
A bounded-weighted MOQO problem instance is defined by a tuple $I=\langle Q,\mathbf{W},\mathbf{B}\rangle$ and extends the \PW{} problem by a bounds vector $\mathbf{B}$. Let $P$ the set of plans for $Q$ and $P_{\mathbf{B}}\subseteq P$ the set of plans that respect $\mathbf{B}$. If $P_{\mathbf{B}}$ is non-empty, an optimal plan minimizes $C_{\mathbf{W}}$ among the plans in $P_{\mathbf{B}}$. If $P_{\mathbf{B}}$ is empty, an optimal plan minimizes $C_{\mathbf{W}}$ among the plans in $P$. 
\end{definition}

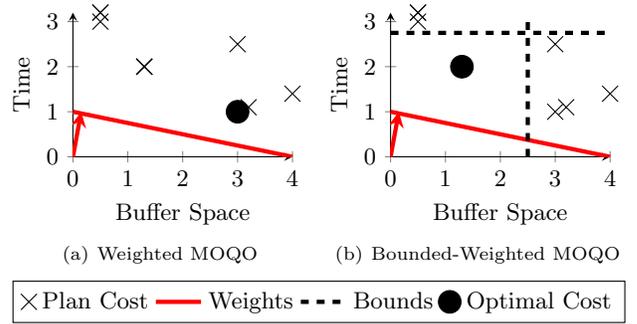
\begin{figure}
\subfigure[Weighted MOQO\label{WMOQOvarFig}]{\pgfplotstableread{figures/diagrams/points.txt}\points
\pgfplotstableread{figures/diagrams/pareto.txt}\pareto
\pgfplotstableread{figures/diagrams/fine.txt}\fine
\pgfplotstableread{figures/diagrams/coarse.txt}\coarse

\begin{tikzpicture}

\begin{axis}[
xlabel=\reMetricOne{},
ylabel=\reMetricTwo{},
ylabel near ticks,
xlabel near ticks,
axis x line=bottom,
axis y line=left,
xmin=0,
ymin=0,
mark=x,
]

\addplot[only marks,mark=x] table[x={x},y={y}]{\points};
\addplot[only marks,forget plot] table[x={x},y={y}]{\coarse};
\addplot[only marks,forget plot] table[x={x},y={y}]{\fine};
\addplot[only marks,forget plot] table[x={x},y={y}]{\pareto};

\draw[red,-stealth,ultra thick]
(axis cs:0,0)
--
++
(axis direction cs:0.15,1);

\addplot[no marks,draw=red,ultra thick] coordinates {
(0,1)
(4,0)
};

\addplot[mark=*,fill=black,only marks] coordinates {
(3,1)
};

\end{axis}

\end{tikzpicture}}
\subfigure[Bounded-Weighted MOQO\label{BMOQOvarFig}]{\pgfplotstableread{figures/diagrams/points.txt}\points
\pgfplotstableread{figures/diagrams/pareto.txt}\pareto
\pgfplotstableread{figures/diagrams/fine.txt}\fine
\pgfplotstableread{figures/diagrams/coarse.txt}\coarse

\begin{tikzpicture}

\begin{axis}[
xlabel=\reMetricOne{},
ylabel=\reMetricTwo{},
ylabel near ticks,
xlabel near ticks,
axis x line=bottom,
axis y line=left,
xmin=0,
ymin=0,
mark=x,
legend columns=4,
legend style={
at={(1,-0.25)},
anchor=north east},
legend to name=variantsLegend,
legend cell align=left]

\addplot[only marks,mark=x] table[x={x},y={y}]{\points};
\addplot[only marks,forget plot] table[x={x},y={y}]{\coarse};
\addplot[only marks,forget plot] table[x={x},y={y}]{\fine};
\addplot[only marks,forget plot] table[x={x},y={y}]{\pareto};

\draw[red,-stealth,ultra thick]
(axis cs:0,0)
--
++
(axis direction cs:0.15,1);

\addplot[no marks,draw=red,ultra thick] coordinates {
(0,1)
(4,0)
};

\addplot[ultra thick,dashed,no marks] coordinates {
(2.5,0)
(2.5,3)
};
\addplot[ultra thick,dashed,no marks,forget plot] coordinates {
(0,2.75)
(4,2.75)
};

\addplot[mark=*,fill=black,only marks] coordinates {
(1.3,2)
};

\addlegendentry[align=left]{\planCost{}}
\addlegendentry[align=left]{\weights{}}
\addlegendentry[align=left]{\bounds{}}
\addlegendentry[align=left]{\optimalCost{}}

\end{axis}

\end{tikzpicture}}
\centering
\ref{variantsLegend}

\vspace{-3mm}

\caption{The two MOQO problem variants}
\label{variantsFig}
\end{figure}

Figure~\ref{WMOQOvarFig} illustrates \PW{}. It shows cost vectors of possible query plans (considering time and buffer space as objectives) and the user-specified weights (as vector from the origin). The line orthogonal to the weight vector represents cost vectors of equal weighted cost. The optimal plan is found by shifting this line to the top until it touches the first plan cost vector. Figure~\ref{BMOQOvarFig} illustrates \PB{}. Additional cost bounds are specified and a different plan is optimal since the formerly optimal plan exceeds the bounds. We will use the set of cost vectors depicted in Figure~\ref{variantsFig} as \textbf{running example} throughout the paper. The relative cost function $\rho$ measures the cost of a plan relative to an optimal plan. 

\begin{definition}
\textbf{Relative Cost.}
\label{relCostDef}
The relative cost function $\rho_I$ of a \PW{} instance $I=\langle Q,\mathbf{W}\rangle$ judges a query plan $p$ by comparing its weighted cost to the one of an optimal plan $p^*$: $\rho_I(p)=C_{\mathbf{W}}(\mathbf{c}(p))/C_{\mathbf{W}}(\mathbf{c}(p^*))$. The relative cost function of a \PB{} instance $I=\langle Q,\mathbf{W},\mathbf{B}\rangle$ is defined in the same way if no plan exists that respects $\mathbf{B}$. Otherwise, set $\rho_I(p)=\infty$ for any plan $p$ that does not respect $\mathbf{B}$ and $\rho_I(p)=C_{\mathbf{W}}(\mathbf{c}(p))/C_{\mathbf{W}}(\mathbf{c}(p^*))$ if $p$ respects $\mathbf{B}$. 
\end{definition}

Let $\alpha\geq1$, then an $\alpha$-\textbf{approximate solution} to a \PW{} or \PB{} instance $I$ is a plan $p$ whose relative cost is bounded by $\alpha$: $\rho_I(p)\leq\alpha$. The following classification of MOQO algorithms is based on the formal near-optimality guarantees that they offer. 

\begin{definition}
\textbf{MOQO Approximation Scheme.}
\label{approximationAlgDef}
An approximation scheme for MOQO is tuned via a user-specified precision parameter $\alpha_U$ and guarantees to generate an $\alpha_U$-approximate solution for any MOQO problem instance. 
\end{definition}

\begin{definition}
\textbf{Exact MOQO Algorithm.}
\label{optimalAlgDef}
An exact algorithm for MOQO guarantees to generate a $1$-approximate (hence optimal) solution for any MOQO problem instance. 
\end{definition}

\begin{figure}
\centering
\pgfplotstableread{figures/diagrams/points.txt}\points
\pgfplotstableread{figures/diagrams/pareto.txt}\pareto
\pgfplotstableread{figures/diagrams/fine.txt}\fine
\pgfplotstableread{figures/diagrams/coarse.txt}\coarse

\begin{tikzpicture}

\begin{axis}[
xlabel=\reMetricOne{},
ylabel=\reMetricTwo{},
ylabel near ticks,
xlabel near ticks,
axis x line=bottom,
axis y line=left,
xmin=0,
ymin=0,
mark=x,
legend columns=1,
legend style={
at={(1.15,0.5)},
anchor=west},
legend cell align=left]

\addplot[only marks,mark=x] table[x={x},y={y}]{\points};
\addplot[only marks,forget plot] table[x={x},y={y}]{\coarse};
\addplot[only marks,forget plot] table[x={x},y={y}]{\fine};
\addplot[only marks,mark=*,fill=black] table[x={x},y={y}]{\pareto};

\foreach \i in {2,1,0} {

\pgfplotstablegetelem{\i}{x}\of{\pareto}
\let\xCoord=\pgfplotsretval

\pgfplotstablegetelem{\i}{y}\of{\pareto}
\let\yCoord=\pgfplotsretval

\addplot[const plot mark right,fill=lightgray,no markers, fill opacity=0.5, area legend] coordinates {
(\xCoord,\yCoord)
(4.5,\yCoord)
(4.5,4.5)
(\xCoord,4.5)
(\xCoord,\yCoord)
};

} 

\addlegendentry[align=left]{\planCost{}}
\addlegendentry[align=left]{\paretoFrontier{}}
\addlegendentry[align=left]{\dominated{}}

\end{axis}

\end{tikzpicture}

\vspace{-5mm}

\caption{Pareto frontier and dominated area}
\label{paretoFig}
\end{figure}
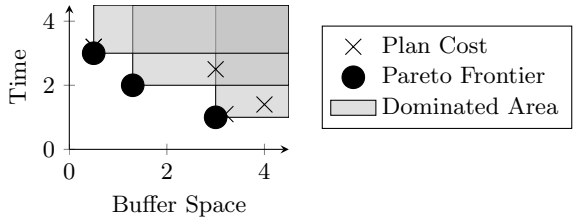

The following definitions express relationships between cost vectors. A vector $\mathbf{c}_1$ \textbf{dominates} vector $\mathbf{c}_2$, denoted by $\mathbf{c}_1\preceq\mathbf{c}_2$, if $\mathbf{c}_1$ has lower or equivalent cost than $\mathbf{c}_2$ in every objective. Vector $\mathbf{c}_1$ \textbf{strictly dominates} $\mathbf{c}_2$, denoted by $\mathbf{c}_1\prec\mathbf{c}_2$, if $\mathbf{c}_1\preceq\mathbf{c}_2$ and the vectors are not equivalent ($\mathbf{c}_1\neq\mathbf{c}_2$). Vector $\mathbf{c}_1$ \textbf{approximately dominates} $\mathbf{c}_2$ with precision $\alpha$, denoted by $\mathbf{c}_1\preceq_{\alpha}\mathbf{c}_2$, if the cost of $\mathbf{c}_1$ is higher at most by factor $\alpha$ in every objective, i.e. $\forall o:\mathbf{c}_1^o\leq\mathbf{c}_2^o\cdot\alpha$. A plan $p$ and its cost vector are \textbf{Pareto-optimal} for query $Q$ (short: \textbf{Pareto plan} and \textbf{Pareto vector}) if no alternative plan for $Q$ strictly dominates $p$. A \textbf{Pareto set} for $Q$ contains at least one cost-equivalent plan for each Pareto plan. The \textbf{Pareto frontier} is the set of all Pareto vectors. 
Figure~\ref{paretoFig} shows the Pareto frontier of the running example and the area that each Pareto vector dominates. An $\alpha$\textbf{-approximate Pareto set} for $Q$ contains for every Pareto plan $p^*$ a plan $p$ such that $\mathbf{c}(p)\preceq_{\alpha}\mathbf{c}(p^*)$. An $\alpha$\textbf{-approximate Pareto frontier} contains the cost vectors of all plans in an $\alpha$-approximate Pareto set. 
During complexity analysis, $j=|\mathbb{J}|$ denotes the number of operators, $l=|\mathbb{O}|$ the number of objectives, $n=|Q|$ the number of tables to join, and $m$ the maximal cardinality over all base tables in the database. Users formulate queries and have direct influence on table cardinalities. Therefore, $n$ and $m$ (and also $j$) are treated as variables during asymptotic analysis. Introducing new objectives (that cannot be derived from existing ones) requires changes to the code base and detailed experimental analysis to provide realistic cost formulas. This is typically not done by users, therefore $l$ is treated as a constant (it is common to treat the number of objectives as a constant when analyzing multi-objective approximation schemes~\cite{Papadimitriou2001, Erlebach2014}).

\section{Prototypical Implementation}
\label{implementationSec}

\newcommand{\LineitemLabel}{L}
\newcommand{\OrdersLabel}{O}
\newcommand{\CustomerLabel}{C}

\newcommand{\HJLabel}{HashJ}
\newcommand{\SMJLabel}{SMJ}
\newcommand{\INLLabel}{IdxNL}


\label{costSub}

We extended the Postgres system (version~9.2.4) to obtain an experimental platform for comparing MOQO algorithms. We extended the cost model, the query optimizer, and the user interface.
The extended cost model supports nine objectives. The cost formulas used in the cost model are taken from prior work and are not part of our contribution. Evaluating their accuracy is beyond the scope of this paper. We quickly describe the nine implemented cost objectives. \textbf{Total execution time} (i.e., time until all result tuples have been produced) and \textbf{startup time} (i.e., time until first result tuple is produced) are estimated according to the cost formulas already included in Postgres. Minimizing \textbf{IO load}, \textbf{CPU load}, \textbf{number of used cores}, \textbf{hard disc footprint}, and \textbf{buffer footprint} is important since it allows to increase the number of concurrent users. The five aforementioned objectives often conflict with run time since using more system resources can often speed up query processing. \textbf{Energy consumption} is not always correlated with time~\cite{xu2012pet, Flach2010}. Dedicating more cores to a query plan can for instance decrease execution time by parallelization while it introduces coordination overhead that results in higher total energy consumption. Energy consumption is calculated according to the cost formulas by Flach~\cite{Flach2010}. Sampling allows to trade result completeness for efficiency~\cite{Haas2003}. The \textbf{tuple loss} ratio is the expected fraction of lost result tuples due to sampling and serves as ninth objective. Joining two operands with tuple loss $a,b\in[0,1]$, the tuple loss of the result is estimated by the formula $1-(1-a)(1-b)$. 

\label{planSpaceSub}

We extended the plan space of the Postgres optimizer by introducing new operators and parameterizing existing ones (we did not implement those operators in the execution engine). The extended plan space includes a parameterized sampling operator that scans between 1\% and 5\% of a base table. Join and sort operators are parameterized by the degree of parallelism (DOP). The DOP represents the number of cores that process the corresponding operation (up to 4 cores can be used per operation). 
The Postgres optimizer uses several heuristics to restrict the search space: in particular, \textit{i)}~it considers Cartesian products only in situations in which no other join is applicable, and \textit{ii)}~it optimizes different subqueries of the same query separately. We left both heuristics in place since removing them might  have significant impact on performance. Not using those heuristics would make it difficult to decide whether high computational costs observed during MOQO are due to the use of multiple objectives or to the removal of the heuristics. 

\begin{figure}[t]
\centering
L=Lineitem; O=Orders; C=Customers; \HJLabel=Hash Join; \SMJLabel=Sort-Merge Join; \INLLabel=Index-Nested-Loop Join

\subfigure[Time-Optimal Plan for Bounded Tuple Loss ($=0$)\label{planASub}]{\makebox[0.3\columnwidth][c]{\begin{tikzpicture}[>=latex,line join=bevel,]
  \pgfsetlinewidth{1bp}
\pgfsetcolor{black}
  \draw [<-] (15.888bp,49.891bp) .. controls (17.999bp,42.502bp) and (18.044bp,42.345bp)  .. (18.089bp,42.188bp);
  \draw [<-] (18.351bp,19.845bp) .. controls (16.396bp,12.35bp) and (16.358bp,12.205bp)  .. (16.32bp,12.062bp);
  \draw [<-] (10.112bp,49.891bp) .. controls (7.0431bp,39.151bp) and (6.7676bp,38.186bp)  .. (6.5141bp,37.3bp);
  \draw [<-] (23.649bp,19.845bp) .. controls (25.604bp,12.35bp) and (25.642bp,12.205bp)  .. (25.68bp,12.062bp);
\begin{scope}
  \definecolor{strokecol}{rgb}{0.0,0.0,0.0};
  \pgfsetstrokecolor{strokecol}
  \draw (15bp,6bp) node {\OrdersLabel};
\end{scope}
\begin{scope}
  \definecolor{strokecol}{rgb}{0.0,0.0,0.0};
  \pgfsetstrokecolor{strokecol}
  \draw (27bp,6bp) node {\CustomerLabel};
\end{scope}
\begin{scope}
  \definecolor{strokecol}{rgb}{0.0,0.0,0.0};
  \pgfsetstrokecolor{strokecol}
  \draw (5bp,31bp) node {\LineitemLabel};
\end{scope}
\begin{scope}
  \definecolor{strokecol}{rgb}{0.0,0.0,0.0};
  \pgfsetstrokecolor{strokecol}
  \draw (13bp,61bp) node {\parbox{0.5in}{\centering \HJLabel\\$\Join$}};
\end{scope}
\begin{scope}
  \definecolor{strokecol}{rgb}{0.0,0.0,0.0};
  \pgfsetstrokecolor{strokecol}
  \draw (21bp,31bp) node {\parbox{0.5in}{\centering \HJLabel\\$\Join$}};
\end{scope}
\end{tikzpicture}}}
\hfill
\subfigure[Additional Weight on Buffer Space Leads to Plan Without Hash Joins\label{planBSub}]{\makebox[0.3\columnwidth][c]{\begin{tikzpicture}[>=latex,line join=bevel,]
  \pgfsetlinewidth{1bp}
\pgfsetcolor{black}
  \draw [<-] (15.888bp,49.891bp) .. controls (17.999bp,42.502bp) and (18.044bp,42.345bp)  .. (18.089bp,42.188bp);
  \draw [<-] (18.351bp,19.845bp) .. controls (16.396bp,12.35bp) and (16.358bp,12.205bp)  .. (16.32bp,12.062bp);
  \draw [<-] (10.112bp,49.891bp) .. controls (7.0431bp,39.151bp) and (6.7676bp,38.186bp)  .. (6.5141bp,37.3bp);
  \draw [<-] (23.649bp,19.845bp) .. controls (25.604bp,12.35bp) and (25.642bp,12.205bp)  .. (25.68bp,12.062bp);
\begin{scope}
  \definecolor{strokecol}{rgb}{0.0,0.0,0.0};
  \pgfsetstrokecolor{strokecol}
  \draw (15bp,6bp) node {\OrdersLabel};
\end{scope}
\begin{scope}
  \definecolor{strokecol}{rgb}{0.0,0.0,0.0};
  \pgfsetstrokecolor{strokecol}
  \draw (27bp,6bp) node {\CustomerLabel};
\end{scope}
\begin{scope}
  \definecolor{strokecol}{rgb}{0.0,0.0,0.0};
  \pgfsetstrokecolor{strokecol}
  \draw (5bp,31bp) node {\LineitemLabel};
\end{scope}
\begin{scope}
  \definecolor{strokecol}{rgb}{0.0,0.0,0.0};
  \pgfsetstrokecolor{strokecol}
  \draw (13bp,61bp) node {\parbox{0.5in}{\centering \SMJLabel\\ $\Join$}};
\end{scope}
\begin{scope}
  \definecolor{strokecol}{rgb}{0.0,0.0,0.0};
  \pgfsetstrokecolor{strokecol}
  \draw (21bp,31bp) node {\parbox{0.5in}{\centering \INLLabel\\$\Join$}};
\end{scope}
\end{tikzpicture}}}
\hfill
\subfigure[Additional Bound on Startup Time Requires Using Nested-Loop Joins\label{planCSub}]{\makebox[0.3\columnwidth][c]{\begin{tikzpicture}[>=latex,line join=bevel,]
  \pgfsetlinewidth{1bp}
\pgfsetcolor{black}
  \draw [<-] (16.112bp,49.891bp) .. controls (14.001bp,42.502bp) and (13.956bp,42.345bp)  .. (13.911bp,42.188bp);
  \draw [<-] (8.3508bp,19.845bp) .. controls (6.3957bp,12.35bp) and (6.3578bp,12.205bp)  .. (6.3204bp,12.062bp);
  \draw [<-] (21.888bp,49.891bp) .. controls (24.957bp,39.151bp) and (25.232bp,38.186bp)  .. (25.486bp,37.3bp);
  \draw [<-] (13.649bp,19.845bp) .. controls (15.604bp,12.35bp) and (15.642bp,12.205bp)  .. (15.68bp,12.062bp);
\begin{scope}
  \definecolor{strokecol}{rgb}{0.0,0.0,0.0};
  \pgfsetstrokecolor{strokecol}
  \draw (5bp,6bp) node {\LineitemLabel};
\end{scope}
\begin{scope}
  \definecolor{strokecol}{rgb}{0.0,0.0,0.0};
  \pgfsetstrokecolor{strokecol}
  \draw (17bp,6bp) node {\OrdersLabel};
\end{scope}
\begin{scope}
  \definecolor{strokecol}{rgb}{0.0,0.0,0.0};
  \pgfsetstrokecolor{strokecol}
  \draw (27bp,31bp) node {\CustomerLabel};
\end{scope}
\begin{scope}
  \definecolor{strokecol}{rgb}{0.0,0.0,0.0};
  \pgfsetstrokecolor{strokecol}
  \draw (19bp,61bp) node {\parbox{0.5in}{\centering \INLLabel\\ $\Join$}};
\end{scope}
\begin{scope}
  \definecolor{strokecol}{rgb}{0.0,0.0,0.0};
  \pgfsetstrokecolor{strokecol}
  \draw (11bp,31bp) node {\parbox{0.5in}{\centering \INLLabel\\ $\Join$}};
\end{scope}
\end{tikzpicture}}}

\vspace{-3mm}

\caption{Evolution of optimal plan for TPC-H Query 3 when changing user preferences\label{queryPlansFig}}
\end{figure}
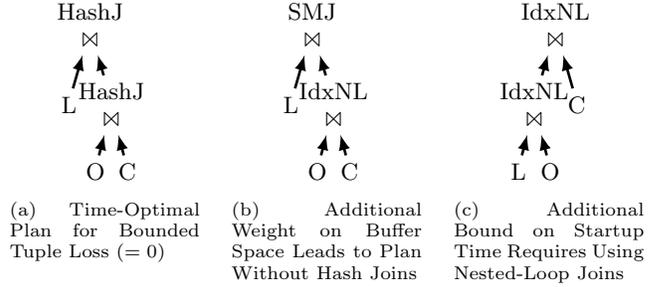

The original Postgres optimizer is single-objective and optimizes total execution time. We implemented all three MOQO algorithms that are discussed in this paper: the \AG{}, the \AWM{}, and the \ABM{}. The implementation uses the original Postgres data structures and routines wherever possible. Users can switch between the optimization algorithms and can choose the approximation precision $\alpha$ for the two approximation schemes. 
Users can specify weights and bounds on the different objectives. The higher the weight on some objective, the higher its relative importance. Bounds allow to specify cost limits for specific objectives (e.g., time limits or energy budgets). When optimizing a query, the optimizer tries to find a plan that minimizes the weighted cost among all plans that respect the bounds. Figure~\ref{queryPlansFig} shows how the optimal query plan for TPC-H query 3 changes when user preferences vary. Initially, the tuple loss is upper-bounded by zero (i.e., all result tuples must be retrieved) and all weights except the one for total execution time are set to zero. So the optimizer searches for the plan with minimal execution time among all plans that do not use sampling. Figure~\ref{planASub} shows the resulting plan. Increasing the weight on buffer footprint leads to a plan that replaces the memory-intensive Hash joins by Sort-Merge and Index-Nested-Loop (IdxNL) joins (see Figure~\ref{planBSub}). Setting an additional upper bound on startup time leads to a plan that only uses IdxNL joins (see Figure~\ref{planCSub}). 

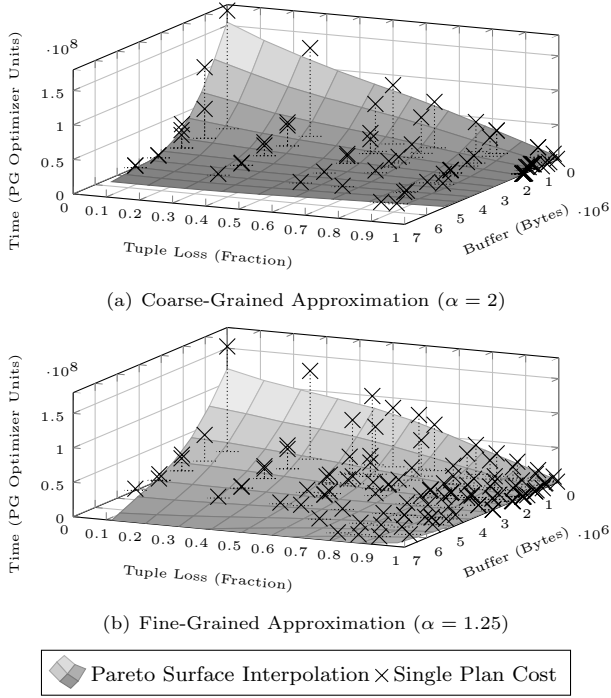
\begin{figure}[t]
\centering
\subfigure[Coarse-Grained Approximation ($\alpha=2$)\label{tpcParetoCoarse}]{%
\begin{tikzpicture}
    \begin{axis}[grid=major,y dir=reverse, xlabel=Tuple Loss (Fraction),ylabel=Buffer (Bytes),zlabel=Time (PG Optimizer Units),tiny,height=4.5cm,width=8cm,
xlabel style={sloped like x axis},ylabel style={sloped like y axis},zlabel style={sloped like z axis},
colormap={bw}{gray(0cm)=(0.5); gray(1cm)=(1)},
ymin=0,ymax=7e6,zmin=0,zmax=18e7]
        \addplot3[surf,shader=faceted,mesh/ordering=y varies,mesh/rows=10,mesh/cols=10] file {figures/diagrams/tpch5_c2_LBT_surface.dat};
	\addplot3[only marks,mark=x,color=black, 
	error bars/.cd,z dir=minus,z fixed relative=1, error bar style={densely dotted}] file {figures/diagrams/tpch5_c2_LBT_points.dat};
    \end{axis}
\end{tikzpicture}}
\subfigure[Fine-Grained Approximation ($\alpha=1.25$)\label{}\label{tpcParetoFine}]{%
\begin{tikzpicture}
    \begin{axis}[grid=major,y dir=reverse, xlabel=Tuple Loss (Fraction),ylabel=Buffer (Bytes),zlabel=Time (PG Optimizer Units),tiny,height=4.5cm,width=8cm,
xlabel style={sloped like x axis},ylabel style={sloped like y axis},zlabel style={sloped like z axis},
colormap={bw}{gray(0cm)=(0.5); gray(1cm)=(1)},
ymin=0,ymax=7e6,zmin=0,zmax=18e7,
legend to name=TPCH5_Pareto_Legend,legend style={legend columns=2,font=\small}]
        \addplot3[colormap={bw}{gray(0cm)=(0.5); gray(1cm)=(1)},surf,shader=faceted,mesh/ordering=y varies,mesh/rows=10,mesh/cols=10] file {figures/diagrams/tpch5_c125_LBT_surface.dat};
	\addplot3[only marks,mark=x,color=black, 
	error bars/.cd,z dir=minus,z fixed relative=1, error bar style={densely dotted}] file {figures/diagrams/tpch5_c125_LBT_points.dat};
	
	\legend{Pareto Surface Interpolation, Single Plan Cost}
	
    \end{axis}
\end{tikzpicture}}
\ref{TPCH5_Pareto_Legend}

\vspace{-3mm}

\caption{Three-dimensional Pareto frontier approximations for \mbox{TPC-H} Query 5\label{tpcParetoFigure}}
\end{figure}

Users cannot make optimal choices for bounds and weights if they are not aware of the possible tradeoffs between different objectives. A user might for instance want to relax the bound on one objective, knowing that this allows significant savings in another objective. All implemented MOQO algorithms produce an (approximate) Pareto frontier as byproduct of optimization. Our prototype allows to visualize two and three dimensional projections of the Pareto frontier. Figure~\ref{tpcParetoFigure} shows the cost vectors of the approximate Pareto frontier for TPC-H query~5 (and an interpolation of the surface defined by those vectors), considering objectives tuple loss, buffer footprint, and total execution time. Figure~\ref{tpcParetoCoarse} shows a coarse-grained approximation of the real Pareto frontier (with $\alpha=2$) and Figure~\ref{tpcParetoFine} a more fine-grained approximation for the same query ($\alpha=1.25$). 

\section{Analysis of Exact Algorithm}
\label{optimalEvalSec}

\begin{algorithm}[t!]
\renewcommand{\algorithmiccomment}[1]{// #1}
\begin{small}
\begin{algorithmic}[1]
\State \Comment{Find best plan for query $Q$, weights $\mathbf{W}$, bounds $\mathbf{B}$}
\Function{ExactMOQO}{$Q,\mathbf{W},\mathbf{B}$}
\State \Comment{Find Pareto plan set for $Q$}
\State $\mathcal{P}\leftarrow\mathrm{FindParetoPlans}(Q)$
\State \Comment{Return best plan out of Pareto plans}
\State \Return SelectBest$(\mathcal{P},\mathbf{W},\mathbf{B})$
\EndFunction
\vspace{0.25cm}
\State \Comment{Find Pareto plan set for query $Q$}
\Function{FindParetoPlans}{$Q$}
\State \Comment{Calculate plans for singleton sets}
\ForAll{$q\in Q$} 
\State $\mathcal{P}^q\leftarrow\emptyset$
\ForAll{$p_N$ access path for $q$}
\State $\mathrm{Prune}(\mathcal{P}^q,p_N)$
\EndFor
\EndFor
\State \Comment{Consider table sets of increasing cardinality}
\ForAll{$k\in2..|Q|$}
\ForAll{$q\subseteq Q:|q|=k$}
\State $\mathcal{P}^{q}\leftarrow\emptyset$
\State \Comment{For all possible splits of set $q$}
\ForAll{$q_{1},q_{2}\subset q:q_{1}\dot{\cup}q_{2}=q$}
\State \Comment{For all sub-plans and operators}
\ForAll{$p_{1}\in\mathcal{P}^{q_1},p_{2}\in\mathcal{P}^{q_2},j\in\mathbb{J}$}
\State \Comment{Construct new plan out of sub-plans}
\State $p_N\leftarrow\mathrm{Combine}(j,p_{1},p_{2})$
\State \Comment{Prune with new plan}
\State $\mathrm{Prune}(\mathcal{P}^q,p_N)$
\EndFor
\EndFor
\EndFor
\EndFor
\State \Return{$\mathcal{P}^Q$}
\EndFunction
\vspace{0.25cm}
\State \Comment{Prune plan set $\mathcal{P}$ with new plan $p_N$}
\Procedure{Prune}{$\mathcal{P},p_N$}
\State \Comment{Check whether new plan useful}
\If{$\neg\exists p\in\mathcal{P}:\mathbf{c}(p)\preceq\mathbf{c}(p_N)$}
\State \Comment{Delete dominated plans}
\State $\mathcal{P}\leftarrow\{p\in\mathcal{P}\mid\neg(\mathbf{c}(p_N)\preceq\mathbf{c}(p))\}$
\State \Comment{Insert new plan}
\State $\mathcal{P}\leftarrow\mathcal{P}\cup\{p_N\}$
\EndIf
\EndProcedure
\vspace{0.25cm}
\State \Comment{Select best plan in $\mathcal{P}$ for weights $\mathbf{W}$ and bounds $\mathbf{B}$}
\Function{SelectBest}{$\mathcal{P},\mathbf{W},\mathbf{B}$}
\State $P_{\mathbf{B}}\leftarrow\{p\in P\mid\mathbf{c}(p)\preceq\mathbf{B}\}$
\If{$P_{\mathbf{B}}\neq\emptyset$}
\State \Return $\arg\min[p\in P_{\mathbf{B}}]C_{\mathbf{W}}(\mathbf{c}(p))$
\Else
\State \Return $\arg\min[p\in P]C_{\mathbf{W}}(\mathbf{c}(p))$
\EndIf
\EndFunction
\end{algorithmic}
\end{small}
\caption{Exact algorithm for MOQO\label{exactAlg}}
\end{algorithm}

Ganguly et al.~\cite{ganguly1992query} proposed an exact algorithm (\AG{}) for MOQO. This algorithm is not part of our contribution but we provide a first experimental evaluation in a many-objective scenario and a formal analysis under less optimistic assumptions than in the original publication. Algorithm~\ref{exactAlg} shows the pseudo-code of the \AG{} (compared with the original publication, the code was slightly extended to generate bushy plans in addition to left-deep plans). The \AG{} first calculates a Pareto plan set for query $Q$ and finally selects the optimal plan out of that set (considering weights and bounds). The \AG{} uses dynamic programming and constructs Pareto plans for a table set out of the Pareto plans of its subsets. It is a generalization of the seminal algorithm by Selinger et al.~\cite{Selinger1979}, generalizing the pruning metric from one to multiple cost objectives. The \AG{} starts by calculating Pareto plans for single tables. Plans generating the same result are compared and \textit{pruned}, meaning that dominated plans are discarded. The \AG{} constructs Pareto plans for table sets of increasing cardinality. To generate plans for a specific table set, the \AG{} considers \textit{i)}~all possible splits of that set into two non-empty subsets (every split corresponds to one choice of operands for the last join), \textit{ii)}~all available join operators, and \textit{iii)}~all combinations of Pareto plans for generating the two inputs to the last join. 

\subsection{Experimental Analysis}
\label{sub:optimalExperiments}

We implemented the \AG{} within the system described in Section~\ref{implementationSec}. The implementation allows to specify timeouts (the corresponding code is not shown in Algorithm~\ref{exactAlg}). If the optimization time exceeds two hours, the modified \AG{} finishes quickly by only generating one plan for all table sets that have not been treated so far. We experimentally evaluated the \AG{} using the \mbox{TPC-H}~\cite{TPC2013} benchmark. We generated several test cases for each TPC-H query by randomly selecting subsets of objectives with a fixed cardinality out of the total set of nine objectives. All experiments were executed on a server equipped with two six core Intel Xeon processors with 2~GhZ and 128~GB of DDR3 RAM running Linux 2.6 (64 bit version). Five optimizer threads ran in parallel during the experiments. 

The goal of the evaluation was to answer three questions: \textit{i)}~Is the performance of the \AG{} good enough for use in practice? \textit{ii)}~If not, how can the performance be improved? \textit{iii)}~What assumptions are realistic for the formal complexity analysis of MOQO algorithms? Figure~\ref{optimalEvalFIg} shows experimental results for the three metrics optimization time, allocated memory during optimization, and number of Pareto plans for the last table set that was treated completely (before a timeout occurred or before the optimization was completed). Every marker represents the arithmetic average value over 20 test cases for one specific \mbox{TPC-H} query and a specific number of objectives. The \mbox{TPC-H} queries are ordered according to the maximal number of tables that appears in any of their from-clauses. This number correlates (with several caveats\footnote{The Postgres optimizer may for instance convert EXISTS predicates into joins which leads to many alternative plans even for queries with only one table in the from-clause. }) with the search space size. Gray markers indicate that some test cases incurred a timeout. If a timeout occurred, then the reported values are lower bounds on the values of a completed computation. 

\begin{figure}
\begin{tikzpicture}

\begin{groupplot}[tiny,
group style={ %
  group size=2 by 2, vertical sep=0cm},
  scatter/classes={1={mark=x,black,mark size=2},2={mark=x,lightgray,mark size=2},
  31={mark=*,black,mark size=2},32={mark=*,lightgray,draw=black,mark size=2},
  61={mark=triangle*,black,mark size=2},62={mark=triangle*,lightgray,draw=black,mark size=2},
  91={mark=square*,black,mark size=2},92={mark=square*,lightgray,draw=black,mark size=2}},
ymajorgrids,xmajorgrids]

\nextgroupplot[ylabel=Time (ms),ymode=log,only marks,
width=\columnwidth*0.54, extra y ticks={7200000}, extra y tick label={Out}, ytick={10, 100, 1000, 10000, 100000, 1000000}, ymax=20000000,xticklabels={}]
\addplot[scatter,scatter src=explicit symbolic] table[header=false, x index=0, y index=1, meta index=2] {benchmark/results/\oneObjWeighted/timeOpt.txt};
\addplot[scatter,scatter src=explicit symbolic] table[header=false, x index=0, y index=1, meta index=2] {benchmark/results/\threeObjWeighted/timeOpt.txt};
\addplot[scatter,scatter src=explicit symbolic] table[header=false, x index=0, y index=1, meta index=2] {benchmark/results/\sixObjWeighted/timeOpt.txt};
\addplot[scatter,scatter src=explicit symbolic] table[header=false, x index=0, y index=1, meta index=2] {benchmark/results/\nineObjWeighted/timeOpt.txt};

\nextgroupplot[xlabel=\# Join Tables,ylabel=Memory (KB),ymode=log,only marks,
width=\columnwidth*0.54, ytick={1000, 10000, 100000, 1000000, 10000000},ylabel shift=-0.15cm]
\addplot[scatter,scatter src=explicit symbolic] table[header=false, x index=0, y index=1, meta index=2] {benchmark/results/\oneObjWeighted/\memoryOpt.txt};
\addplot[scatter,scatter src=explicit symbolic] table[header=false, x index=0, y index=1, meta index=2] {benchmark/results/\threeObjWeighted/\memoryOpt.txt};
\addplot[scatter,scatter src=explicit symbolic] table[header=false, x index=0, y index=1, meta index=2] {benchmark/results/\sixObjWeighted/\memoryOpt.txt};
\addplot[scatter,scatter src=explicit symbolic] table[header=false, x index=0, y index=1, meta index=2] {benchmark/results/\nineObjWeighted/\memoryOpt.txt};

\nextgroupplot[xlabel=\# Join Tables,ylabel=\# Pareto Plans,ymode=log,only marks,
width=\columnwidth*0.54,ytick={1, 8, 64, 512, 5000},yticklabels={1,8,64,512, 5000, $10^4$},
legend to name=optimalEvalLegend]
\addplot[scatter,scatter src=explicit symbolic] table[header=false, x index=0, y index=1, meta index=2] {benchmark/results/\oneObjWeighted/pathsOpt.txt};
\addplot[scatter,scatter src=explicit symbolic] table[header=false, x index=0, y index=1, meta index=2] {benchmark/results/\threeObjWeighted/pathsOpt.txt};
\addplot[scatter,scatter src=explicit symbolic] table[header=false, x index=0, y index=1, meta index=2] {benchmark/results/\sixObjWeighted/pathsOpt.txt};
\addplot[scatter,scatter src=explicit symbolic] table[header=false, x index=0, y index=1, meta index=2] {benchmark/results/\nineObjWeighted/pathsOpt.txt};


\legend{{1 Objective, No Timeouts}, {1 Objective, Timeouts}, {3 Objectives, No Timeouts}, {3 Objectives, Timeouts}, {6 Objectives, No Timeouts}, {6 Objectives, Timeouts}, {9 Objectives, No Timeouts}, {9 Objectives, Timeouts}}

\nextgroupplot[group/empty plot]
\end{groupplot}

\node at (group c2r1.south) [below, yshift=-0.5cm] {\ref{optimalEvalLegend}};

\end{tikzpicture}

\vspace{-5mm}

\caption{Performance of exact algorithm on \mbox{TPC-H}: Prohibitive computational cost due to high number of Pareto plans (timeout at 2 hours)}
\label{optimalEvalFIg}
\end{figure}
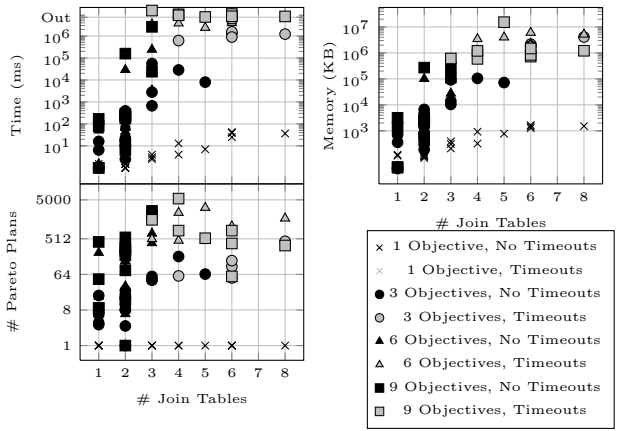

Optimizing for one objective never requires more than 100 milliseconds per query and never consumes more than 1.7~MB of main memory. For multiple objectives, the computational cost of the \AG{} becomes however quickly prohibitive with growing number of tables (referring to Question \textit{i)}). The \AG{} often reaches the timeout of two hours and allocates gigabytes of main memory during optimization. This happens already for queries joining only three tables; while the number of possible join orders is small in this case, the total search space size is already significant as over 10 different configurations are considered for the scan and for the join operator respectively (considering for instance different sample densities and different degrees of parallelism). 

Figure~\ref{optimalEvalFIg} explains the significant difference in time and space requirements between SOQO and MOQO: The number of Pareto plans per table set is always one for SOQO but grows quickly in the number of tables (and objectives) for MOQO. The space consumption of the \AG{} directly relates to the number of Pareto plans. The run time relates to the total number of considered plans which is much higher than the number of Pareto plans but directly correlated with it\footnote{All plans considered for joining a set of tables are combinations of two Pareto plans; the number of considered plans therefore grows quadratically in the number of Pareto plans.}. 
Discarding Pareto plans seems therefore the most natural way to increase efficiency (referring to Question~\textit{ii)}). 

Ganguly et al.~\cite{ganguly1992query} used an upper bound of $2^l$ ($l$ designates the number of objectives) on the number of Pareto plans per table set for their complexity analysis of the \AG{}. This bound derives from the optimistic assumption that different objectives are not correlated. Figure~\ref{optimalEvalFIg} shows that this bound is unrealistic (8, 64, and 512 are the theoretical bounds for 3, 6, and 9 objectives). The bound is a mismatch from the \textit{quantitative} perspective (as the bound is exceeded by orders of magnitude\footnote{We generate up to 443 Pareto plans on average when considering three objectives and up to 3157 plans when considering six objectives. }) and from the \textit{qualitative} perspective (as the number of Pareto plans seems to correlate with the search space size while the postulated bound only depends on the number of objectives). Therefore, this bound is not used in the following complexity analysis (referring to Question~\textit{iii)}). 

\subsection{Formal Complexity Analysis}

All query plans can be Pareto-optimal in the worst case (when considering at least two objectives). The following analysis remains unchanged under the assumption that a constant fraction of all possible plans is Pareto-optimal. If only one join operator is available, then the number of bushy plans for joining $n$ tables is given by $(2(n-1))!/(n-1)!$~\cite{ganguly1992query}. If $j$ scan and join operators are available, then the number of possible plans is given by 
\begin{equation*}
\mathcal{N}_{bushy}(j,n)=j^{2n-1}(2(n-1))!/(n-1)!.
\end{equation*}

\begin{theorem}
\label{optSpaceTheorem}
The \AG{} has space complexity
\begin{equation*}
O(\mathcal{N}_{bushy}(j,n)). 
\end{equation*}
\end{theorem}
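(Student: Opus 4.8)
The plan is to bound the memory footprint of the \AG{} by the size of the dynamic-programming table it maintains --- one Pareto plan set $\mathcal{P}^q$ per subset $q\subseteq Q$ --- and then to show that the single term for $q=Q$ already dominates the sum over all $2^n$ subsets, so that no exponential blow-up in the number of subsets occurs.

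First I would reduce the space used by Algorithm~\ref{exactAlg} to $O\big(\sum_{q\subseteq Q}|\mathcal{P}^q|\big)$ plus lower-order terms. Apart from the plan sets, the algorithm only holds loop indices, the single scratch plan $p_N$ passed to \textsc{Prune}, and the working variables of \textsc{SelectBest}; all of these are polynomial in $n$, $j$, and $l$ and are dominated by $\mathcal{N}_{bushy}(j,n)$. Crucially, I would charge only $O(1)$ space to each stored plan: in a dynamic-programming implementation a plan is an operator plus two references to its (already stored) sub-plans plus an $l$-dimensional cost vector, and since $l$ is a constant this record has constant size. This structure sharing is essential --- without it one would materialise the $\Theta(k)$-node tree of every plan over a $k$-table set and pick up a spurious extra factor of $n$.

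Next I would bound $|\mathcal{P}^q|$ by $\mathcal{N}_{bushy}(j,|q|)$. This is immediate: $\mathcal{P}^q$ contains only genuinely distinct plans over the $|q|$ tables of $q$, and $\mathcal{N}_{bushy}(j,|q|)$ is by definition the total number of plans over $|q|$ tables built from $j$ operators, so one cannot store more than that many (in the worst case --- all plans Pareto-optimal, as noted just before the theorem --- this is tight, which is also why the analysis cannot be improved). Grouping the subsets by cardinality gives the overall bound $\sum_{q\subseteq Q}|\mathcal{P}^q|\le\sum_{k=1}^{n}\binom{n}{k}\,\mathcal{N}_{bushy}(j,k)$.

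The crux --- and the step I expect to be the main obstacle --- is the purely combinatorial claim that $\sum_{k=1}^{n}\binom{n}{k}\mathcal{N}_{bushy}(j,k)=O(\mathcal{N}_{bushy}(j,n))$, i.e.\ that the $k=n$ summand swamps everything else even though there are $\binom{n}{k}$ subsets of size $k$. I would prove this by showing the summands decay geometrically as $k$ decreases from $n$. Writing $T_k=\binom{n}{k}\mathcal{N}_{bushy}(j,k)$, and using $\mathcal{N}_{bushy}(j,k{+}1)/\mathcal{N}_{bushy}(j,k)=2j^{2}(2k-1)$ (read off the closed form) together with $\binom{n}{k{+}1}/\binom{n}{k}=(n-k)/(k{+}1)$, one gets $T_{k+1}/T_k=\tfrac{n-k}{k+1}\cdot 2j^{2}(2k-1)$, which is $\ge 2$ for every $1\le k\le n-1$ once $n\ge 3$: the quantity $(n-k)(2k-1)j^{2}-(k+1)$ is concave in $k$ and non-negative at the endpoints $k=1$ (value $(n-1)j^{2}-2$) and $k=n-1$ (value $(2n-3)j^{2}-n$), hence non-negative throughout. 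Therefore $\sum_{k=1}^{n}T_k\le 2T_n=2\,\mathcal{N}_{bushy}(j,n)$; the finitely many cases $n\le 2$ are absorbed into the constant. Combining the three steps yields the claimed space complexity $O(\mathcal{N}_{bushy}(j,n))$. The heart of the matter is that $\mathcal{N}_{bushy}$ grows by a factor $\Theta(j^{2}k)$ per additional table --- fast enough to dominate the $2^{n}$ subsets over which the table is indexed.
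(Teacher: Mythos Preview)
Your proof is correct and follows essentially the same approach as the paper: bound each stored plan by $O(1)$ via pointer sharing, sum $\binom{n}{k}\mathcal{N}_{bushy}(j,k)$ over all subset sizes, and argue that consecutive summands grow by at least a factor~$2$ so the $k=n$ term dominates. The only difference is that you spell out the ratio inequality explicitly (via the concavity-plus-endpoints check), whereas the paper simply asserts $x_k\le 2x_{k+1}$ for $k>1$ without proof.
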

\begin{proof}
Plan sets are the variables with dominant space requirements. A scan plan is represented by an operator ID and a table ID. All other plans are represented by the operator ID of the last join and pointers to the two sub-plans generating its operands. Therefore, each stored plan needs only $O(1)$ space. Each stored cost vector needs $O(1)$ space as well, since $l$ is a constant (see Section~\ref{modelSec}). 

Let $Q$ the set of tables to join. The \AG{} stores a set of Pareto plans for each non-empty subset of $Q$. The total number of stored plans is the sum of Pareto plans over all subsets. Let $k\in\{1,\ldots,|Q|\}$ and denote by $x_k$ the total number of Pareto plans, summing over all subsets of $Q$ with cardinality $k$. Each plan is Pareto-optimal in the worst case, therefore $x_k=\binom{n}{k}\mathcal{N}_{bushy}(j,k)$. It is $x_k\leq 2x_{k+1}$ for $k>1$. Therefore, the term $x_n=\mathcal{N}_{bushy}(j,n)$ dominates. The analysis is tight since the \AG{} has to store this number of plans in the worst case. 
\end{proof}

\begin{theorem}
The \AG{} has time complexity 
\begin{equation*}
O(\mathcal{N}^2_{bushy}(j,n)).
\end{equation*}
\end{theorem}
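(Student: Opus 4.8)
The plan is to charge the total running time to the individual table sets processed inside \textsc{FindParetoPlans} and to show that the work spent on any $k$-element table set is $O(\mathcal{N}^2_{bushy}(j,k))$; summing over all subsets of $Q$ then gives the claim, since the $k=n$ term dominates exactly as in the proof of Theorem~\ref{optSpaceTheorem}. Up to a constant factor, the running time of Algorithm~\ref{exactAlg} equals the sum, over all non-empty $q\subseteq Q$, of the cost of constructing every candidate plan for $q$ with \textsc{Combine} and running the matching \textsc{Prune} call, plus the cost of the singleton-initialization loop and of the final \textsc{SelectBest} call.

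Fix a table set $q$ with $|q|=k\ge 2$. The number of \textsc{Combine} calls triggered while processing $q$ is $j\cdot\sum_{q_1\dot\cup q_2=q}|\mathcal{P}^{q_1}|\,|\mathcal{P}^{q_2}|$; since every plan can be Pareto-optimal in the worst case, $|\mathcal{P}^{q'}|\le\mathcal{N}_{bushy}(j,|q'|)$ for every subset $q'$, and substituting this and unrolling the recurrence that counts bushy plans (pick the top operator, split the leaf set, recurse on both inputs) shows the number of calls is at most $\mathcal{N}_{bushy}(j,k)$. (For singletons the loop performs exactly $j=\mathcal{N}_{bushy}(j,1)$ calls each.) Each \textsc{Combine} call costs $O(1)$, because --- as argued for Theorem~\ref{optSpaceTheorem}, with $l$ constant --- a plan and its cost vector occupy $O(1)$ space; each ensuing \textsc{Prune} call scans $\mathcal{P}^q$ a constant number of times performing one $O(l)=O(1)$ dominance test per stored plan, hence costs $O(|\mathcal{P}^q|)=O(\mathcal{N}_{bushy}(j,k))$. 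Multiplying these two factors bounds the work on $q$ by $O(\mathcal{N}^2_{bushy}(j,k))$; the overhead of iterating over the $O(2^k)$ splits is subsumed because each split produces at least one \textsc{Combine} call. Finally, \textsc{SelectBest} examines each of the $\le\mathcal{N}_{bushy}(j,n)$ plans in $\mathcal{P}^Q$ once and is therefore dominated.

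Summing over all subsets, the running time is $O\!\left(\sum_{k=1}^{n}\binom{n}{k}\mathcal{N}^2_{bushy}(j,k)\right)$. Put $y_k=\binom{n}{k}\mathcal{N}^2_{bushy}(j,k)$. The ratios $\binom{n}{k}/\binom{n}{k+1}=(k+1)/(n-k)$ and $\mathcal{N}_{bushy}(j,k+1)/\mathcal{N}_{bushy}(j,k)=2j^2(2k-1)$ are exact, so a direct computation gives $y_k\le\tfrac12 y_{k+1}$ for every $1\le k<n$ (using $n\ge 2$ and $j\ge 1$, which always hold). The geometric series is thus dominated by its last term: $\sum_{k=1}^n y_k\le 2y_n=2\mathcal{N}^2_{bushy}(j,n)=O(\mathcal{N}^2_{bushy}(j,n))$. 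This is the same domination argument used for Theorem~\ref{optSpaceTheorem}, applied here to the squared quantity.

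The main obstacle is the middle step: one must line up the nested loops of Algorithm~\ref{exactAlg} with the recurrence for the number of bushy plans in order to see that, in the worst case, only $O(\mathcal{N}_{bushy}(j,k))$ candidate plans are generated per $k$-table set, and one must verify that \textsc{Prune} is linear --- not quadratic --- in the current plan-set size; the rest is bookkeeping and the geometric sum. The bound is tight: with at least two objectives all $\mathcal{N}_{bushy}(j,n)$ plans for $Q$ may be Pareto-optimal, so the \AG{} makes $\Theta(\mathcal{N}_{bushy}(j,n))$ \textsc{Prune} calls against a plan set of size $\Theta(\mathcal{N}_{bushy}(j,n))$, matching the $O(\mathcal{N}^2_{bushy}(j,n))$ upper bound.
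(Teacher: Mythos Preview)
Your proof is correct and follows essentially the same approach as the paper's own proof: bound the work on each $k$-element table set by $O(\mathcal{N}^2_{bushy}(j,k))$ (number of candidate plans times the size of the plan set each is compared against) and then invoke the geometric-domination summation from Theorem~\ref{optSpaceTheorem}. The paper's proof is a two-sentence sketch (``every plan is compared with all other plans that generate the same result, so time grows quadratically in the number of Pareto plans; apply Theorem~\ref{optSpaceTheorem}''); you have supplied the details, in particular the clean observation that the number of \textsc{Combine} calls for a $k$-set matches the bushy-plan recurrence exactly, and the explicit verification of the ratio $y_{k+1}/y_k\ge 2$.
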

\begin{proof}
Every plan is compared with all other plans that generate the same result. So the time complexity grows quadratically in the number of Pareto plans and a similar reasoning as in the proof of Theorem~\ref{optSpaceTheorem} can be applied. 
\end{proof}

The main advantage of the Selinger algorithm for SOQO~\cite{Selinger1979} over a naive plan enumeration approach is that its complexity only depends on the number of table sets but not on the number of possible query plans. The preceding analysis shows that this advantage vanishes when generalizing the Selinger algorithm to multiple cost objectives (leading to the \AG{}). The complexity of the \AG{} is even worse than that of an approach that successively generates all possible plans while keeping only the best plan generated so far. 



\section{Approximating Weighted MOQO}
\label{wMOQOsec}

The \AG{} is computationally expensive since it generates all Pareto plans for each table set. We present a more efficient algorithm: the representative-tradeoffs algorithm (\AWM{}). The new algorithm generates an approximate Pareto plan set for each table set. The cardinality of the approximate Pareto set is much smaller than the cardinality of the Pareto set. Therefore, the \AWM{} has lower computational cost than the \AG{} while it formally guarantees to return a near-optimal plan. The \AWM{} exploits a property of the cost objectives that we call the \textit{principle of near-optimality}. We provide a formal definition in Section~\ref{ponoSub} and show that most relevant objectives in query optimization possess that property. We describe the \AWM{} in Section~\ref{wMOQOcode} and prove that it produces near-optimal plans. In Section~\ref{wMOQOanalysis}, we analyze its time and space complexity. We prove that its complexity is more similar to the complexity of SOQO algorithms than to the one of the \AG{}. 



\subsection{Principle of Near-Optimality}
\label{ponoSub}

The \textit{principle of optimality} states the following in the context of MOQO~\cite{ganguly1992query}: If the cost of the sub-plans within a query plan decreases, then the cost of the query plan cannot increase. A formal definition follows. 
\begin{definition}
\textbf{Principle of Optimality (POO). } Let $P$ a query plan with sub-plans $p_L$ and $p_R$. Derive $P^*$ from $P$ by replacing $p_L$ by $p_L^*$ and $p_R$ by $p_R^*$. Then $\mathbf{c}(p_L^*)\preceq\mathbf{c}(p_L)$ and $\mathbf{c}(p_R^*)\preceq\mathbf{c}(p_R)$ together imply $\mathbf{c}(P^*)\preceq\mathbf{c}(P)$. 
\end{definition}

The POO holds for all common cost objectives. The \AG{} generates optimal plans as long as the POO holds. We introduce a new property in analogy to the POO. The \textit{principle of near-optimality} intuitively states the following: If the cost of the sub-plans within a query plan increases by a certain percentage, then the cost of the query plan cannot increase by more than that percentage. 

\begin{definition}
\textbf{Principle of Near-Optimality (PONO). } Let $P$ a query plan with sub-plans $p_L$ and $p_R$ and pick an arbitrary $\alpha\geq1$. Derive $P^*$ from $P$ by replacing $p_L$ by $p_L^*$ and $p_R$ by $p_R^*$. Then $\mathbf{c}(p_L^*)\preceq_{\alpha}\mathbf{c}(p_L)$ and $\mathbf{c}(p_R^*)\preceq_{\alpha}\mathbf{c}(p_R)$ together imply $\mathbf{c}(P^*)\preceq_{\alpha}\mathbf{c}(P)$. 
\end{definition}

We will see that the PONO holds for the nine objectives described in Section~\ref{costSub} as well as for other common objectives. Cost formulas in QO are usually recursive and calculate the (estimated) cost of a plan out of the cost of its sub-plans. Different formulas apply for different objectives and for different operators. Most cost formulas only use the functions sum, maximum, minimum, and multiplication by a constant. The formula $\max(t_L,t_R)+t_M$ estimates for instance execution time of a plan whose final operation is a Sort-Merge join whose inputs are generated in parallel; the terms $t_L$ and $t_R$ represent the time for generating and sorting the left and right input operand and $t_M$ is the time for the final merge. Let $F$ any of the three binary functions sum, maximum, and minimum. Then $F(\alpha a,\alpha b)\leq \alpha F(a,b)$ for arbitrary positive operands $a,b$ and $\alpha\geq1$. Let $F(a)$ the function that multiplies its input by a constant. Then trivially $F(\alpha a)\leq\alpha F(a)$. Therefore, the PONO holds as long as cost formulas are combined out of the four aforementioned functions (this can be proven via structural induction). The formula for tuple loss is an exception since it multiplies two factors that depend on the tuple loss in the sub-plans: The tuple loss of a plan is estimated out of the tuple loss values $a$ and $b$ of its sub plans according to the formula $F(a,b)=1-(1-a)(1-b)$. It is $F(\alpha a,\alpha b)=\alpha(a+b)-\alpha^2ab$. This term is upper-bounded by $\alpha(a+b-ab)=\alpha F(a,b)$ since $0\leq a,b\leq 1$ and $\alpha\geq1$. Note that \textbf{failure probability} is calculated according to the same formula as tuple loss (if the probabilities that single operations fail are modeled as independent Bernoulli variables). Accumulative cost objectives such as \textbf{monetary cost} are calculated according to similar formulas as energy consumption. 

\subsection{Pseudo-Code and Near-Optimality Proof}
\label{wMOQOcode}

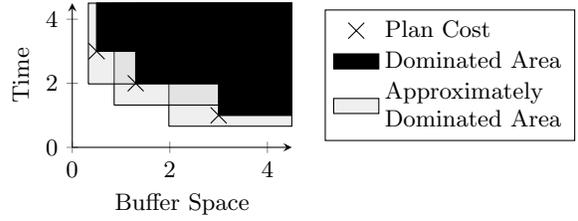
\begin{figure}
\pgfmathparse{0.66}
\let\alphaC=\pgfmathresult

\pgfplotstableread{figures/diagrams/pareto.txt}\pareto

\begin{tikzpicture}

\begin{axis}[
xlabel=\reMetricOne{},
ylabel=\reMetricTwo{},
ylabel near ticks,
xlabel near ticks,
axis x line=bottom,
axis y line=left,
xmin=0,
ymin=0,
mark=x,
legend columns=1,
legend style={
at={(1.15,0.5)},
anchor=west},
legend cell align=left]

\addplot[only marks,mark=x,fill=black] table[x={x},y={y}]{\pareto};

\foreach \i in {2,1,0} {

\pgfplotstablegetelem{\i}{x}\of{\pareto}
\let\xCoord=\pgfplotsretval

\pgfplotstablegetelem{\i}{y}\of{\pareto}
\let\yCoord=\pgfplotsretval

\pgfplotstablegetelem{\i}{x}\of{\pareto}
\pgfmathparse{\alphaC * \pgfplotsretval}
\let\xCoordCoars=\pgfmathresult

\pgfplotstablegetelem{\i}{y}\of{\pareto}
\pgfmathparse{\alphaC * \pgfplotsretval}
\let\yCoordCoars=\pgfmathresult

\addplot[const plot mark right,fill=black,no markers, area legend, fill opacity=1.0] coordinates {
(\xCoord,\yCoord)
(4.5,\yCoord)
(4.5,4.5)
(\xCoord,4.5)
(\xCoord,\yCoord)
};

\addplot[const plot mark right,fill=lightgray,no markers, area legend, fill opacity=0.25] coordinates {
(\xCoordCoars,\yCoordCoars)
(4.5,\yCoordCoars)
(4.5,4.5)
(\xCoordCoars,4.5)
(\xCoordCoars,\yCoordCoars)
};
} 

\foreach \i in {2,1,0} {

\pgfplotstablegetelem{\i}{x}\of{\pareto}
\let\xCoord=\pgfplotsretval

\pgfplotstablegetelem{\i}{y}\of{\pareto}
\let\yCoord=\pgfplotsretval

\pgfplotstablegetelem{\i}{x}\of{\pareto}
\pgfmathparse{\alphaC * \pgfplotsretval}
\let\xCoordCoars=\pgfmathresult

\pgfplotstablegetelem{\i}{y}\of{\pareto}
\pgfmathparse{\alphaC * \pgfplotsretval}
\let\yCoordCoars=\pgfmathresult

\addplot[const plot mark right,fill=black,no markers, area legend, fill opacity=1.0] coordinates {
(\xCoord,\yCoord)
(4.5,\yCoord)
(4.5,4.5)
(\xCoord,4.5)
(\xCoord,\yCoord)
};
} 

\addlegendentry[align=left]{\planCost{}}
\addlegendentry[align=left]{\dominated{}}
\addlegendentry[align=left]{\approxDominated{}}

\end{axis}

\end{tikzpicture}

\vspace{-5mm}

\caption{Dominated versus approximately dominated area (with $\alpha=1.5$) in cost space\label{approximateDominanceFig}}
\centering
\end{figure}

We exploit the PONO to transform the \AG{} into an approximation scheme for \PW{}. Algorithm~\ref{weightedMOQOAlg} shows the parts of Algorithm~\ref{exactAlg} that need to be changed. The \AWM{} is the resulting approximation scheme. The \AWM{} takes a user-defined precision parameter $\alpha_U$ as input. It generates a plan whose weighted cost is not higher than the optimum by more than factor $\alpha_U$. We formally prove this statement later. The \AWM{} uses a different pruning function than the \AG{}: New plans are still compared with all plans that generate the same result. But new plans are only inserted if no other plan \textit{approximately} dominates the new one. This means that the \AWM{} tends to insert less plans than the \AG{}. Figure~\ref{approximateDominanceFig} helps to illustrate this statement: The \AG{} inserts new plans if their cost vector does not fall within the dominated area, the \AWM{} inserts new plans if their cost vector does neither fall into the dominated nor into the approximately dominated area. The following theorems exploit the PONO to show that the \AWM{} guarantees to generate near-optimal plans. They will implicitly justify the choice of the internal precision that is used during pruning. 

\begin{algorithm}[t!]
\renewcommand{\algorithmiccomment}[1]{// #1}
\begin{algorithmic}[1]
\State \Comment{Find $\alpha_U$-approximate plan for query $Q$, weights $\mathbf{W}$}
\Function{RTA}{$Q,\mathbf{W},\alpha_U$}
\State \Comment{Find $\alpha_U$-approximate Pareto plan set}
\State $\mathcal{P}\leftarrow\mathrm{FindParetoPlans}(Q,\alpha_U)$
\State \Comment{Return best plan in $\mathcal{P}$ for infinite bounds}
\State \Return SelectBest$(\mathcal{P},\mathbf{W},\mathbf{\infty})$
\EndFunction
\vspace{0.25cm}
\State \Comment{Find $\alpha_U$-approximate Pareto plan set}
\Function{FindParetoPlans}{$Q,\alpha_U$}
\Statex $\quad$ \Comment{Derive internal precision from $\alpha_U$}
\Statex $\quad$ $\alpha_i\leftarrow\sqrt[|Q|]{\alpha_U}$
\Statex $\quad$  ...
\State[13] \Comment{Prune access paths for single tables}
\Statex $\quad$ $\mathrm{Prune}(\mathcal{P}^q,p_N,\alpha_i)$
\Statex $\quad$  ...
\State[25] \Comment{Prune plans for non-singleton table sets}
\Statex $\quad$ $\mathrm{Prune}(\mathcal{P}^q,p_N,\alpha_i)$
\Statex $\quad$  ...
\EndFunction
\State \Comment{Prune set $\mathcal{P}$ with plan $p_N$ using precision $\alpha_i$}
\Procedure{Prune}{$\mathcal{P},p_N,\alpha_i$}
\State \Comment{Check whether new plan useful}
\If{$\neg\exists p\in\mathcal{P}:\mathbf{c}(p)\preceq_{\alpha_i}\mathbf{c}(p_N)$}
\Statex $\quad$ ...
\EndIf
\EndProcedure
\end{algorithmic}
\caption{The Representative-Tradeoffs Algorithm: An approximation scheme for Weighted MOQO. The code shows only the differences to Algorithm~\ref{exactAlg}.\label{weightedMOQOAlg}}
\end{algorithm}

\begin{theorem}
\label{approximateParetoSetTheorem}
The \AWM{} generates an $\alpha_i^{|Q|}$-approximate Pareto set. 
\end{theorem}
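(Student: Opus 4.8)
The plan is to prove, by induction on the table-set cardinality $k$, the statement that for every $q\subseteq Q$ with $|q|=k$ and every plan $p'$ for $q$ there is a plan $p\in\mathcal{P}^q$ (the set maintained by the \AWM{}) with $\mathbf{c}(p)\preceq_{\alpha_i^k}\mathbf{c}(p')$; this is equivalent to the claimed ``$\alpha_i^k$-approximate Pareto set'' property (every plan is exactly dominated by some Pareto plan, and exact dominance is absorbed into approximate dominance), and the theorem is the case $k=|Q|$. I will repeatedly use two elementary facts, both immediate from the definition of $\preceq_\alpha$: \emph{(i)} approximate dominance composes multiplicatively, i.e.\ $\mathbf{c}_1\preceq_\alpha\mathbf{c}_2$ and $\mathbf{c}_2\preceq_\beta\mathbf{c}_3$ imply $\mathbf{c}_1\preceq_{\alpha\beta}\mathbf{c}_3$; and \emph{(ii)} exact dominance composed on either side of approximate dominance does not increase the factor, i.e.\ $\mathbf{c}_1\preceq\mathbf{c}_2\preceq_\alpha\mathbf{c}_3$ and $\mathbf{c}_1\preceq_\alpha\mathbf{c}_2\preceq\mathbf{c}_3$ both yield $\mathbf{c}_1\preceq_\alpha\mathbf{c}_3$.

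First I would isolate a \emph{pruning invariant}: after $\mathrm{Prune}$ has been invoked with precision $\alpha_i$ on a sequence of candidate plans for $\mathcal{P}^q$, every candidate ever passed to $\mathrm{Prune}$ is $\alpha_i$-approximately dominated by some plan currently in $\mathcal{P}^q$. This is maintained across one $\mathrm{Prune}$ call: if the new plan $p_N$ is rejected, $\mathcal{P}^q$ is unchanged and by the rejection test some surviving plan already $\alpha_i$-dominates $p_N$; if $p_N$ is accepted, any plan deleted in the process is \emph{exactly} dominated by $p_N$ (the deletion step is the one inherited unchanged from Algorithm~\ref{exactAlg}), so by fact~\emph{(ii)} whatever a deleted plan $\alpha_i$-dominated is now $\alpha_i$-dominated by $p_N\in\mathcal{P}^q$; and if $p_N$ is itself deleted later, its evictor exactly dominates it, so transitivity of $\preceq$ (combined again with fact~\emph{(ii)}) carries the property forward. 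Hence the invariant survives the whole loop over candidate plans for $q$.

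Then comes the induction. Base case $k=1$: $\mathcal{P}^q$ is produced by pruning all access paths of the single table $q$ with precision $\alpha_i$, and every plan for a singleton is an access path, so the pruning invariant \emph{is} the claim for $k=1$. Inductive step: fix $q$ with $|q|=k$ and a plan $p'$ for $q$; pick a Pareto plan $p^*$ for $q$ with $\mathbf{c}(p^*)\preceq\mathbf{c}(p')$, with sub-plans $p^*_L,p^*_R$ over table sets $q_L\dot{\cup}q_R=q$ joined by some $j^*\in\mathbb{J}$. The induction hypothesis applied to $q_L$ and $q_R$ (and fact~\emph{(ii)} to pass from $p^*_L,p^*_R$ to Pareto sub-plans if needed) yields $p_L\in\mathcal{P}^{q_L}$ and $p_R\in\mathcal{P}^{q_R}$ with $\mathbf{c}(p_L)\preceq_{\alpha_i^{|q_L|}}\mathbf{c}(p^*_L)$ and $\mathbf{c}(p_R)\preceq_{\alpha_i^{|q_R|}}\mathbf{c}(p^*_R)$; since $\alpha_i\geq1$ and $|q_L|,|q_R|\leq k-1$, both also hold with the common factor $\alpha_i^{k-1}$. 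The \AWM{} enumerates exactly this split, these two sub-plans, and this operator, so $p_N=\mathrm{Combine}(j^*,p_L,p_R)$ is passed to $\mathrm{Prune}$ for $\mathcal{P}^q$; the PONO then gives $\mathbf{c}(p_N)\preceq_{\alpha_i^{k-1}}\mathbf{c}(p^*)$, the pruning invariant gives $p\in\mathcal{P}^q$ with $\mathbf{c}(p)\preceq_{\alpha_i}\mathbf{c}(p_N)$, and facts~\emph{(i)} and~\emph{(ii)} compose these to $\mathbf{c}(p)\preceq_{\alpha_i^k}\mathbf{c}(p^*)\preceq\mathbf{c}(p')$, hence $\mathbf{c}(p)\preceq_{\alpha_i^k}\mathbf{c}(p')$, closing the induction.

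I expect the delicate point to be the pruning invariant, specifically the fact that precision does not deteriorate along chains of evictions: this hinges on the deletion step discarding only \emph{exactly} dominated plans, so that the slack factor $\alpha_i$ introduced by approximate pruning is never compounded (otherwise a chain of $t$ evictions could degrade the guarantee to $\alpha_i^{t}$ at a single table set). A secondary bookkeeping item is the use of $\alpha_i\geq1$ to replace the per-side exponents $|q_L|,|q_R|$ by the common exponent $k-1$, which is what lets the single-$\alpha$ formulation of the PONO apply directly.
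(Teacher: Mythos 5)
Your proof is correct and follows essentially the same route as the paper's: induction on the number of joined tables, applying the induction hypothesis to the two sub-plans, the PONO to the combined plan, and one extra factor $\alpha_i$ for the final pruning step, giving $\alpha_i^{|Q|}$ overall. The only real difference is that you spell out the pruning invariant (deletions use exact dominance, so the $\alpha_i$ slack never compounds along eviction chains), a detail the paper's proof leaves implicit and only touches on informally in the remark after the corollary.
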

\begin{proof}
The proof uses induction over the number of tables $n=|Q|$. The \AWM{} examines all available access paths for single tables and generates an $\alpha_i$-approximate Pareto set. Assume \AWM{} generates $\alpha_i^{n}$-approximate Pareto sets for joining $n<N$ tables (inductional assumption). Let $p^*$ an arbitrary plan for joining $n=N$ tables and $p_L^*$, $p_R^*$ the two sub-plans generating the operands for the final join in $p^*$. Due to the inductional assumption, the \AWM{} generates a plan $p_L$ producing the same result as $p_L^*$ with $\mathbf{c}(p_L)\preceq_{\alpha_i^{N-1}}\mathbf{c}(p_L^*)$, and a plan $p_R$ producing the same result as $p_R^*$ with $\mathbf{c}(p_R)\preceq_{\alpha_i^{N-1}}\mathbf{c}(p_R^*)$. The plans $p_L$ and $p_R$ can be combined into a plan $p$ that generates the same result as $p^*$ and with $\mathbf{c}(p)\preceq_{\alpha_i^{N-1}}\mathbf{c}(p^*)$, due to the PONO. The \AWM{} might discard $p$ during the final pruning step but it keeps a plan $\widetilde{p}$ with $\mathbf{c}(\widetilde{p})\preceq_{\alpha_i}\mathbf{c}(p)$, therefore $\mathbf{c}(\widetilde{p})\preceq_{\alpha_i^{N}}\mathbf{c}(p^*)$ and the \AWM{} produces an $\alpha_i^N$-approximate Pareto set. 
\end{proof}

\begin{corollary}
\label{wMOQOprecisionTheorem}
The \AWM{} is an approximation scheme for weighted MOQO. 
\end{corollary}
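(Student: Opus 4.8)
The plan is to chain together three facts: the guarantee of Theorem~\ref{approximateParetoSetTheorem}, the specific choice of internal precision $\alpha_i=\sqrt[|Q|]{\alpha_U}$, and the behaviour of \textsc{SelectBest} when called with infinite bounds. First I would substitute the internal precision into Theorem~\ref{approximateParetoSetTheorem}. Since $\alpha_i^{|Q|}=\alpha_U$, the plan set $\mathcal{P}$ returned by \textsc{FindParetoPlans}$(Q,\alpha_U)$ is an $\alpha_U$-approximate Pareto set; that is, for every Pareto plan $p^*$ of $Q$ the set $\mathcal{P}$ contains a plan $p$ with $\mathbf{c}(p)\preceq_{\alpha_U}\mathbf{c}(p^*)$.

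Next I would transfer this guarantee from Pareto plans to the weighted optimum. Let $p^{opt}$ be any plan minimizing $C_{\mathbf{W}}$ over all plans for $Q$. The plan space of a fixed query is finite, so there is a Pareto plan $p^*$ with $\mathbf{c}(p^*)\preceq\mathbf{c}(p^{opt})$ (possibly $p^*=p^{opt}$). By the first step, $\mathcal{P}$ contains a plan $\widetilde{p}$ with $\mathbf{c}(\widetilde{p})\preceq_{\alpha_U}\mathbf{c}(p^*)$, hence $\mathbf{c}(\widetilde{p})^o\leq\alpha_U\,\mathbf{c}(p^*)^o\leq\alpha_U\,\mathbf{c}(p^{opt})^o$ for every objective $o$; multiplying by $\mathbf{W}^o\geq0$ and summing over $\mathbb{O}$ gives $C_{\mathbf{W}}(\mathbf{c}(\widetilde{p}))\leq\alpha_U\,C_{\mathbf{W}}(\mathbf{c}(p^{opt}))$.

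Finally I would invoke the last line of the \AWM{}: it returns \textsc{SelectBest}$(\mathcal{P},\mathbf{W},\mathbf{\infty})$. With infinite bounds every plan in $\mathcal{P}$ respects $\mathbf{B}$, so the returned plan $p$ satisfies $C_{\mathbf{W}}(\mathbf{c}(p))=\min_{q\in\mathcal{P}}C_{\mathbf{W}}(\mathbf{c}(q))\leq C_{\mathbf{W}}(\mathbf{c}(\widetilde{p}))\leq\alpha_U\,C_{\mathbf{W}}(\mathbf{c}(p^{opt}))$. By Definition~\ref{relCostDef} this means $\rho_I(p)\leq\alpha_U$ for the instance $I=\langle Q,\mathbf{W}\rangle$, i.e.\ $p$ is an $\alpha_U$-approximate solution; since this holds for every \PW{} instance and every $\alpha_U\geq1$, the \AWM{} is an approximation scheme in the sense of Definition~\ref{approximationAlgDef}. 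This corollary is essentially a bookkeeping argument; the only step requiring any care is the middle one, where the approximate-Pareto-set guarantee (stated only for Pareto plans) must be applied to the possibly non-Pareto plan $p^{opt}$ by routing through a dominating Pareto plan, and this follows immediately from finiteness of the plan space together with non-negativity of the weights.
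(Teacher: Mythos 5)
Your proposal is correct and follows essentially the same route as the paper: instantiate Theorem~\ref{approximateParetoSetTheorem} with $\alpha_i^{|Q|}=\alpha_U$, then convert approximate dominance into a weighted-cost bound using non-negativity of the weights, and observe that \textsc{SelectBest} with infinite bounds returns a plan at least as good. Your middle step (routing the guarantee through a Pareto plan dominating the weighted optimum) is a slightly more careful rendering of what the paper states directly—and is in any case covered by the paper's induction, which bounds the cost relative to \emph{arbitrary} plans, not just Pareto ones—so there is no substantive difference.
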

\begin{proof}
The \AWM{} generates an $\alpha_U$-approximate Pareto set according to Theorem~\ref{approximateParetoSetTheorem} (since $\alpha_i^{|Q|}=\alpha_U$). This set contains a plan $p$ with $\mathbf{c}(p)\preceq_{\alpha_U}\mathbf{c}(p^*)$ for any optimal plan $p^*$. It is $C_{\mathbf{W}}(\mathbf{c}(p))\leq\alpha_U\cdot C_{\mathbf{W}}(\mathbf{c}(p^*))$ for arbitrary weights $\mathbf{W}$ and $p$ is therefore an $\alpha_U$-approximate solution. 
\end{proof}

The pruning procedure is sensitive to changes. It seems for instance tempting to reduce the number of stored plans further by discarding all plans that a newly inserted plan approximately dominates. Then the cost vectors of the stored plans can however depart more and more from the real Pareto frontier with every inserted plan. Therefore, the additional change would destroy near-optimality guarantees. 

\subsection{Complexity Analysis}
\label{wMOQOanalysis}

We analyze space and time complexity. The analysis is based on the following observations. 

\begin{observation}
\label{scanCostBound}
The cost of a plan that operates on a single table with $t$ tuples grows at most quadratically in $t$. 
\end{observation}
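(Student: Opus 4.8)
The plan is to bound the cost of a single-table plan by reasoning about the cost of individual scan operators applied to that table and, if applicable, the cost of sorting its output, then argue that no single-table plan can do worse than some polynomial (in fact quadratic) in $t$. Recall from Section~\ref{implementationSec} that a plan operating on a single table consists of a scan or sampling access path, possibly followed by a sort; the set $\mathbb{J}$ of operators is fixed, so there are only $O(1)$ such plans for a given table. Since the cost of a plan on a single table is just the cost assigned by the cost model to one of these $O(1)$ access-path-plus-sort combinations, it suffices to bound the cost of each such combination.

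First I would observe that the output of any scan or sampling access path on a table with $t$ tuples has cardinality at most $t$ (sampling only reduces cardinality; a plain scan returns all $t$ tuples). Second, I would invoke the cost formulas recalled in Section~\ref{costSub} (and their standard Postgres origins): a sequential or index scan of a table with $t$ tuples costs $O(t)$ in time, IO, CPU, buffer, and disc footprint; the remaining objectives (number of cores, energy, tuple loss, failure probability) are bounded by constants or by $O(t)$ as well. A sort of an input of size at most $t$ costs $O(t\log t)$, which is $O(t^2)$. Composing at most one scan with at most one sort, and taking into account that every cost objective is combined via sum, max, min, or multiplication by a constant (as argued in the PONO discussion of Section~\ref{ponoSub}), the cost in every objective of any single-table plan is $O(t^2)$. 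Finally, since $l=|\mathbb{O}|$ is a constant and there are only $O(1)$ single-table plans, the whole cost vector — and hence any weighted sum of it with bounded weights — is $O(t^2)$.

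The main obstacle I anticipate is not the combinatorial counting (that part is routine given the fixed operator set) but rather making the claim robust across all nine objectives and any ``other common objectives'' the paper alludes to: one must be sure that no cost formula for a single-table access path is super-quadratic. The worst realistic case is the sort operator at $O(t\log t)$, which is comfortably within $O(t^2)$; I would state this explicitly and note that scan-type operators are all linear, so the quadratic bound is in fact loose but convenient for the later analysis. A secondary subtlety is that ``single table with $t$ tuples'' should be read as the base-table cardinality (the input to the access path), not the output cardinality; once that is fixed the argument goes through cleanly, and I would phrase the statement accordingly so it composes with Observation~\ref{scanCostBound}'s intended use in bounding intermediate-result sizes during the join-enumeration analysis.
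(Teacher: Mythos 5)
Your argument is correct and takes essentially the same route as the paper's own justification: split the objectives into those with a-priori bounded domains (trivial), those proportional to input/output sizes (linear), and the remaining ones, which are bounded by the cost of the most expensive unary operation applied to the single table. The only substantive difference is that the paper pins the quadratic exponent on Quicksort's worst-case $O(t^2)$ behavior as that most expensive unary operation, whereas you bound sorting by $O(t\log t)$ and treat the quadratic bound as loose --- either way the stated $O(t^2)$ bound holds.
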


\begin{observation}
\label{joinCostBound}
Let $F(t_L,t_R,c_L,c_L)$ the recursive formula calculating---for a specific objective and operator---the cost of a plan whose final join has inputs with cardinalities $t_L$ and $t_R$ and generation costs $c_L$ and $c_R$. Then $F$ is in \begin{equation*}
O(t_Lc_R+c_L+(t_Lt_R)^2).
\end{equation*} 
\end{observation}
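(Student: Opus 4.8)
The plan is to split $F$ into the part that accounts for executing the two operand-generating sub-plans and the part that accounts for the work of the final join operator itself, bound each separately, and add. For every operator in $\mathbb{J}$ the composed cost has the shape $F(t_L,t_R,c_L,c_R)=G(c_L,c_R)\star H(t_L,t_R)$, where $\star$ is either $+$ or, for pipelined/parallel operators, $\max$ (which only makes $F$ smaller since $a\star b\le a+b$ for non-negative $a,b$), $G$ collects the contributions of the sub-plan costs, and $H$ is the \emph{local} cost of the join, i.e.\ the part depending only on the input cardinalities $t_L,t_R$ and the output cardinality $t_{\mathrm{out}}\le t_Lt_R$ (the Cartesian-product bound). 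I would then prove $G(c_L,c_R)=O(c_L+t_Lc_R)$ and $H(t_L,t_R)=O((t_Lt_R)^2)$; combining the two gives $F=O(c_L+t_Lc_R+(t_Lt_R)^2)$, which is the stated bound.

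\emph{Bounding $G$.} This is a case analysis over the operator. For every join operator other than index-nested-loop, each sub-plan is executed exactly once, so $G(c_L,c_R)=O(c_L+c_R)$. For the index-nested-loop operator, orienting the operands so that the left one is the outer (probing) input, the outer sub-plan is executed once and the inner sub-plan is (re-)evaluated at most once per outer tuple, so $G(c_L,c_R)=O(c_L+t_Lc_R)$; the same pattern---a multiplicative factor $t_L$ on the inner cost---applies to accumulative objectives such as energy and monetary cost. Since cost values are non-negative and $t_L\ge 1$ for a non-degenerate operand, $c_R\le t_Lc_R$, so in all cases $G(c_L,c_R)=O(c_L+t_Lc_R)$.

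\emph{Bounding $H$.} By the structural analysis of Section~\ref{ponoSub}, every cost formula---in particular the local cost of a join---is assembled from sums, maxima, minima and multiplications by constants of terms polynomial in $t_L$, $t_R$ and $t_{\mathrm{out}}\le t_Lt_R$: e.g.\ $O(t_L+t_R+t_{\mathrm{out}})$ for a hash join, $O(t_L\log t_L+t_R\log t_R+t_{\mathrm{out}})$ for a sort-merge join, $O(t_Lt_R+t_{\mathrm{out}})$ for a plain nested-loop join, and a constant for bounded objectives such as tuple loss and failure probability. By the same reasoning that underlies Observation~\ref{scanCostBound} (an operation touching $k$ tuples costs at most $O(k^2)$), together with $t_L,t_R,t_{\mathrm{out}}\le t_Lt_R$, each such term---hence $H$---is $O((t_Lt_R)^2)$, a loose but sufficient bound.

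\emph{Main obstacle.} The difficulty is the case analysis itself: one must verify, uniformly over all nine objectives \emph{and} all operators in $\mathbb{J}$, that the sub-plan costs enter $F$ with only a constant coefficient for the outer operand and at most a factor $t_L$ for the inner operand, and that everything else is polynomial of degree at most two in the combined cardinality. This is mechanical given the limited vocabulary of cost-formula building blocks, and the only genuinely non-additive situations---parallel/pipelined operators (handled by $\max\le+$) and tuple loss / failure probability (a bounded local formula)---are precisely the ones already dissected in Section~\ref{ponoSub}, so no new work is needed there.
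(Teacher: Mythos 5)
Your decomposition into sub-plan generation cost (with a factor $t_L$ for a re-evaluated inner operand, so $O(c_L+t_Lc_R)$) plus a local join/post-processing cost bounded by a quadratic in the output cardinality $t_{\mathrm{out}}\le t_Lt_R$, with bounded-domain objectives such as tuple loss handled trivially, is essentially the same justification the paper gives for this observation (which it argues informally in prose rather than as a formal proof). The only cosmetic difference is bookkeeping: the paper counts pre-processing of join inputs (hashing, sorting) as post-processing of the sub-plan that produces the operand, whereas you fold terms like $t_L\log t_L$ into the local join cost, which changes nothing since they are still $O((t_Lt_R)^2)$.
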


\begin{observation}
\label{lowerBoundAssumption}
There is an intrinsic constant for every objective such that the cost of all query plans for that objective is either zero or lower-bounded by that constant. 
\end{observation}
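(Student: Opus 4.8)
The plan is to establish the observation directly for the nine objectives of Section~\ref{costSub} (the remaining objectives mentioned in Section~\ref{ponoSub} are then covered by analogy), exhibiting for each objective $o$ an explicit intrinsic constant $\gamma_o>0$ such that every plan has cost either $0$ or at least $\gamma_o$ in objective $o$. The key structural fact I would rely on is that each recursive cost formula is built from (i)~a fixed finite pool of model constants (per-page and per-tuple processing rates, the energy coefficients of~\cite{Flach2010}, the minimum admissible sampling density, etc.), (ii)~plan-dependent quantities that are non-negative and, when nonzero, bounded away from $0$ --- operand cardinalities are either $0$ or at least $1$, the degree of parallelism is a positive integer at most $4$, and the sampling density lies in a bounded set no smaller than $1\%$ --- combined only through the operations sum, maximum, minimum, multiplication by a constant, and the tuple-loss formula $1-(1-a)(1-b)$.

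First I would handle the additive and aggregative objectives: total and startup time, IO load, CPU load, hard-disc footprint, buffer footprint, and energy consumption. For these I would argue by structural induction on the plan: a plan whose cost in $o$ is nonzero must contain an operator touching at least one page or one tuple, whose contribution is (model constant)$\,\times\,$(quantity $\geq 1$), possibly divided by the degree of parallelism, hence at least (smallest relevant model constant)$/4=:\gamma_o$; and since subplan costs enter the formulas only through sum, maximum and minimum (parallelism divides the \emph{current} operator's work, not the subplan costs), combining subplans whose costs lie in $\{0\}\cup[\gamma_o,\infty)$ yields a cost in the same set. The number-of-cores objective is immediate: it is integer-valued and equals $0$ only for the trivial plan, so $\gamma=1$.

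Next I would treat tuple loss, and failure probability identically. Its value always lies in $[0,1]$, and by structural induction it is exactly $0$ precisely when the plan uses no sampling operator. If a sampling operator does occur, a single sampling leaf already contributes a loss bounded below by the fixed constant $\ell_{\min}>0$ determined by the minimum nonzero loss the model assigns; and because $1-(1-a)(1-b)$ is non-decreasing in each of $a,b$ on $[0,1]$, composing such a leaf with the rest of the plan can only raise the loss, so $\gamma=\ell_{\min}$. Finally, since $l=|\mathbb{O}|$ is a constant, taking the minimum of the finitely many $\gamma_o$ (or keeping them per objective) gives the claimed intrinsic constant, and monetary cost is handled like energy consumption.

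The step I expect to be the real obstacle is not any single computation but carrying out item~(ii) honestly across the nine formulas: one has to confirm, formula by formula, that no multiplicative factor strictly below $1$ is applied to a \emph{subplan} cost an unbounded number of times, so that a uniform positive lower bound survives all the way to the root of an arbitrarily large plan. A glance at the formulas confirms this --- parallelism speedups and similar factors are applied once per operator and the results then combine via sum/maximum/minimum --- but it is the point where abstract reasoning is not enough and the actual cost model must be inspected; I would therefore state the modelling conventions of item~(ii) explicitly up front, after which the objective-by-objective verification becomes routine.
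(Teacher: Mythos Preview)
Your argument is correct, but for the bulk of the objectives it is considerably more elaborate than what the paper actually does. The paper's justification of Observation~\ref{lowerBoundAssumption} does not perform any structural induction over plans. Instead, it simply observes that time (in milliseconds), buffer and disc footprint (in bytes), CPU cycles, and number of cores all have \emph{integer} cost domains; hence any nonzero value is automatically at least~$1$, and the intrinsic constant is immediate regardless of how the recursive formulas combine subplan costs. This neatly sidesteps precisely the issue you flag as the ``real obstacle'': once the output is known to be an integer, it does not matter whether some intermediate factor strictly below~$1$ is applied to a subplan cost, so no per-operator inspection is needed. Your induction, by contrast, works even if the cost values were genuinely real-valued, but at the price of having to verify the modelling convention in your item~(ii) formula by formula.

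For tuple loss (and, by analogy, failure probability) the two arguments coincide: the paper, like you, notes that the sampling rate is chosen from a fixed discrete set, so a single sampled leaf already forces a loss of at least some fixed $\ell_{\min}>0$ (the paper uses the concrete example that sampling any one table at $99\%$ already yields loss $\geq 1\%$), and monotonicity of $1-(1-a)(1-b)$ carries this bound to the root. So your treatment of the non-integer objective is essentially the paper's, while for the integer-valued objectives the paper's discreteness shortcut is both shorter and immune to the concern you raise.
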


Observations~\ref{scanCostBound} and \ref{joinCostBound} trivially hold for objectives whose cost values are taken from an a-priori bounded domain such as reliability, coverage, or tuple loss (domain $[0,1]$). They clearly hold for objectives whose cost are proportional to input and output sizes\footnote{Using size and cardinality as synonyms is a simplification since tuple (byte) size may vary. It is however realistic to assume a constant upper bound for tuple sizes (e.g., the buffer page size). Also, the analysis can be generalized.} such as buffer or disc footprint (the maximal output cardinality of a join is $t_Lt_R$ which is dominated by the term $(t_Lt_R)^2$). Quicksort has quadratic worst-case complexity in the number of input tuples. It is the most expensive unary operation in our scenario, according to objectives such as time, energy, number of CPU cycles, and number of I/O operations. The (startup and total) time of a plan containing join operations can be decomposed into \textit{i)}~the time for generating the inputs to the final join, \textit{ii)}~the time for the join itself, \textit{iii)}~and the time for post-processing of the join result (e.g., sorting, hashing, materialization). The upper bound in Observation~\ref{joinCostBound} contains corresponding terms, taking into account that the right (inner) operand might have to be generated several times. It does not include terms representing costs for pre-processing join inputs (e.g., hashing) as this is counted as post-processing cost of the plan generating the corresponding operand. Observation~\ref{joinCostBound} can be justified similarly for objectives such as energy, number of CPU cycles, and number of I/O operations. 

Observation~\ref{lowerBoundAssumption} clearly holds for objectives with integer cost domains such as buffer and disc footprint (bytes), CPU cycles, time (in milliseconds), and number of used cores. It also covers objectives with non-discrete value domains such as tuple loss. Tuple loss has a non-discrete value domain since---given enough tables in which we can vary the sampling rate---the tuple loss values of different plans can get arbitrarily close to each other (e.g., compare tuple loss ratio of one plan sampling 1\% of every table with one that samples 2\% in one table and 1\% of the others, the values get closer the more tables we have). Assuming that the scan operators are parameterized by a discrete sampling rate (e.g., a percentage), there is still a gap between 0 and the minimal tuple loss ratio greater than zero. This gap does not depend on the number of tables (sampling at least one table with 99\% creates a tuple loss of at least 1\%). We derive a non-recursive upper bound on plan costs from our observations. 

\begin{lemma}
The cost of a plan joining $n$ tables of cardinality $m$ is bounded by $O(m^{2n})$ for every objective. 
\end{lemma}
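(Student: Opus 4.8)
The plan is to prove the bound by induction on the number $n$ of joined tables, carrying along a second, simpler invariant: every plan joining a set of $k$ tables of cardinality at most $m$ has output cardinality at most $m^k$. This second fact is immediate --- a base-table scan outputs at most $m$ tuples, and a join outputs at most the product of its two input cardinalities --- and it is exactly what lets us replace the cardinality arguments $t_L,t_R$ occurring in Observation~\ref{joinCostBound} by powers of $m$.

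For the base case $n=1$, Observation~\ref{scanCostBound} directly gives that the cost of a single-table plan is $O(m^2)=O(m^{2\cdot 1})$. For the inductive step, let $p$ be a plan joining $n$ tables; its final join has sub-plans joining $n_L$ and $n_R$ tables with $n_L+n_R=n$ and $n_L,n_R\geq 1$. By the auxiliary invariant the join inputs have cardinalities $t_L\leq m^{n_L}$ and $t_R\leq m^{n_R}$, and by the induction hypothesis their generation costs satisfy $c_L=O(m^{2n_L})$ and $c_R=O(m^{2n_R})$. Substituting these into Observation~\ref{joinCostBound} bounds the cost of $p$ by
\begin{equation*}
O\!\left(m^{n_L}\cdot m^{2n_R}+m^{2n_L}+\bigl(m^{n_L}m^{n_R}\bigr)^{2}\right)=O\!\left(m^{n+n_R}+m^{2n_L}+m^{2n}\right).
\end{equation*}
Since $n_R\leq n-1$ and $n_L\leq n-1$, every exponent appearing here is at most $2n$, so the cost of $p$ is $O(m^{2n})$, which closes the induction.

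The one point that needs care, and which I expect to be the main obstacle, is that the recursive formula of Observation~\ref{joinCostBound} may multiply the left sub-plan's cost by a constant factor, so a careless induction would accumulate this factor over the $\Theta(n)$ levels of a plan and yield only the useless bound $O\bigl((Cm^{2})^{n}\bigr)$. I would handle this by tracking the constant explicitly: writing $C_1$ for the constant implicit in Observation~\ref{joinCostBound} (a single constant, since finitely many objectives and cost-formula shapes are involved), one verifies that the inductive step in fact closes with the uniform bound $\mathrm{cost}(p)\leq 2C_1 m^{2n}$ as soon as $m$ exceeds the fixed threshold $4C_1$ --- because the $c_L$ and $t_Lc_R$ contributions then carry the strictly smaller powers $m^{2n-2}$ and $m^{2n-1}$ and are damped by factors $1/m^2$ and $1/m$ relative to the $(t_Lt_R)^2$ term, so that $C_1(2C_1/m^2+2C_1/m+1)\leq 2C_1$. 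Since $m$ is treated as an unboundedly growing parameter in the asymptotic analysis (Section~\ref{modelSec}), restricting attention to $m>4C_1$ is harmless, and the clean statement $O(m^{2n})$ follows.
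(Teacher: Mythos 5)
Your proof is correct and follows essentially the same route as the paper's: induction on $n$, the base case from Observation~\ref{scanCostBound}, and the inductive step by bounding the join-input cardinalities by $m^{n_L}$, $m^{n_R}$ and plugging the induction hypothesis into Observation~\ref{joinCostBound}, so that $c_L+t_Lc_R=O(m^{2n-1})$ and the $(t_Lt_R)^2$ term dominates. Your explicit tracking of the multiplicative constant across the $\Theta(n)$ induction levels is a sound refinement of a point the paper's proof passes over silently.
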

\begin{proof}
Use induction over $n$. The lemma holds for $n=1$ due to Observation~\ref{scanCostBound}. Assume the lemma has been proven for $n<N$ (inductional assumption). Consider a join of $N$ tables. Cost is monotone in the number of processed tuples for any objective with non-bounded domain (not for tuple loss). So every join is a Cartesian product in the worst case and that implies $(t_Lt_R)^2=m^{2N}$. The inductional assumption implies $c_L+t_Lc_R\in O(m^{2N-1})$ so $(t_Lt_R)^2$ remains the dominant term. 
\end{proof}

The cost bounds allow to define an upper bound on the number of plans that the \AWM{} stores per table set. 

\begin{lemma}
\label{plansBoundLemma}
The \AWM{} stores $O((n\log_{\alpha_i}m)^{l-1})$ plans per table set. 
\end{lemma}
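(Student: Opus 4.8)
The plan is a counting argument over a geometric grid in cost space. First I would pin down the range of attainable cost values. By the preceding lemma, the cost of any plan joining $n$ tables of cardinality at most $m$ is $O(m^{2n})$ in every objective; by Observation~\ref{lowerBoundAssumption}, the cost in every objective is either $0$ or bounded below by an intrinsic constant $c_o>0$. Hence in each objective $o$ the cost of every plan for the table set in question lies in $\{0\}\cup[c_o,\,O(m^{2n})]$.

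Second, for each objective $o$ I would partition this range into cells: the singleton $\{0\}$ together with the half-open intervals $[c_o\alpha_i^{k},\,c_o\alpha_i^{k+1})$ for $k=0,1,2,\dots$ Covering $[c_o,\,O(m^{2n})]$ requires $\log_{\alpha_i}O(m^{2n}/c_o)+O(1)=O(n\log_{\alpha_i}m)$ intervals, since $c_o$, $l$, and the constant hidden in $O(m^{2n})$ are all treated as constants (Section~\ref{modelSec}). Taking the product over the $l$ objectives cuts the attainable part of cost space into a grid, and I would assign to each stored plan the $l$-tuple of cell indices of its cost vector. Since two plans whose cost vectors share the same cell in \emph{every} objective mutually $\alpha_i$-dominate each other, the pruning rule keeps at most one plan per cell, already giving the weaker bound $O((n\log_{\alpha_i}m)^{l})$.

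Third — the combinatorial heart — I would improve the exponent to $l-1$ by the standard ``staircase'' observation: two distinct stored plans cannot agree on the cell indices of any fixed set of $l-1$ objectives. Indeed, if $p,p'$ agreed on the cells of $o_1,\dots,o_{l-1}$, then for each such $o$ both costs are $0$ or lie in one common geometric interval, so $\mathbf{c}(p)^{o}\le\alpha_i\,\mathbf{c}(p')^{o}$ and $\mathbf{c}(p')^{o}\le\alpha_i\,\mathbf{c}(p)^{o}$; and in the remaining objective $o_l$, whichever plan is no worse there, say $\mathbf{c}(p)^{o_l}\le\mathbf{c}(p')^{o_l}$, trivially satisfies $\mathbf{c}(p)^{o_l}\le\alpha_i\,\mathbf{c}(p')^{o_l}$ because $\alpha_i\ge1$. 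Thus $\mathbf{c}(p)\preceq_{\alpha_i}\mathbf{c}(p')$, so whichever of $p,p'$ was considered second should have been blocked by the other. At most one stored plan per choice of $l-1$ cell indices yields $O((n\log_{\alpha_i}m)^{l-1})$.

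The step I expect to be the main obstacle is making the third step airtight against the precise pruning rule: the \AWM{} only evicts plans that a newly inserted plan \emph{exactly} dominates (the variant that also evicts approximately dominated plans was deliberately excluded), so the stored set is not literally an $\alpha_i$-antichain at intermediate stages. To close the gap one has to reason about insertion order — a plan in the final plan set is never evicted, hence was present, and therefore not $\alpha_i$-dominated, when any later plan was tested — and check that this suffices to rule out two final plans agreeing on $l-1$ cell coordinates (or, failing that, restrict the claim to the invariant that final plans are pairwise non-$\alpha_i$-dominating). The other place requiring care is the $0$-cost boundary, where ``lie in a common geometric interval'' degenerates to ``both equal $0$''; once these are handled, the grid counting itself is routine.
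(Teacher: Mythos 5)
Your grid construction and the one-plan-per-cell count are exactly the paper's route: the paper uses the discretization $\delta^o(\mathbf{c})=\lfloor\log_{\alpha_i}\mathbf{c}^o\rfloor$, the $O(m^{2n})$ upper bound from the preceding lemma and the constant lower bound from Observation~\ref{lowerBoundAssumption}, and concludes first that stored plans occupy distinct cells, giving $O((n\log_{\alpha_i}m)^{l})$. The gap is exactly where you feared, and your proposed repairs do not close it: the invariant behind your step three is false for this pruning rule. Take $l=2$, $\alpha_i=2$, and call Prune first with $p'$ of cost $(1,\,1)$ and then with $p$ of cost $(0.1,\,1.9)$. Since $1>2\cdot 0.1$, no stored plan $\preceq_{\alpha_i}$-dominates $p$, so $p$ is inserted; since $1.9>1$, $p$ does not exactly dominate $p'$, so $p'$ is not evicted. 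The surviving set $\{p,p'\}$ contains two plans agreeing on the cell index of the second objective (both costs lie in $[1,2)$), so the projection onto $l-1$ cell coordinates is not injective; moreover $p\preceq_{\alpha_i}p'$, so your fallback invariant (final plans pairwise non-$\alpha_i$-dominating) fails as well. The insertion-order argument you sketch only yields the one-directional statement that an \emph{earlier} surviving plan never $\alpha_i$-dominates a \emph{later} one, which does not exclude this configuration and, on its own, does not bound the number of occupied cells below $K^l$.

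What does hold---and what the paper's terse remark ``as the \AWM{} discards strictly dominated plans'' appeals to---is that the final set is an antichain under \emph{exact} dominance $\preceq$: if $\mathbf{c}(p_1)\preceq\mathbf{c}(p_2)$ and $p_1$ is considered first, then $p_1$ (or, if it was meanwhile deleted, a plan on its evictor chain, which dominates it) $\preceq_{\alpha_i}$-dominates $p_2$ and blocks its insertion; if $p_2$ is stored first, it is deleted when $p_1$ is inserted. Consequently no two occupied cells $u\neq v$ can satisfy $u^o<v^o$ in \emph{every} objective $o$, because then every cost vector in cell $u$ exactly dominates every cost vector in cell $v$. A set of cells in a grid of side $K=O(n\log_{\alpha_i}m)$ with this property has size $O(lK^{l-1})$: decompose the grid into diagonal chains of the form $v+t\mathbf{1}$, note that each chain is totally ordered by strict-in-every-coordinate dominance and hence carries at most one occupied cell, and count $(K+1)^l-K^l\leq l(K+1)^{l-1}$ chains. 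Since $l$ is a constant this is $O((n\log_{\alpha_i}m)^{l-1})$, which is the lemma (note the paper's intermediate bound carries the factor $l$, consistent with this argument rather than with a projection bound of $K^{l-1}$). So keep your grid, your handling of zero-cost coordinates, and your same-cell exclusion (which itself needs the short evictor-chain argument in case the earlier cell-mate was deleted), but replace the projection-injectivity step by this exact-dominance antichain argument.
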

\begin{proof}
Function $\delta$ maps continuous cost vectors to discrete vectors such that $\delta^o(\mathbf{c})=\llcorner\log_{\alpha_i}(\mathbf{c}^o)\lrcorner$ for each objective $o$ and internal precision $\alpha_i$. If $\delta(\mathbf{c}_1)=\delta(\mathbf{c}_2)$ for two cost vectors $\mathbf{c}_1$ and $\mathbf{c}_2$, then $\mathbf{c}_1\preceq_{\alpha_i}\mathbf{c}_2$ and also $\mathbf{c}_2\preceq_{\alpha_i}\mathbf{c}_1$. This means that the cost vectors mutually approximately dominate each other. Therefore, the \AWM{} can never store two plans whose cost vectors are mapped to the same vector by $\delta$. The number of plans that have cost value zero for at least one objective is (asymptotically) dominated by the number of plans with non-zero cost values for every objective. Considering only the latter plans, their cost is lower-bounded by a constant (assume $1$ without restriction of generality) and upper-bounded by a function in $O(m^{2n})$. The cardinality of the image of $\delta$ is therefore upper-bounded by $O((n\log_{\alpha_i}m)^l)$. As the \AWM{} discards strictly dominated plans, the bound tightens to $O(l(n\log_{\alpha_i}m)^{l-1})$ which equals $O((n\log_{\alpha_i}m)^{l-1})$ since $l$ is constant (see Section~\ref{modelSec}).
\end{proof}

The function $\mathcal{N}_{stored}(m,n)=(n\log_{\alpha_i}m)^{l-1}$ denotes in the following the asymptotic bound on plan set cardinalities. 

\begin{theorem}
\label{wMOQOspaceTheorem}
The \AWM{} has space complexity
\begin{equation*}
O(2^n\mathcal{N}_{stored}(m,n)).
\end{equation*}
\end{theorem}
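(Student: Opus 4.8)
The plan is to mirror the proof of Theorem~\ref{optSpaceTheorem}, replacing the per-table-set bound on the number of Pareto plans by the much smaller bound on the number of plans that the \AWM{} retains after approximate pruning. First I would observe that, exactly as in the \AG{}, every stored plan is represented either by an operator ID together with a table ID (for scan plans), or by the operator ID of its final join together with two pointers to its sub-plans; hence each stored plan occupies $O(1)$ space, and each stored cost vector also occupies $O(1)$ space because $l$ is treated as a constant (Section~\ref{modelSec}). The recursion of the \AWM{} keeps nothing in memory besides these plan sets — the comparisons performed inside the pruning procedure cost time, not space — so the total space consumption is, up to a constant factor, the total number of stored plans.

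Next I would bound that total. The \AWM{} maintains one plan set $\mathcal{P}^q$ for each non-empty subset $q\subseteq Q$, and there are $2^n-1<2^n$ such subsets. By Lemma~\ref{plansBoundLemma}, a table set of cardinality $k$ carries $O\bigl((k\log_{\alpha_i}m)^{l-1}\bigr)$ plans; since this expression is monotone in $k$ and $k\leq n$, every plan set has cardinality $O(\mathcal{N}_{stored}(m,n))$. Summing this uniform bound over all non-empty subsets yields at most $2^n\cdot O(\mathcal{N}_{stored}(m,n))$ stored plans, and multiplying by the $O(1)$ space per plan gives the claimed space complexity $O(2^n\mathcal{N}_{stored}(m,n))$. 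One could instead count more precisely as $\sum_{k=1}^{n}\binom{n}{k}\mathcal{N}_{stored}(m,k)$, but since the binomial coefficients already sum to $2^n$ and $\mathcal{N}_{stored}(m,k)\leq\mathcal{N}_{stored}(m,n)$, this only reconfirms the same asymptotic bound.

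I do not expect a genuine obstacle at this point: essentially all of the difficulty of the result is packed into Lemma~\ref{plansBoundLemma}, whose coarsening argument — discretizing each cost vector on a logarithmic grid of base $\alpha_i$ and observing that no two surviving plans can fall into the same cell — is what turns the exponential per-table-set bound of the \AG{} into a polynomial one. The two points that need a little care in the write-up are making the per-subset bound uniform over all subset cardinalities before summing, and stating explicitly that no auxiliary data structure contributes asymptotically to the memory footprint; both are routine consequences of the lemma and of the pointer-based plan representation.
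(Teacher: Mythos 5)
Your proposal is correct and follows essentially the same route as the paper: each stored plan and cost vector takes $O(1)$ space, each of the at most $2^n$ plan sets has cardinality bounded by Lemma~\ref{plansBoundLemma}, and summing over all subsets of $Q$ gives the claimed bound. The paper's own proof is just a terser version of exactly this argument, with all the real work delegated to Lemma~\ref{plansBoundLemma} as you note.
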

\begin{proof}
Plan sets are the variables with dominant space consumption in the \AWM{}. Each stored plan (with its associated cost vector) needs only $O(1)$ space as justified in the proof of Theorem~\ref{optSpaceTheorem}. Summing over all subsets of $Q$ yields the total complexity. 
\end{proof}



\begin{theorem}
The \AWM{} has time complexity 
\begin{equation*}
O(j3^{n}\mathcal{N}^{3}_{stored}(m,n)).
\end{equation*} 
\end{theorem}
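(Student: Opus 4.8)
The plan is to re-run the accounting of the \AG{} time-complexity proof, but replacing the worst-case plan-set size $\mathcal{N}_{bushy}(j,n)$ by the much smaller bound $\mathcal{N}_{stored}(m,n)$ from Lemma~\ref{plansBoundLemma}. First I would pin down the cost of a single pruning step. Exactly as justified in the proof of Theorem~\ref{optSpaceTheorem}, a plan together with its cost vector occupies $O(1)$ space and $\mathrm{Combine}$ builds it in $O(1)$ time (treating arithmetic on cost values as $O(1)$, as in the earlier proofs); each approximate-dominance test $\mathbf{c}(p)\preceq_{\alpha_i}\mathbf{c}(p_N)$ compares $l=O(1)$ coordinates and is therefore $O(1)$. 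A call $\mathrm{Prune}(\mathcal{P}^q,p_N,\alpha_i)$ scans $\mathcal{P}^q$ a constant number of times — once to test whether $p_N$ is approximately dominated, once to delete the plans that $p_N$ dominates — so it runs in $O(|\mathcal{P}^q|)=O(\mathcal{N}_{stored}(m,n))$ time by Lemma~\ref{plansBoundLemma}.

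Next I would bound the work spent on one table set $q$ with $|q|=k$. In $\mathrm{FindParetoPlans}$ the candidate plans for $q$ are generated by iterating over all splits $q_1\dot{\cup}q_2=q$ (of which there are $O(2^k)$), over all pairs $p_1\in\mathcal{P}^{q_1},p_2\in\mathcal{P}^{q_2}$ — and here the bound of Lemma~\ref{plansBoundLemma} applies to both operand sets, since they have already been pruned — and over all $j$ operators in $\mathbb{J}$. This yields $O(2^k\, j\, \mathcal{N}^2_{stored}(m,n))$ candidates $p_N$, each triggering one $\mathrm{Prune}$ of cost $O(\mathcal{N}_{stored}(m,n))$, for a total of $O(2^k\, j\, \mathcal{N}^3_{stored}(m,n))$ charged to $q$. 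The singleton sets are cheaper: for each of the $n$ tables there are $O(j)$ access paths, each causing one $\mathrm{Prune}$ of cost $O(\mathcal{N}_{stored}(m,n))$, and the concluding $\mathrm{SelectBest}$ scan over $\mathcal{P}^Q$ is likewise dominated, so neither affects the asymptotics.

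Finally I would sum over all table sets. Grouping subsets by cardinality gives $\sum_{k=1}^{n}\binom{n}{k}\,O(2^k\, j\, \mathcal{N}^3_{stored}(m,n)) = O\!\big(j\,\mathcal{N}^3_{stored}(m,n)\sum_{k=0}^{n}\binom{n}{k}2^k\big) = O\!\big(j\,(1+2)^n\,\mathcal{N}^3_{stored}(m,n)\big)$, which is the claimed $O(j\,3^n\,\mathcal{N}^3_{stored}(m,n))$; intuitively the factor $3^n$ counts the ways of assigning each of the $n$ tables to "outside $q$", "left sub-plan of the final join", or "right sub-plan". The step I would be most careful about is precisely this last bookkeeping — using the tight $O(2^k)$ for the number of splits of a $k$-set and combining it correctly with $\binom{n}{k}$ is what produces $3^n$ rather than an inflated bound, and one must also make sure the operand-set cardinalities really are governed by Lemma~\ref{plansBoundLemma} (they are, because the operand sets are pruned before being reused). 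The remainder is a routine specialization of the \AG{} analysis to the smaller plan sets that approximate pruning guarantees.
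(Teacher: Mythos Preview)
Your argument is correct and follows essentially the same route as the paper: count $O(2^k)$ splits per $k$-set, $O(j\,\mathcal{N}^2_{stored})$ candidate plans per split, $O(\mathcal{N}_{stored})$ per pruning call, and then collapse $\sum_k\binom{n}{k}2^k$ to $3^n$. Your write-up is in fact more explicit than the paper's (you spell out the cost of a single $\mathrm{Prune}$, the singleton case, and $\mathrm{SelectBest}$), and your use of the uniform bound $\mathcal{N}_{stored}(m,n)$ in place of the paper's $\mathcal{N}_{stored}(m,k)$ is a harmless monotonicity simplification.
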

\begin{proof}
There are $O(2^{k})$ possibilities of splitting a set of $k$ tables into two subsets. Every split allows to construct $O(j\mathcal{N}^2_{stored}(m,k-1))$ plans. Each newly generated plan is compared against all $O(\mathcal{N}_{stored}(m,k-1))$ plans in the set. Summing time complexity over all table sets yields $\sum_{k=1..n}\binom{n}{k}2^{k}j\mathcal{N}^{3}_{stored}(m,k)\leq j3^{n}\mathcal{N}_{stored}^{3}(m,n)$.
\end{proof}

Space and time complexity are exponential in the number of tables $n$. This cannot be avoided unless $P=NP$ since finding near-optimal query plans is already NP-hard for the single-objective case~\cite{Chatterji2002}. It is however remarkable that the time complexity of the \AWM{} differs only by factor $\mathcal{N}^3_{stored}(m,n)$ from the single-objective Selinger algorithm for bushy plans~\cite{Vance1996} (which has complexity $O(j3^n)$). This factor is a polynomial in number of join tables and table cardinalities. Unlike the \AG{}, the complexity of the \AWM{} does not depend on the total number of possible plans. This lets expect significantly better scalability (see Figure~\ref{complexityFig} for a visual comparison). Note that this qualitative result does not change when using different upper bounds on the recursive plan cost formulas (see Observation~\ref{scanCostBound} and \ref{joinCostBound}), as long as they remain polynomials. 

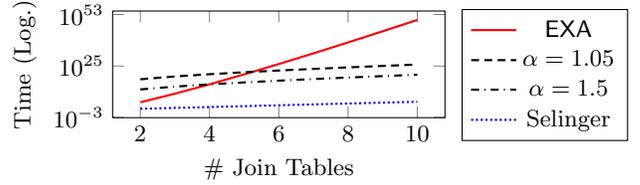
\begin{figure}
\begin{tikzpicture}
\begin{axis}[xlabel=\# Join Tables, ylabel=Time (Log.), xlabel near ticks, ylabel near ticks, height=3cm, width=6cm, ymode=log,
legend columns=1,legend style={at={(1.05,0.4)},anchor=west}]

\addplot[domain=2:10,samples at={2,3,4,5,6,7,8,9,10},
color=red, thick]{(6^(x-1+x)*factorial(2*(x-1))/factorial(x-1))^2};
\addplot[domain=2:10,samples at={2,3,4,5,6,7,8,9,10},
densely dashed, thick]{6*3^x*(x*ln(100000)/ln(1.05))^6};
\addplot[domain=2:10,samples at={2,3,4,5,6,7,8,9,10},
dashdotted, thick]{6*3^x*(x*ln(100000)/ln(1.5))^6};
\addplot[domain=2:10,samples at={2,3,4,5,6,7,8,9,10},
densely dotted, thick, color=blue]{6*3^x};

\addlegendentry{\AG{}}
\addlegendentry{$\alpha=1.05$}
\addlegendentry{$\alpha=1.5$}
\addlegendentry{Selinger}
\end{axis}
\end{tikzpicture}

\vspace{-5mm}

\caption{Comparing time complexity of the exact MOQO algorithm (\AG{}), the MOQO approximation scheme with $\alpha=1.05$ and $\alpha=1.5$, and Selinger's SOQO algorithm (Setting $j=6$, $l=3$, and $m=10^5$)\label{complexityFig}}
\end{figure}

\section{Approximating Bounded MOQO}
\label{bMOQOsec}

\begin{figure}
\pgfmathparse{1.25}
\let\alpha=\pgfmathresult

\pgfplotstableread{figures/diagrams/points.txt}\points
\pgfplotstableread{figures/diagrams/pareto.txt}\pareto
\pgfplotstableread{figures/diagrams/fine.txt}\fine
\pgfplotstableread{figures/diagrams/coarse.txt}\coarse

\begin{tikzpicture}

\begin{axis}[
xlabel=\reMetricOne{},
ylabel=\reMetricTwo{},
ylabel near ticks,
xlabel near ticks,
axis x line=bottom,
axis y line=left,
xmin=0,
ymin=0,
mark=x,
legend columns=1,
legend style={
at={(1.15,0.5)},
anchor=west},
legend cell align=left]

\addplot[only marks,mark=x] table[x={x},y={y}]{\points};
\addplot[only marks,forget plot] table[x={x},y={y}]{\coarse};
\addplot[only marks,forget plot] table[x={x},y={y}]{\fine};
\addplot[only marks,forget plot] table[x={x},y={y}]{\pareto};

\draw[red,-stealth,ultra thick]
(axis cs:0,0)
--
++
(axis direction cs:0.15,1);

\addplot[no marks,draw=red,ultra thick] coordinates {
(0,1)
(4,0)
};

\pgfmathparse{3.1}
\let\boundX=\pgfmathresult
\pgfmathparse{2.4}
\let\boundY=\pgfmathresult

\addplot[ultra thick,dashed,no marks] coordinates {
(\boundX,0)
(\boundX,3)
};
\addplot[ultra thick,dashed,no marks,forget plot] coordinates {
(0,\boundY)
(4,\boundY)
};

\addplot[mark=*,fill=black,only marks] coordinates {
(3,1)
};
\addplot[mark=otimes*,fill=white,only marks] coordinates {
(3.2,1.1)
(1.3,2)
(0.5,3.2)
};

\addlegendentry[align=left]{\planCost{}}
\addlegendentry[align=left]{\weights{}}
\addlegendentry[align=left]{\bounds{}}
\addlegendentry[align=left]{\optimalCost{}}
\addlegendentry[align=left]{Approximate\\ Pareto Frontier}

\end{axis}

\end{tikzpicture}

\vspace{-5mm}

\caption{An approximate Pareto set does not necessarily contain a near-optimal plan if bounds are considered}
\label{precisionProblemFig}
\end{figure}
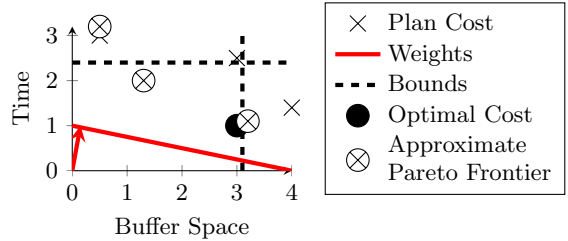

The \AWM{} selects an $\alpha_U$-approximate plan out of an $\alpha_U$-approximate Pareto set. This is always possible since similar cost vectors have similar weighted cost. This principle breaks when considering bounds in addition to weights. Even if two cost vectors are extremely similar, one of them can exceed the bounds while the other one does not. Figure~\ref{precisionProblemFig} illustrates this problem. There is no $\alpha\leq\alpha_U$ except $\alpha=1$ that guarantees a-priori that an $\alpha$-approximate Pareto set contains an $\alpha_U$-approximate plan. Choosing $\alpha=1$ leads however to high computational cost and should be avoided (the \AWM{} corresponds to the \AG{} if $\alpha=1$). 

Assuming that the pathological case depicted in Figure~\ref{precisionProblemFig} occurs always for $\alpha>1$ is however overly pessimistic. An $\alpha_U$-approximate Pareto set \textit{may} very well contain an $\alpha_U$-approximate solution. We present an iterative algorithm that exploits this fact: The iterative-refinement algorithm (\ABM{}) generates an approximate Pareto set in every iteration, starting optimistically with a coarse-grained approximation precision and refining the precision until a near-optimal plan is generated. This requires a stopping condition that detects whether an approximate Pareto set contains a near-optimal plan (without knowing the optimal plan or its cost). We present the \ABM{} and a corresponding stopping condition in Section~\ref{bAlgSub}. A potential drawback of an iterative approach is redundant work in different iterations. We analyze the complexity of the \ABM{} in Section~\ref{bAnaSub} and show how a carefully selected precision refinement policy makes sure that the amount of redundant work is negligible. We also prove that the \ABM{} always terminates. 

\subsection{Pseudo-Code and Near-Optimality Proof}
\label{bAlgSub}

\begin{algorithm}[t!]
\renewcommand{\algorithmiccomment}[1]{// #1}
\begin{algorithmic}[1]
\State \Comment{Find $\alpha_U$-approximate plan for query $Q$,}
\State \Comment{weights $\mathbf{W}$, bounds $\mathbf{B}$}
\Function{IRA}{$Q,\mathbf{W},\mathbf{B},\alpha_U$}
\State $i\leftarrow0$ \Comment{Initialize iteration counter}
\Repeat
\State $i\leftarrow i+1$ 
\State \Comment{Choose precision for this iteration}
\State $\alpha\leftarrow\alpha_U^{\displaystyle 2^{\displaystyle -i/(3l-3)}}$
\State \Comment{Find $\alpha$-approximate Pareto plan set}
\State $\mathcal{P}\leftarrow\mathrm{FindParetoPlans}(Q,\alpha)$
\State \Comment{Select best plan in $\mathcal{P}$}
\State $p_{opt}\leftarrow$SelectBest$(\mathcal{P},\mathbf{W},\mathbf{B})$
\Until{$\nexists p\in\mathcal{P}:\mathbf{c}(p)\preceq\alpha\mathbf{B}\wedge \frac{C_{\mathbf{W}}(\mathbf{c}(p))}{\alpha}<\frac{C_{\mathbf{W}}(\mathbf{c}(p_{opt}))}{\alpha_U}$}
\State \Return $p_{opt}$
\EndFunction
\end{algorithmic}
\caption{The Iterative-Refinement Algorithm: An Approximation Scheme for Bounded-Weighted MOQO. The Code Uses Sub-Functions From Algorithm~\ref{weightedMOQOAlg}.\label{boundedMOQOAlg}}
\end{algorithm}

Algorithm~\ref{boundedMOQOAlg} shows pseudo-code of the \ABM{}. The \ABM{} uses the functions {\codeF FindParetoPlans} and {\codeF SelectBest} which were already defined in Algorithm~\ref{weightedMOQOAlg}. The \ABM{} chooses in every iteration an approximation precision $\alpha$ and calculates an $\alpha$-approximate Pareto set. The precision gets refined from one iteration to the next. We will discuss the particular choice of precision formula in the next subsection. At the end of every iteration, the \ABM{} selects the best plan {\codeF $p_{opt}$} in the current approximate Pareto set. It terminates, once that plan is guaranteed to be $\alpha_U$-optimal. The stopping condition of the \ABM{} compares $p_{opt}$ with the best plan that can be found if the bounds are slightly relaxed (i.e., multiplied by a factor). This termination condition makes sure that the \ABM{} does not terminate before it finds an $\alpha_U$-approximate plan. This implies that the \ABM{} is an approximation scheme. 

\begin{theorem}
The \ABM{} is an approximation scheme for \PB{}. 
\end{theorem}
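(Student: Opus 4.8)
The plan is to verify the two requirements of Definition~\ref{approximationAlgDef}: that the \ABM{} terminates on every instance, and that the plan $p_{opt}$ returned in the terminating iteration satisfies $\rho_I(p_{opt})\le\alpha_U$. Termination I would sketch only briefly here, deferring the quantitative version to Section~\ref{bAnaSub}: the precision schedule $\alpha=\alpha_U^{2^{-i/(3l-3)}}$ decreases monotonically to $1$, and by Observation~\ref{lowerBoundAssumption} there is a precision at which the $\alpha$-approximate Pareto set coincides cost-wise with the exact one, after which the stopping test necessarily holds. So the heart of the argument concerns the terminating iteration; fix it, write $\alpha$ for its precision and $\mathcal{P}=\mathrm{FindParetoPlans}(Q,\alpha)$ for its plan set. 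Since {\codeF FindParetoPlans} runs with internal precision $\alpha_i=\sqrt[|Q|]{\alpha}$, Theorem~\ref{approximateParetoSetTheorem} guarantees that $\mathcal{P}$ is an $\alpha$-approximate Pareto set.

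Next I would pick an optimal plan $p^*$ for $I$ in the sense of the \PB{} definition. By Theorem~\ref{approximateParetoSetTheorem} there is $\widetilde{p}\in\mathcal{P}$ with $\mathbf{c}(\widetilde{p})\preceq_{\alpha}\mathbf{c}(p^*)$, hence in particular $C_{\mathbf{W}}(\mathbf{c}(\widetilde{p}))\le\alpha\,C_{\mathbf{W}}(\mathbf{c}(p^*))$. If the plan space of $Q$ contains no plan that respects $\mathbf{B}$, then $\rho_I$ is the plain relative weighted cost, $p^*$ minimizes $C_{\mathbf{W}}$ over all plans, no plan of $\mathcal{P}$ respects $\mathbf{B}$ either, so $\mathrm{SelectBest}$ returns the weighted-cheapest plan of $\mathcal{P}$ and $C_{\mathbf{W}}(\mathbf{c}(p_{opt}))\le C_{\mathbf{W}}(\mathbf{c}(\widetilde{p}))\le\alpha\,C_{\mathbf{W}}(\mathbf{c}(p^*))\le\alpha_U\,C_{\mathbf{W}}(\mathbf{c}(p^*))$, so $\rho_I(p_{opt})\le\alpha_U$.

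The remaining case is that some plan of $Q$ respects $\mathbf{B}$; then $p^*$ does, so $\mathbf{c}(\widetilde{p})^o\le\alpha\,\mathbf{c}(p^*)^o\le\alpha\,\mathbf{B}^o$ for every objective $o$, i.e.\ $\mathbf{c}(\widetilde{p})\preceq\alpha\mathbf{B}$. Because the iteration ended, the stopping test held, so no $p\in\mathcal{P}$ simultaneously satisfies $\mathbf{c}(p)\preceq\alpha\mathbf{B}$ and $C_{\mathbf{W}}(\mathbf{c}(p))/\alpha<C_{\mathbf{W}}(\mathbf{c}(p_{opt}))/\alpha_U$; applying this to $\widetilde{p}$, whose first conjunct we just verified, forces $C_{\mathbf{W}}(\mathbf{c}(p_{opt}))/\alpha_U\le C_{\mathbf{W}}(\mathbf{c}(\widetilde{p}))/\alpha\le C_{\mathbf{W}}(\mathbf{c}(p^*))$, i.e.\ $C_{\mathbf{W}}(\mathbf{c}(p_{opt}))\le\alpha_U\,C_{\mathbf{W}}(\mathbf{c}(p^*))$. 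Hence whatever plan $\mathrm{SelectBest}$ returns is within the target weighted-cost factor of the optimum.

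The one point that still needs care — and the one I expect to be the main obstacle — is showing that in this last case $p_{opt}$ actually respects $\mathbf{B}$; otherwise $\rho_I(p_{opt})=\infty$ despite the weighted-cost bound. Equivalently, one must show that the plan set $\mathcal{P}$ of the terminating iteration contains at least one plan respecting $\mathbf{B}$, so that $\mathrm{SelectBest}$ returns such a plan (the cheapest one), for which the bound of the previous paragraph then yields $\rho_I(p_{opt})\le\alpha_U$. Figure~\ref{precisionProblemFig} shows why this is not automatic: an $\alpha$-approximate Pareto set may replace every $\mathbf{B}$-respecting Pareto plan by a representative that is cheaper in every objective yet barely violates $\mathbf{B}$. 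The approach I would take is to push the analysis of the stopping test one step further and argue that, as long as $\mathcal{P}$ still lacks a $\mathbf{B}$-respecting plan, the stopping test cannot yet hold, so the \ABM{} keeps refining; by contrapositive, on termination $\mathcal{P}$ does contain a $\mathbf{B}$-respecting plan. Getting the quantifiers over $\mathcal{P}$, the relaxed bound $\alpha\mathbf{B}$ used in the stopping test, and the original bound $\mathbf{B}$ used in $\mathrm{SelectBest}$ to cooperate — possibly at the price of a small strengthening of the stopping rule, such as declaring it unmet whenever $\mathrm{SelectBest}$ finds no $\mathbf{B}$-respecting plan — is where the real work sits; once that is established, combining it with the weighted-cost bound and with termination (Section~\ref{bAnaSub}) shows that the \ABM{} returns an $\alpha_U$-approximate solution on every instance, i.e.\ it is an approximation scheme for \PB{}.
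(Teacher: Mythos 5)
Your argument for the weighted-cost bound is essentially the paper's own proof: the paper also fixes the last iteration, splits on whether an optimal plan $p^*$ respects $\mathbf{B}$, uses the fact that the final plan set $\mathcal{P}$ contains a plan $p_R$ with $\mathbf{c}(p_R)\preceq_{\alpha}\mathbf{c}(p^*)$ (hence $C_{\mathbf{W}}(\mathbf{c}(p_R))\leq\alpha C_{\mathbf{W}}(\mathbf{c}(p^*))$ and $\mathbf{c}(p_R)\preceq\alpha\mathbf{B}$), and then invokes the negated stopping condition to get $C_{\mathbf{W}}(\mathbf{c}(p_{opt}))\leq\alpha_U C_{\mathbf{W}}(\mathbf{c}(p^*))$; the infeasible case is handled exactly as you handle it. (The paper routes the stopping-condition step through the cheapest plan of $\mathcal{P}$ respecting $\alpha\mathbf{B}$ instead of applying it directly to your $\widetilde{p}$ -- an immaterial difference.) Termination is not part of this proof in the paper; it is a separate theorem, proved in the spirit you sketch.

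The step you single out as the main obstacle -- that in the feasible case $p_{opt}$ must itself respect $\mathbf{B}$, since otherwise $\rho_I(p_{opt})=\infty$ under Definition~\ref{relCostDef} -- is precisely where the paper's proof is silent: it stops at the weighted-cost inequality and declares $p_{opt}$ an $\alpha_U$-approximate solution without arguing feasibility. So your scruple goes beyond, rather than against, the paper. Be aware, though, that the repair you sketch does not go through for Algorithm~\ref{boundedMOQOAlg} as written: if $\mathcal{P}$ contains no $\mathbf{B}$-respecting plan, then SelectBest returns the plan minimizing $C_{\mathbf{W}}$ over all of $\mathcal{P}$, and for every $p\in\mathcal{P}$ one has $C_{\mathbf{W}}(\mathbf{c}(p))/\alpha\geq C_{\mathbf{W}}(\mathbf{c}(p_{opt}))/\alpha\geq C_{\mathbf{W}}(\mathbf{c}(p_{opt}))/\alpha_U$ because $\alpha\leq\alpha_U$, so the stopping test holds vacuously and the loop does \emph{not} keep refining while $\mathcal{P}$ lacks a $\mathbf{B}$-respecting plan (this is exactly the situation of Figure~\ref{precisionProblemFig}). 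Closing the gap therefore requires either strengthening the stopping rule along the lines you mention (e.g., refusing to stop while only the relaxed bounds $\alpha\mathbf{B}$, but not $\mathbf{B}$, are met by some plan of $\mathcal{P}$ that could witness feasibility) or reading the near-optimality claim as a bound on weighted cost with bounds violated by at most the factor $\alpha$ of the last iteration. In short: everything the paper actually proves, you prove the same way; the residual feasibility step is a genuine hole in your write-up, but it is one the paper's own proof does not fill either, and your proposed route to fill it would need a modified stopping condition to work.
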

\begin{proof}
Denote by $\mathcal{P}$ the set of plans generated in the last iteration, by $\alpha$ the precision used in the last iteration, and by $p_{opt}$ the best plan in $\mathcal{P}$. The termination condition was met in the last iteration so there is no plan $p\in\mathcal{P}$ respecting the relaxed bounds $\alpha\mathbf{B}$ with $C_{\mathbf{W}}(\mathbf{c}(p))/\alpha< C_{\mathbf{W}}(\mathbf{c}(p_{opt}))/\alpha_U$. Let $p^{*}$ be an optimal plan for the input query (not necessarily contained in $\mathcal{P}$). 
Assume first that $p^{*}$ respects the bounds $\mathbf{B}$. Plan set $\mathcal{P}$ contains a plan $p_R$ whose cost vector is similar to the one of $p^*$: $\mathbf{c}(p_R)\preceq_{\alpha}\mathbf{c}(p^{*})$. The weighted cost of $p_R$ is near-optimal: $C_\mathbf{W}(\mathbf{c}(p_R))\leq \alpha C_\mathbf{W}(\mathbf{c}(p^*))$. Plan $p_R$ can violate the bounds $\mathbf{B}$ by factor $\alpha$ but respects the relaxed bounds: $\mathbf{c}(p_R)\preceq\alpha\mathbf{B}$. Let $p$ be the best plan in $\mathcal{P}$ that respects the relaxed bounds $\alpha\mathbf{B}$, the weighted cost of $p$ is smaller or equal to the one of $p_R$. Therefore, $C_\mathbf{W}(\mathbf{c}(p))/\alpha$ is a lower bound on $C_\mathbf{W}(p^*)$. If the weighted cost of $p_{opt}$ is not higher than that by more than factor $\alpha_U$, then $p_{opt}$ is an $\alpha_U$-approximate solution. Assume now that $p^{*}$ does not respect the bounds $\mathbf{B}$. Then no possible plan respects the bounds and weighted cost is the only criterion. Since $\alpha\leq\alpha_U$, the set $\mathcal{P}$ must contain an $\alpha_U$-approximate solution ($p_{opt}$). 
\end{proof}

\subsection{Analysis of Refinement Policy}
\label{bAnaSub}

The formula for calculating the approximation precision $\alpha$ should satisfy several requirements. First, the formula needs to be strictly monotonically decreasing in $i$ (the number of iterations) since the \ABM{} otherwise executes unnecessary iterations that do not generate new plans. Second, it should decrease quickly enough in $i$ such that the time required by the new iteration is higher or at least comparable to the time required in all previous iterations\footnote{Memory space can be reused in the new iteration so we only consider run time in the choice of $\alpha$. }. This ensures that the amount of redundant work is small compared with the total amount of work, as the \ABM{} can generate the same plans in several iterations. Third, it should decrease as slowly as the other requirements allow; choosing a lower $\alpha$ than necessary should be avoided, since the complexity of the Pareto set approximation grows quickly in the inverse of $\alpha$. The formula $\alpha=\alpha_U^{2^{-i/(3l-3)}}$ is strictly monotonically decreasing in $i$. It also satisfies the second and third requirement as we see next. The following theorem concerns space and time complexity of the $i$-th iteration of the \ABM{}. The proof is analogous to the proofs in Section~\ref{wMOQOanalysis}. 

\begin{theorem}
\label{bMOQOcomplexityTheorem}
The $i$-th iteration of the \ABM{} has

$\begin{array}{lr}
\text{space complexity} & O(2^n 2^{i/3}(n^2\log m /\log \alpha_U)^{l-1}), \\
\text{and time complexity} & O(j3^{n}2^i (n^2\log m/\log\alpha_U)^{3l-3}). 
\end{array}$
\end{theorem}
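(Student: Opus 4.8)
The plan is to recognize that the $i$-th iteration of the \ABM{} performs one call to {\codeF FindParetoPlans}$(Q,\alpha)$ with $\alpha=\alpha_U^{2^{-i/(3l-3)}}$, followed by one call to {\codeF SelectBest} and one evaluation of the stopping condition. The latter two steps are linear scans over the generated plan set $\mathcal{P}$ with $O(1)$ work per plan (cost-vector arithmetic is $O(1)$ since $l$ is constant), so they are dominated by the Pareto-set construction. Hence the complexity bounds of Section~\ref{wMOQOanalysis} transfer directly, the only change being that the \emph{internal} precision passed down inside {\codeF FindParetoPlans} is $\alpha_i=\alpha^{1/n}$ rather than $\alpha_U^{1/n}$.

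The heart of the argument is to re-run the counting of Lemma~\ref{plansBoundLemma} with this precision. Nothing in that proof uses the value of $\alpha_i$ beyond $\alpha_i\ge1$ (which holds because $\alpha_U\ge1$): plan costs for each objective are still $0$ or lower-bounded by an intrinsic constant (Observation~\ref{lowerBoundAssumption}) and upper-bounded by $O(m^{2n})$, so the quantization map $\delta$ collapses $\preceq_{\alpha_i}$-equivalent vectors and the number of stored plans per table set is $O\bigl((n\log_{\alpha_i}m)^{l-1}\bigr)$. I would then expand $\log_{\alpha_i}m=\log m/\log\alpha_i=n\log m/\log\alpha=2^{i/(3l-3)}\,n\log m/\log\alpha_U$, which gives $\mathcal{N}_{stored}(m,n)=2^{i(l-1)/(3l-3)}\bigl(n^2\log m/\log\alpha_U\bigr)^{l-1}$. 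The one arithmetic simplification that matters is $(l-1)/(3l-3)=1/3$ — this is exactly why the refinement formula carries $3l-3$ in its exponent — so the $2$-factor collapses to $2^{i/3}$.

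Substituting this $\mathcal{N}_{stored}(m,n)$ into the space bound $O(2^n\mathcal{N}_{stored}(m,n))$ of Theorem~\ref{wMOQOspaceTheorem} yields $O\bigl(2^n 2^{i/3}(n^2\log m/\log\alpha_U)^{l-1}\bigr)$, and substituting it into the time bound $O(j3^n\mathcal{N}^3_{stored}(m,n))$ established for the \AWM{} yields $O\bigl(j3^n 2^i(n^2\log m/\log\alpha_U)^{3l-3}\bigr)$, using $(2^{i/3})^3=2^i$ and $3(l-1)=3l-3$. These are the two claimed expressions.

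I do not expect a genuine obstacle here; the proof is a routine specialization of the \AWM{} analysis, as the theorem statement itself hints. The only point needing a moment of care is that $\alpha_i\to1$ as $i$ grows, so $\log_{\alpha_i}m$ diverges — but this divergence is precisely the $2^{i/(3l-3)}$ factor that the computation above isolates, and the cost bounds feeding into Lemma~\ref{plansBoundLemma} are independent of the precision, so the argument stays valid for every iteration.
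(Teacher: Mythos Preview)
Your proposal is correct and follows exactly the approach the paper intends: the paper's own proof consists of the single sentence ``The proof is analogous to the proofs in Section~\ref{wMOQOanalysis},'' and your write-up simply supplies the routine substitution of $\alpha_i=\alpha^{1/n}=\alpha_U^{2^{-i/(3l-3)}/n}$ into Lemma~\ref{plansBoundLemma} and Theorems~\ref{wMOQOspaceTheorem}--\ref{wMOQOspaceTheorem}+1 that this analogy requires. The arithmetic you carry out (in particular the observation that $(l-1)/(3l-3)=1/3$ is what makes the refinement exponent collapse to $2^{i/3}$) is precisely the computation the paper leaves implicit.
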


Assume that the time per iteration is \textit{proportional} to the worst-case complexity, or within a factor that does not depend on $i$ (but possibly on $n$, $m$, or $l$). Then the required time doubles from one iteration to the next, so the time of the last iteration is dominant. So the precision formula satisfies the second requirement and (approximately) the third, since decreasing iteration precision significantly slower would violate the second requirement. 

\begin{theorem}
The \ABM{} always terminates. 
\end{theorem}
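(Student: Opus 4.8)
The plan is to combine two elementary facts: the precision schedule $\alpha=\alpha_U^{\,2^{-i/(3l-3)}}$ is strictly decreasing in $i$ and converges to $1$ while never exceeding $\alpha_U$, and the plan space of a fixed query is finite, so only finitely many cost vectors --- hence finitely many weighted costs and finitely many ``bound-violation factors'' --- can ever occur. First I would record the trivial properties of the schedule: for $i\geq1$ the exponent $2^{-i/(3l-3)}$ lies strictly in $(0,1)$, so $1\leq\alpha\leq\alpha_U$ and $\alpha\to1$ as $i\to\infty$. Termination then follows provided I can show that the stopping test is necessarily met once $\alpha$ is close enough to $1$, and the key observation is that this argument works for whatever plan set $\mathcal{P}$ an iteration happens to produce; it does not rely on $\mathcal{P}$ being a good approximation.

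The heart of the argument is to show that, for $\alpha$ sufficiently near $1$, no plan $p\in\mathcal{P}$ can witness the negation of the stopping condition, i.e.\ satisfy both $\mathbf{c}(p)\preceq\alpha\mathbf{B}$ and $C_{\mathbf{W}}(\mathbf{c}(p))/\alpha<C_{\mathbf{W}}(\mathbf{c}(p_{opt}))/\alpha_U$. I would split on whether such a hypothetical witness actually respects the original bounds. If $\mathbf{c}(p)\preceq\mathbf{B}$, then $p$ lies in the set $P_{\mathbf{B}}$ over which {\codeF SelectBest} minimises the weighted cost, so $C_{\mathbf{W}}(\mathbf{c}(p_{opt}))\leq C_{\mathbf{W}}(\mathbf{c}(p))$; substituting this into the witness inequality forces $\alpha>\alpha_U$ (or, when $C_{\mathbf{W}}(\mathbf{c}(p))=0$, makes the strict inequality outright false), contradicting $\alpha\leq\alpha_U$. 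Hence any witness must strictly exceed $\mathbf{B}$ in some finitely-bounded objective while still respecting the relaxed bound $\alpha\mathbf{B}$; writing $\phi(p)$ for the largest ratio $\mathbf{c}(p)^o/\mathbf{B}^o$ over objectives with $0<\mathbf{B}^o<\infty$, this means $1<\phi(p)\leq\alpha$. Since there are only finitely many plans, the set of values $\phi(p)$ that exceed $1$ is finite, so its minimum $\phi^*$ is a constant strictly greater than $1$. As soon as $\alpha<\phi^*$ --- which happens after finitely many iterations because $\alpha\to1$ --- no plan can both exceed $\mathbf{B}$ and respect $\alpha\mathbf{B}$, no witness exists, and the \ABM{} halts. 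The degenerate cases (no plan exceeds $\mathbf{B}$ by a finite factor, or no plan at all respects $\mathbf{B}$, or $\mathbf{B}=\mathbf{\infty}$) are covered by the same reasoning, with the stopping test becoming vacuously or immediately true.

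I expect the main obstacle to be precisely this ``relaxed-bounds witness'': unlike the situation in the \AWM{} analysis, where nearness in cost space transfers directly to nearness in weighted cost, here a plan arbitrarily close to a bounds-respecting plan may itself violate $\mathbf{B}$, so there is no continuity of the approximate Pareto set in $\alpha$ to lean on. The resolution is to replace continuity by finiteness --- the achievable violation factors form a discrete set bounded away from $1$ --- which is exactly why the convergence $\alpha\to1$ suffices even though $\alpha$ never equals $1$. A secondary subtlety to keep in mind is that $p_{opt}$ itself need not respect $\mathbf{B}$ in a given iteration, so the entire argument must be phrased through the hypothetical witness $p$ and the optimality of {\codeF SelectBest} within $P_{\mathbf{B}}$, rather than through properties of $p_{opt}$ directly.
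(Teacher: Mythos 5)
Your proof is correct and follows essentially the same route as the paper's: the finiteness of the plan space yields a threshold strictly above $1$ below which no plan can respect the relaxed bounds $\alpha\mathbf{B}$ without respecting $\mathbf{B}$, and since the precision schedule decreases towards $1$, that threshold is reached after finitely many iterations, at which point the stopping condition must hold. Your explicit case split (witness respecting $\mathbf{B}$ ruled out via the optimality of {\codeF SelectBest} and $\alpha\leq\alpha_U$; witness violating $\mathbf{B}$ ruled out via the minimum violation factor $\phi^*>1$) merely spells out details the paper leaves implicit, and your remark that convergence of $\alpha$ to $1$ (not mere monotonicity) is what is really needed is a fair, slightly more careful reading of the same argument.
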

\begin{proof}
For a fixed \PB{} instance $I=\langle Q,\mathbf{W},\mathbf{B}\rangle$ and plan space, there is only a finite number of possible query plans. Therefore, there is an $\alpha>1$ such that no plan $p$ exists which satisfies $\mathbf{c}(p)\preceq\alpha\mathbf{B}$ but not $\mathbf{c}(p)\preceq\mathbf{B}$. The precision refinement formula is strictly monotonically decreasing in $i$ (iteration counter). So the aforementioned $\alpha$ is reached after a finite number of iterations. Then the best plan that respects the strict bounds is equivalent to the best plan that respects the relaxed bounds, so the termination condition is satisfied. 
\end{proof}

\section{Experimental Evaluation}
\label{experimentalSec}

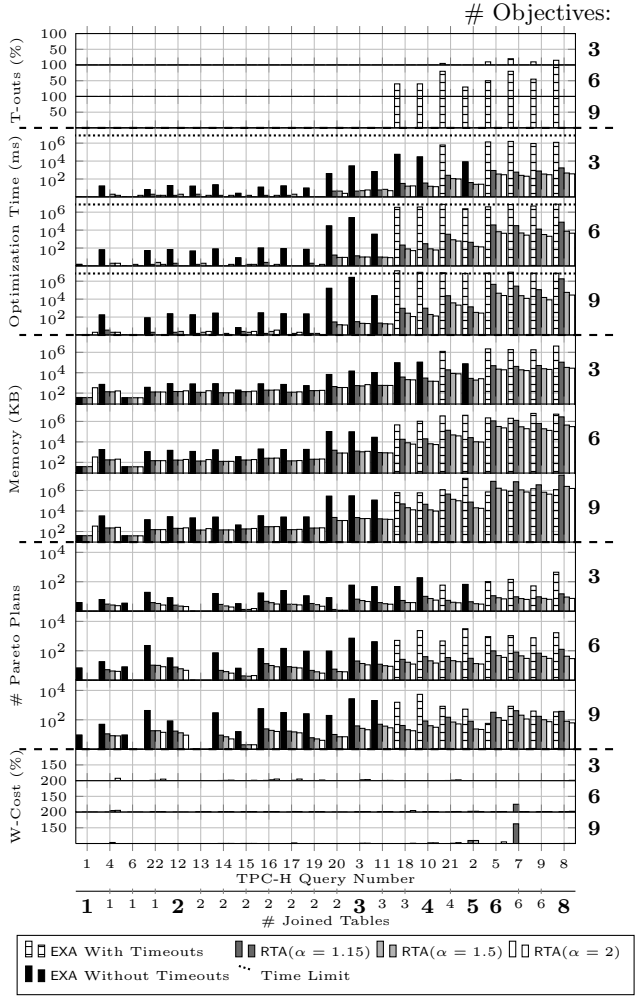
\begin{figure}[t!]
\centering

\begin{tikzpicture}
\begin{groupplot}[%
tiny,
width=\horScale*\benchWidth,height=2.5cm,
group style={ %
  group size=1 by 15, vertical sep=0cm,
  x descriptions at=edge bottom, 
  y descriptions at=edge left},
xmin=0.5, xmax=22.5, log origin y=0,
xlabel=TPC-H Query Number, xlabel shift={-0.15cm},
xtick={1, 2, 3, 4, 5, 6, 7, 8, 9, 10, 11, 12, 13, 14, 15, 16, 17, 18, 19, 20, 21, 22}, 
xticklabels={1, 4, 6, 22, 12, 13, 14, 15, 16, 17, 19, 20, 3, 11, 18, 10, 21, 2, 5, 7, 9, 8},
ymajorgrids, xmajorgrids, /pgf/bar width=2pt,
ybar=0pt,
cycle list name=benchmarkCycle]


\nextgroupplot[width=\horScale*\benchWidth,height=2cm, ymin=0, ymax=100,
ytick={50,100}]
\addplot table[header=false, x index=0, y index=\firstConfIndex] {benchmark/results/\threeObjWeighted/timeouts.txt};
\addplot table[header=false, x index=0, y index=\secondConfIndex] {benchmark/results/\threeObjWeighted/timeouts.txt};
\addplot table[header=false, x index=0, y index=\thirdConfIndex] {benchmark/results/\threeObjWeighted/timeouts.txt};
\addplot table[header=false, x index=0, y index=\fourthConfIndex] {benchmark/results/\threeObjWeighted/timeouts.txt};

\nextgroupplot[ylabel=T-outs (\%),
width=\horScale*\benchWidth,height=2cm, ymin=0, ymax=100,
ytick={50,100}]
\addplot table[header=false, x index=0, y index=\firstConfIndex] {benchmark/results/\sixObjWeighted/timeouts.txt};
\addplot table[header=false, x index=0, y index=\secondConfIndex] {benchmark/results/\sixObjWeighted/timeouts.txt};
\addplot table[header=false, x index=0, y index=\thirdConfIndex] {benchmark/results/\sixObjWeighted/timeouts.txt};
\addplot table[header=false, x index=0, y index=\fourthConfIndex] {benchmark/results/\sixObjWeighted/timeouts.txt};

\nextgroupplot[width=\horScale*\benchWidth,height=2cm, ymin=0, ymax=100,
ytick={50,100}]
\addplot table[header=false, x index=0, y index=\firstConfIndex] {benchmark/results/\nineObjWeighted/timeouts.txt};
\addplot table[header=false, x index=0, y index=\secondConfIndex] {benchmark/results/\nineObjWeighted/timeouts.txt};
\addplot table[header=false, x index=0, y index=\thirdConfIndex] {benchmark/results/\nineObjWeighted/timeouts.txt};
\addplot table[header=false, x index=0, y index=\fourthConfIndex] {benchmark/results/\nineObjWeighted/timeouts.txt};


\nextgroupplot[legend to name=wmoqoTimesLegend, legend style={anchor=north, legend columns=4, legend cell align=left},
ymode=log, ymin=1, ytickten={2, 4, 6}, 
ymax=50000000]
\addplot table[header=false, x index=0, y index=\firstConfIndex] {benchmark/results/\threeObjWeighted/time.txt};
\addplot table[header=false, x index=0, y index=\secondConfIndex] {benchmark/results/\threeObjWeighted/time.txt};
\addplot table[header=false, x index=0, y index=\thirdConfIndex] {benchmark/results/\threeObjWeighted/time.txt};
\addplot table[header=false, x index=0, y index=\fourthConfIndex] {benchmark/results/\threeObjWeighted/time.txt};

\addplot table[header=false, x index=0, y index=\firstConfIndex] {benchmark/results/\threeObjWeighted/timeNT.txt};
\addplot [sharp plot, thick, densely dotted] coordinates {(0,7200000) (23,7200000)};



\legend{
\AG{} With Timeouts,
\AWM{}($\alpha=1.15$),
\AWM{}($\alpha=1.5$),
\AWM{}($\alpha=2$), 
\AG{} Without Timeouts, Time Limit}

\nextgroupplot[ylabel=Optimization Time (ms),ymode=log,
ymin=1, ytickten={2, 4, 6}, 
ymax=50000000]
\addplot table[header=false, x index=0, y index=\firstConfIndex] {benchmark/results/\sixObjWeighted/time.txt};
\addplot table[header=false, x index=0, y index=\secondConfIndex] {benchmark/results/\sixObjWeighted/time.txt};
\addplot table[header=false, x index=0, y index=\thirdConfIndex] {benchmark/results/\sixObjWeighted/time.txt};
\addplot table[header=false, x index=0, y index=\fourthConfIndex] {benchmark/results/\sixObjWeighted/time.txt};

\addplot table[header=false, x index=0, y index=\firstConfIndex] {benchmark/results/\sixObjWeighted/timeNT.txt};
\addplot [sharp plot, thick, densely dotted] coordinates {(0,7200000) (23,7200000)};

\nextgroupplot[ymode=log,
ymin=1, ytickten={2, 4, 6}, 
ymax=50000000]
\addplot table[header=false, x index=0, y index=\firstConfIndex] {benchmark/results/\nineObjWeighted/time.txt};
\addplot table[header=false, x index=0, y index=\secondConfIndex] {benchmark/results/\nineObjWeighted/time.txt};
\addplot table[header=false, x index=0, y index=\thirdConfIndex] {benchmark/results/\nineObjWeighted/time.txt};
\addplot table[header=false, x index=0, y index=\fourthConfIndex] {benchmark/results/\nineObjWeighted/time.txt};

\addplot table[header=false, x index=0, y index=\firstConfIndex] {benchmark/results/\nineObjWeighted/timeNT.txt};
\addplot [sharp plot, thick, densely dotted] coordinates {(0,7200000) (23,7200000)};


\nextgroupplot[ymode=log,
ymax=50000000,ytickten={2,4,6}]
\addplot table[header=false, x index=0, y index=\firstConfIndex] {benchmark/results/\threeObjWeighted/memory.txt};
\addplot table[header=false, x index=0, y index=\secondConfIndex] {benchmark/results/\threeObjWeighted/memory.txt};
\addplot table[header=false, x index=0, y index=\thirdConfIndex] {benchmark/results/\threeObjWeighted/memory.txt};
\addplot table[header=false, x index=0, y index=\fourthConfIndex] {benchmark/results/\threeObjWeighted/memory.txt};
\addplot table[header=false, x index=0, y index=\firstConfIndex] {benchmark/results/\threeObjWeighted/memoryNT.txt};

\nextgroupplot[ylabel=Memory (KB),ymode=log,
ymax=50000000,ytickten={2,4,6}]
\addplot table[header=false, x index=0, y index=\firstConfIndex] {benchmark/results/\sixObjWeighted/memory.txt};
\addplot table[header=false, x index=0, y index=\secondConfIndex] {benchmark/results/\sixObjWeighted/memory.txt};
\addplot table[header=false, x index=0, y index=\thirdConfIndex] {benchmark/results/\sixObjWeighted/memory.txt};
\addplot table[header=false, x index=0, y index=\fourthConfIndex] {benchmark/results/\sixObjWeighted/memory.txt};
\addplot table[header=false, x index=0, y index=\firstConfIndex] {benchmark/results/\sixObjWeighted/memoryNT.txt};

\nextgroupplot[ymode=log,
ymax=50000000,ytickten={2,4,6}]
\addplot table[header=false, x index=0, y index=\firstConfIndex] {benchmark/results/\nineObjWeighted/memory.txt};
\addplot table[header=false, x index=0, y index=\secondConfIndex] {benchmark/results/\nineObjWeighted/memory.txt};
\addplot table[header=false, x index=0, y index=\thirdConfIndex] {benchmark/results/\nineObjWeighted/memory.txt};
\addplot table[header=false, x index=0, y index=\fourthConfIndex] {benchmark/results/\nineObjWeighted/memory.txt};
\addplot table[header=false, x index=0, y index=\firstConfIndex] {benchmark/results/\nineObjWeighted/memoryNT.txt};


\nextgroupplot[width=\horScale*\benchWidth,height=2.5cm,ymode=log,
ymin=1,ytickten={2,4},ymax=50000]
\addplot table[header=false, x index=0, y index=\firstConfIndex] {benchmark/results/\threeObjWeighted/paths.txt};
\addplot table[header=false, x index=0, y index=\secondConfIndex] {benchmark/results/\threeObjWeighted/paths.txt};
\addplot table[header=false, x index=0, y index=\thirdConfIndex] {benchmark/results/\threeObjWeighted/paths.txt};
\addplot table[header=false, x index=0, y index=\fourthConfIndex] {benchmark/results/\threeObjWeighted/paths.txt};
\addplot table[header=false, x index=0, y index=\firstConfIndex] {benchmark/results/\threeObjWeighted/pathsNT.txt};

\nextgroupplot[ylabel=\# Pareto Plans, width=\horScale*\benchWidth,height=2.5cm,ymode=log,
ymin=1,ytickten={2,4},ymax=50000]
\addplot table[header=false, x index=0, y index=\firstConfIndex] {benchmark/results/\sixObjWeighted/paths.txt};
\addplot table[header=false, x index=0, y index=\secondConfIndex] {benchmark/results/\sixObjWeighted/paths.txt};
\addplot table[header=false, x index=0, y index=\thirdConfIndex] {benchmark/results/\sixObjWeighted/paths.txt};
\addplot table[header=false, x index=0, y index=\fourthConfIndex] {benchmark/results/\sixObjWeighted/paths.txt};
\addplot table[header=false, x index=0, y index=\firstConfIndex] {benchmark/results/\sixObjWeighted/pathsNT.txt};

\nextgroupplot[width=\horScale*\benchWidth,height=2.5cm,ymode=log,
ymin=1,ytickten={2,4},ymax=50000]
\addplot table[header=false, x index=0, y index=\firstConfIndex] {benchmark/results/\nineObjWeighted/paths.txt};
\addplot table[header=false, x index=0, y index=\secondConfIndex] {benchmark/results/\nineObjWeighted/paths.txt};
\addplot table[header=false, x index=0, y index=\thirdConfIndex] {benchmark/results/\nineObjWeighted/paths.txt};
\addplot table[header=false, x index=0, y index=\fourthConfIndex] {benchmark/results/\nineObjWeighted/paths.txt};
\addplot table[header=false, x index=0, y index=\firstConfIndex] {benchmark/results/\nineObjWeighted/pathsNT.txt};


\nextgroupplot[width=\horScale*\benchWidth,height=2cm,ymin=100,ymax=200,ymode=normal,
ytick={150}]
\addplot coordinates {(0,0)};
\addplot table[header=false, x index=0, y index=\secondConfIndex] {benchmark/results/\threeObjWeighted/cost.txt};
\addplot table[header=false, x index=0, y index=\thirdConfIndex] {benchmark/results/\threeObjWeighted/cost.txt};
\addplot table[header=false, x index=0, y index=\fourthConfIndex] {benchmark/results/\threeObjWeighted/cost.txt};

\nextgroupplot[ylabel=W-Cost (\%), width=\horScale*\benchWidth,height=2cm,ymin=100,ymax=200,ymode=normal,
ytick={150,200}]
\addplot coordinates {(0,0)};
\addplot table[header=false, x index=0, y index=\secondConfIndex] {benchmark/results/\sixObjWeighted/cost.txt};
\addplot table[header=false, x index=0, y index=\thirdConfIndex] {benchmark/results/\sixObjWeighted/cost.txt};
\addplot table[header=false, x index=0, y index=\fourthConfIndex] {benchmark/results/\sixObjWeighted/cost.txt};

\nextgroupplot[width=\horScale*\benchWidth,height=2cm,ymin=100,
ymax=200,ymode=normal,
ytick={150,200}]
\addplot coordinates {(0,0)};
\addplot table[header=false, x index=0, y index=\secondConfIndex] {benchmark/results/\nineObjWeighted/cost.txt};
\addplot table[header=false, x index=0, y index=\thirdConfIndex] {benchmark/results/\nineObjWeighted/cost.txt};
\addplot table[header=false, x index=0, y index=\fourthConfIndex] {benchmark/results/\nineObjWeighted/cost.txt};
\end{groupplot}

\draw[thick, dashed] (group c1r3.south east) + (0.5cm, 0) -- (group c1r3.south west) -- +(-0.75cm,0);
\draw[thick, dashed] (group c1r6.south east) + (0.5cm, 0) -- (group c1r6.south west) -- +(-0.75cm,0);
\draw[thick, dashed] (group c1r9.south east) + (0.5cm, 0) -- (group c1r9.south west) -- +(-0.75cm,0);
\draw[thick, dashed] (group c1r12.south east) + (0.5cm, 0) -- (group c1r12.south west) -- +(-0.75cm,0);

\node at (group c1r1.north east) [above, xshift=-0.5cm, yshift=0cm] {\# Objectives:};

\node at (group c1r1.east) [xshift=0.25cm] {\small{\textbf{3}}};
\node at (group c1r2.east) [xshift=0.25cm] {\small{\textbf{6}}};
\node at (group c1r3.east) [xshift=0.25cm] {\small{\textbf{9}}};

\node at (group c1r4.east) [xshift=0.25cm] {\small{\textbf{3}}};
\node at (group c1r5.east) [xshift=0.25cm] {\small{\textbf{6}}};
\node at (group c1r6.east) [xshift=0.25cm] {\small{\textbf{9}}};

\node at (group c1r7.east) [xshift=0.25cm] {\small{\textbf{3}}};
\node at (group c1r8.east) [xshift=0.25cm] {\small{\textbf{6}}};
\node at (group c1r9.east) [xshift=0.25cm] {\small{\textbf{9}}};

\node at (group c1r10.east) [xshift=0.25cm] {\small{\textbf{3}}};
\node at (group c1r11.east) [xshift=0.25cm] {\small{\textbf{6}}};
\node at (group c1r12.east) [xshift=0.25cm] {\small{\textbf{9}}};

\node at (group c1r13.east) [xshift=0.25cm] {\small{\textbf{3}}};
\node at (group c1r14.east) [xshift=0.25cm] {\small{\textbf{6}}};
\node at (group c1r15.east) [xshift=0.25cm] {\small{\textbf{9}}};

\node at (group c1r15.south) [below, yshift=-1.1cm] {\ref{wmoqoTimesLegend}};

\node (left) at (group c1r15.south west) [above=-0.75cm] {}; 
\node (right) at (group c1r15.south east) [above=-0.75cm] {};

\draw (left.center) -- (right.center);
\def\tableCountArray{{1, 1, 1, 1, 2, 2, 2, 2, 2, 2, 2, 2, 3, 3, 3, 4, 4, 5, 6, 6, 6, 8}}
\foreach \x in {1, 2, 3, 5, 6, 7, 8, 9, 10, 11, 13, 14, 16, 19, 20}{
  \draw [gray] ($(left.center)!1/22*\x+1/44!(right.center)$) ++ (0,1pt) -- + (0,-2pt);
  \node [font=\tiny] at ($(left.center)!1/22*\x+1/44!(right.center)$) [below=0.0cm] {\pgfmathparse{\tableCountArray[\x]}\pgfmathresult};
}

\foreach \x in {0,4,12,15,17,18,21}{
  \node [font=\small] at ($(left.center)!1/22*\x+1/44!(right.center)$) [below=0.0cm] {\pgfmathparse{\tableCountArray[\x]}\textbf{\pgfmathresult}};
}

\node[font=\tiny] at (group c1r15.south) [below=0.85cm] {\# Joined Tables};

\end{tikzpicture}



\vspace{-5mm}

\caption{Optimizer performance comparison for weighted MOQO using timeout of two hours}
\label{wMOQOresultsFig}
\end{figure}


\label{wmoqoExperimentsSub}

We experimentally compare the approximation schemes against the \AG{}. The algorithms were implemented within the system described in Section~\ref{implementationSec}. A timeout of two hours was specified, using the technique outlined in Section~\ref{sub:optimalExperiments}. The experiments were executed on the hardware platform described in Section~\ref{sub:optimalExperiments}. 
We generated 20 test cases for each TPC-H query and three, six, and nine objectives respectively. Every test case is characterized by a set of considered objectives (selected randomly out of the nine implemented objectives), by weights on the selected objectives (chosen randomly from $[0,1]$ with uniform distribution), and (only for bounded MOQO) by bounds on a subset of the selected objectives. Bounds for objectives with a-priori bounded value domain (e.g., tuple loss with domain $[0,1]$) are chosen with uniform distribution from that domain. Bounds for objectives with non-bounded value domains (e.g., time) are chosen by multiplying the minimal possible value for the given objective and query by a factor chosen from $[1,2]$ with uniform distribution. 
 
Figure~\ref{wMOQOresultsFig} compares the performance of the \AG{} and the \AWM{} with $\alpha\in\{1.15,1.5,2\}$. Figure~\ref{wMOQOresultsFig} shows for each TPC-H query and each number of objectives \textit{i)}~the percentage of test cases that resulted in a timeout, and arithmetic average values for the metrics \textit{ii)}~optimization time (in milliseconds), \textit{iii})~allocated memory during optimization (in kilobytes), \textit{iv)}~number of Pareto plans for the last table set that was treated completely (before a timeout or before optimization finished), and \textit{v)}~weighted cost of the generated plan (as percentage of the optimal value over the plans generated by all algorithms for the same test case). Queries are ordered on the x-axis according to the maximal number of tables joined in any of their from-clauses as this relates to the search space size (with the caveats mentioned in Section~\ref{sub:optimalExperiments}). The time limit is marked by a dotted line in the subfigure showing optimization times. The fill pattern of the bars representing results for the \AG{} varies depending on whether the \AG{} had at least one timeout for the corresponding query and the corresponding number of objectives (the \AWM{} did not incur any timeouts). If the \AG{} had timeouts, then the reported values for time and memory consumption are lower bounds on the corresponding values for a completed optimizer run. 

The search space size correlates with the number of tables to join, and the number of objectives influences how many plans can be pruned during optimization. Therefore, the  percentage of timeouts (for the \AG{}), the optimization time, and the memory consumption all tend to increase in the number of objectives and the number of joined tables, as long as no timeouts distort the results. The \AG{} occasionally has timeouts already when considering only three objectives. For nine objectives, the \AG{} is not able to solve a single test case within the limit of two hours for queries that join more than three tables. Choices related to join order, operator selection, table sample density, and parallelization create a search space of considerable size, even for only four join tables. We have seen in Section~\ref{optimalEvalSec} that exact optimization takes less than 0.1 seconds despite the size of the search space, as long as only one objective is considered. Considering multiple objectives makes exact pruning however ineffective and leads to the high computational overhead of the \AG{}. The \AWM{} is often several orders of magnitude faster than the \AG{}. For nine objectives, the \AWM{} with $\alpha=1.15$ generates for instance near-optimal plans for \mbox{TPC-H} query~2 within less than 1.5 seconds average time. The \AG{} reaches the timeout of two hours for all 20 test cases. Optimization time and memory footprint decrease with increasing $\alpha$. This lets us expect that the \AWM{} can in principle scale to any MOQO problem when using an appropriate value for $\alpha$. 

The average quality of the plans produced by the \AWM{} is often significantly better than the worst case guarantees. Even for $\alpha=2$, the \AWM{} generates plans with an average cost overhead of below 1\% (100 times better than the theoretical bound) for 19 out of the 22 \mbox{TPC-H} queries. The Postgres optimizer selects locally optimal plans for the subqueries within a query. We left this mechanism in place as justified in Section~\ref{planSpaceSub}, even if it weakens the formal approximation guarantees for queries that contain subqueries (\mbox{TPC-H} queries 2, 4, 7, 8, 9, 11, 13, 15, 16, 17, 18, 20, 21, 22). In practice, the approximation guarantees were only violated in one case (\mbox{TPC-H} query~7) and only for specific choices of $\alpha$ ($\alpha=1.15$). 

Figure~\ref{bMOQOresultsFig} shows the results for bounded MOQO. The \AG{} is compared against the \ABM{} (instead of the \AWM{}) since only the \ABM{} guarantees to generate query plans that respect all hard bounds if such plans exist. Optimization always considers all nine objectives while the number of bounds varies between three and nine. Figure~\ref{bMOQOresultsFig} reports the number of iterations (instead of the number of Pareto plans), the reported numbers for memory consumption refer to the memory reserved in the last iteration (memory that was allocated before can be reused). The performance of the \AG{} is insensitive to the number of bounds. The performance of the \ABM{} varies with the number of bounds. Most significantly, time and memory consumption during optimization tend to be higher when hard bounds are set in comparison to the case without bonds. This can be seen by comparing Figure~\ref{bMOQOresultsFig} with Figure~\ref{wMOQOresultsFig}, as the \ABM{} behaves exactly like the \AWM{} if no bounds are specified. The reason is that the \ABM{} may have to choose a much smaller internal approximation factor than the \AWM{}, in order to verify if the best generated query plan is near-optimal among all plans respecting the bounds. The performance gap between approximate and exact MOQO is still significant: Summing over all test cases for bounded MOQO, the \AG{} had 464 timeouts while each \ABM{} instance had at most 4 timeouts. The total optimization time was more than 1200 hours for the \AG{} and less than 15 hours for the \ABM{} with $\alpha=1.15$. 

In some cases, the number of iterations of the \ABM{} increases with the user-defined approximation factor. If hard bounds are present then the internal approximation precision that is required to guarantee a near-optimal solution does not necessarily correlate with the approximation precision chosen by the user. However, even if the number of iterations increases, the total optimization time is not influenced significantly (except for queries with very low total optimization time where overhead by repeated query preprocessing can become a non-negligible component of the optimization time). This indicates that the time required for the first iterations is negligible compared with the time required for the last ones, which was an important criterion in the selection of our precision refinement policy.


\label{bmoqoExperimentsSub}

\begin{figure}[t!]
\begin{tikzpicture}
\begin{groupplot}[%
tiny,
width=\horScale*\benchWidth,height=2.5cm,
group style={ %
  group size=1 by 15, vertical sep=0cm,
  x descriptions at=edge bottom, 
  y descriptions at=edge left},
xmin=0.5, xmax=22.5, log origin y=0,
xlabel=TPC-H Query Number, xlabel shift={-0.15cm},
xtick={1, 2, 3, 4, 5, 6, 7, 8, 9, 10, 11, 12, 13, 14, 15, 16, 17, 18, 19, 20, 21, 22}, 
xticklabels={1, 4, 6, 22, 12, 13, 14, 15, 16, 17, 19, 20, 3, 11, 18, 10, 21, 2, 5, 7, 9, 8},
ymajorgrids, xmajorgrids, /pgf/bar width=2pt,
ybar=0pt,
cycle list name=benchmarkCycle]


\nextgroupplot[width=\horScale*\benchWidth,height=2cm, ymin=0, ymax=100,
ytick={50,100}]
\addplot table[header=false, x index=0, y index=\firstConfIndex] {benchmark/results/\threeObjBounded/timeouts.txt};
\addplot table[header=false, x index=0, y index=\secondConfIndex] {benchmark/results/\threeObjBounded/timeouts.txt};
\addplot table[header=false, x index=0, y index=\thirdConfIndex] {benchmark/results/\threeObjBounded/timeouts.txt};
\addplot table[header=false, x index=0, y index=\fourthConfIndex] {benchmark/results/\threeObjBounded/timeouts.txt};

\nextgroupplot[ylabel=T-outs (\%),
width=\horScale*\benchWidth,height=2cm, ymin=0, ymax=100,
ytick={50,100}]
\addplot table[header=false, x index=0, y index=\firstConfIndex] {benchmark/results/\sixObjBounded/timeouts.txt};
\addplot table[header=false, x index=0, y index=\secondConfIndex] {benchmark/results/\sixObjBounded/timeouts.txt};
\addplot table[header=false, x index=0, y index=\thirdConfIndex] {benchmark/results/\sixObjBounded/timeouts.txt};
\addplot table[header=false, x index=0, y index=\fourthConfIndex] {benchmark/results/\sixObjBounded/timeouts.txt};

\nextgroupplot[width=\horScale*\benchWidth,height=2cm, ymin=0, ymax=100,
ytick={50,100}]
\addplot table[header=false, x index=0, y index=\firstConfIndex] {benchmark/results/\nineObjBounded/timeouts.txt};
\addplot table[header=false, x index=0, y index=\secondConfIndex] {benchmark/results/\nineObjBounded/timeouts.txt};
\addplot table[header=false, x index=0, y index=\thirdConfIndex] {benchmark/results/\nineObjBounded/timeouts.txt};
\addplot table[header=false, x index=0, y index=\fourthConfIndex] {benchmark/results/\nineObjBounded/timeouts.txt};


\nextgroupplot[legend to name=wmoqoTimesLegend, legend style={anchor=north, legend columns=4, legend cell align=left},
ymode=log, ymin=1, ytickten={2, 4, 6}, 
ymax=50000000]
\addplot table[header=false, x index=0, y index=\firstConfIndex] {benchmark/results/\threeObjBounded/time.txt};
\addplot table[header=false, x index=0, y index=\secondConfIndex] {benchmark/results/\threeObjBounded/time.txt};
\addplot table[header=false, x index=0, y index=\thirdConfIndex] {benchmark/results/\threeObjBounded/time.txt};
\addplot table[header=false, x index=0, y index=\fourthConfIndex] {benchmark/results/\threeObjBounded/time.txt};

\addplot table[header=false, x index=0, y index=\firstConfIndex] {benchmark/results/\threeObjBounded/timeNT.txt};
\addplot [sharp plot, thick, densely dotted] coordinates {(0,7200000) (23,7200000)};




\legend{
\AG{} With Timeouts,
\ABM{}($\alpha=1.15$),
\ABM{}($\alpha=1.5$),
\ABM{}($\alpha=2$), 
\AG{} Without Timeouts, Time Limit}

\nextgroupplot[ylabel=Optimization Time (ms),ymode=log,
ymin=1, ytickten={2, 4, 6}, 
ymax=50000000]
\addplot table[header=false, x index=0, y index=\firstConfIndex] {benchmark/results/\sixObjBounded/time.txt};
\addplot table[header=false, x index=0, y index=\secondConfIndex] {benchmark/results/\sixObjBounded/time.txt};
\addplot table[header=false, x index=0, y index=\thirdConfIndex] {benchmark/results/\sixObjBounded/time.txt};
\addplot table[header=false, x index=0, y index=\fourthConfIndex] {benchmark/results/\sixObjBounded/time.txt};

\addplot table[header=false, x index=0, y index=\firstConfIndex] {benchmark/results/\sixObjBounded/timeNT.txt};
\addplot [sharp plot, thick, densely dotted] coordinates {(0,7200000) (23,7200000)};


\nextgroupplot[ymode=log,
ymin=1, ytickten={2, 4, 6}, 
ymax=50000000]
\addplot table[header=false, x index=0, y index=\firstConfIndex] {benchmark/results/\nineObjBounded/time.txt};
\addplot table[header=false, x index=0, y index=\secondConfIndex] {benchmark/results/\nineObjBounded/time.txt};
\addplot table[header=false, x index=0, y index=\thirdConfIndex] {benchmark/results/\nineObjBounded/time.txt};
\addplot table[header=false, x index=0, y index=\fourthConfIndex] {benchmark/results/\nineObjBounded/time.txt};

\addplot table[header=false, x index=0, y index=\firstConfIndex] {benchmark/results/\nineObjBounded/timeNT.txt};
\addplot [sharp plot, thick, densely dotted] coordinates {(0,7200000) (23,7200000)};



\nextgroupplot[ymode=log,
ymax=50000000,ytickten={2,4,6}]
\addplot table[header=false, x index=0, y index=\firstConfIndex] {benchmark/results/\threeObjBounded/LImemory.txt};
\addplot table[header=false, x index=0, y index=\secondConfIndex] {benchmark/results/\threeObjBounded/LImemory.txt};
\addplot table[header=false, x index=0, y index=\thirdConfIndex] {benchmark/results/\threeObjBounded/LImemory.txt};
\addplot table[header=false, x index=0, y index=\fourthConfIndex] {benchmark/results/\threeObjBounded/LImemory.txt};
\addplot table[header=false, x index=0, y index=\firstConfIndex] {benchmark/results/\threeObjBounded/LImemoryNT.txt};

\nextgroupplot[ylabel=Memory (KB),ymode=log,
ymax=50000000,ytickten={2,4,6}]
\addplot table[header=false, x index=0, y index=\firstConfIndex] {benchmark/results/\sixObjBounded/LImemory.txt};
\addplot table[header=false, x index=0, y index=\secondConfIndex] {benchmark/results/\sixObjBounded/LImemory.txt};
\addplot table[header=false, x index=0, y index=\thirdConfIndex] {benchmark/results/\sixObjBounded/LImemory.txt};
\addplot table[header=false, x index=0, y index=\fourthConfIndex] {benchmark/results/\sixObjBounded/LImemory.txt};
\addplot table[header=false, x index=0, y index=\firstConfIndex] {benchmark/results/\sixObjBounded/LImemoryNT.txt};

\nextgroupplot[ymode=log,
ymax=50000000,ytickten={2,4,6}]
\addplot table[header=false, x index=0, y index=\firstConfIndex] {benchmark/results/\nineObjBounded/LImemory.txt};
\addplot table[header=false, x index=0, y index=\secondConfIndex] {benchmark/results/\nineObjBounded/LImemory.txt};
\addplot table[header=false, x index=0, y index=\thirdConfIndex] {benchmark/results/\nineObjBounded/LImemory.txt};
\addplot table[header=false, x index=0, y index=\fourthConfIndex] {benchmark/results/\nineObjBounded/LImemory.txt};
\addplot table[header=false, x index=0, y index=\firstConfIndex] {benchmark/results/\nineObjBounded/LImemoryNT.txt};


\nextgroupplot[width=\horScale*\benchWidth,height=2.25cm,ymode=normal,ymax=120]
\addplot table[header=false, x index=0, y index=\firstConfIndex] {benchmark/results/\threeObjBounded/iterations.txt};
\addplot table[header=false, x index=0, y index=\secondConfIndex] {benchmark/results/\threeObjBounded/iterations.txt};
\addplot table[header=false, x index=0, y index=\thirdConfIndex] {benchmark/results/\threeObjBounded/iterations.txt};
\addplot table[header=false, x index=0, y index=\fourthConfIndex] {benchmark/results/\threeObjBounded/iterations.txt};

\nextgroupplot[ylabel=\# Iterations, width=\horScale*\benchWidth,height=2.25cm,ymode=normal,ymax=120]
\addplot table[header=false, x index=0, y index=\firstConfIndex] {benchmark/results/\sixObjBounded/iterations.txt};
\addplot table[header=false, x index=0, y index=\secondConfIndex] {benchmark/results/\sixObjBounded/iterations.txt};
\addplot table[header=false, x index=0, y index=\thirdConfIndex] {benchmark/results/\sixObjBounded/iterations.txt};
\addplot table[header=false, x index=0, y index=\fourthConfIndex] {benchmark/results/\sixObjBounded/iterations.txt};

\nextgroupplot[width=\horScale*\benchWidth,height=2.25cm,ymode=normal,ymax=120]
\addplot table[header=false, x index=0, y index=\firstConfIndex] {benchmark/results/\nineObjBounded/iterations.txt};
\addplot table[header=false, x index=0, y index=\secondConfIndex] {benchmark/results/\nineObjBounded/iterations.txt};
\addplot table[header=false, x index=0, y index=\thirdConfIndex] {benchmark/results/\nineObjBounded/iterations.txt};
\addplot table[header=false, x index=0, y index=\fourthConfIndex] {benchmark/results/\nineObjBounded/iterations.txt};


\nextgroupplot[width=\horScale*\benchWidth,height=2cm,ymin=100,ymax=200,ymode=normal,
ytick={150}]
\addplot coordinates {(0,0)};
\addplot table[header=false, x index=0, y index=\secondConfIndex] {benchmark/results/\threeObjBounded/cost.txt};
\addplot table[header=false, x index=0, y index=\thirdConfIndex] {benchmark/results/\threeObjBounded/cost.txt};
\addplot table[header=false, x index=0, y index=\fourthConfIndex] {benchmark/results/\threeObjBounded/cost.txt};

\nextgroupplot[ylabel=W-Cost (\%), width=\horScale*\benchWidth,height=2cm,ymin=100,ymax=200,ymode=normal,
ytick={150,200}]
\addplot coordinates {(0,0)};
\addplot table[header=false, x index=0, y index=\secondConfIndex] {benchmark/results/\sixObjBounded/cost.txt};
\addplot table[header=false, x index=0, y index=\thirdConfIndex] {benchmark/results/\sixObjBounded/cost.txt};
\addplot table[header=false, x index=0, y index=\fourthConfIndex] {benchmark/results/\sixObjBounded/cost.txt};

\nextgroupplot[width=\horScale*\benchWidth,height=2cm,ymin=100,
ymax=200,ymode=normal,
ytick={150,200}]
\addplot coordinates {(0,0)};
\addplot table[header=false, x index=0, y index=\secondConfIndex] {benchmark/results/\nineObjBounded/cost.txt};
\addplot table[header=false, x index=0, y index=\thirdConfIndex] {benchmark/results/\nineObjBounded/cost.txt};
\addplot table[header=false, x index=0, y index=\fourthConfIndex] {benchmark/results/\nineObjBounded/cost.txt};
\end{groupplot}

\draw[thick, dashed] (group c1r3.south east) + (0.5cm, 0) -- (group c1r3.south west) -- +(-0.75cm,0);
\draw[thick, dashed] (group c1r6.south east) + (0.5cm, 0) -- (group c1r6.south west) -- +(-0.75cm,0);
\draw[thick, dashed] (group c1r9.south east) + (0.5cm, 0) -- (group c1r9.south west) -- +(-0.75cm,0);
\draw[thick, dashed] (group c1r12.south east) + (0.5cm, 0) -- (group c1r12.south west) -- +(-0.75cm,0);

\node at (group c1r1.north east) [above, xshift=-0.5cm, yshift=0cm] {\# Bounds:};

\node at (group c1r1.east) [xshift=0.25cm] {\small{\textbf{3}}};
\node at (group c1r2.east) [xshift=0.25cm] {\small{\textbf{6}}};
\node at (group c1r3.east) [xshift=0.25cm] {\small{\textbf{9}}};

\node at (group c1r4.east) [xshift=0.25cm] {\small{\textbf{3}}};
\node at (group c1r5.east) [xshift=0.25cm] {\small{\textbf{6}}};
\node at (group c1r6.east) [xshift=0.25cm] {\small{\textbf{9}}};

\node at (group c1r7.east) [xshift=0.25cm] {\small{\textbf{3}}};
\node at (group c1r8.east) [xshift=0.25cm] {\small{\textbf{6}}};
\node at (group c1r9.east) [xshift=0.25cm] {\small{\textbf{9}}};

\node at (group c1r10.east) [xshift=0.25cm] {\small{\textbf{3}}};
\node at (group c1r11.east) [xshift=0.25cm] {\small{\textbf{6}}};
\node at (group c1r12.east) [xshift=0.25cm] {\small{\textbf{9}}};

\node at (group c1r13.east) [xshift=0.25cm] {\small{\textbf{3}}};
\node at (group c1r14.east) [xshift=0.25cm] {\small{\textbf{6}}};
\node at (group c1r15.east) [xshift=0.25cm] {\small{\textbf{9}}};

\node at (group c1r15.south) [below, yshift=-1.1cm] {\ref{wmoqoTimesLegend}};

\node (left) at (group c1r15.south west) [above=-0.75cm] {}; 
\node (right) at (group c1r15.south east) [above=-0.75cm] {};

\draw (left.center) -- (right.center);
\def\tableCountArray{{1, 1, 1, 1, 2, 2, 2, 2, 2, 2, 2, 2, 3, 3, 3, 4, 4, 5, 6, 6, 6, 8}}
\foreach \x in {1, 2, 3, 5, 6, 7, 8, 9, 10, 11, 13, 14, 16, 19, 20}{
  \draw [gray] ($(left.center)!1/22*\x+1/44!(right.center)$) ++ (0,1pt) -- + (0,-2pt);
  \node [font=\tiny] at ($(left.center)!1/22*\x+1/44!(right.center)$) [below=0.0cm] {\pgfmathparse{\tableCountArray[\x]}\pgfmathresult};
}

\foreach \x in {0,4,12,15,17,18,21}{
  \node [font=\small] at ($(left.center)!1/22*\x+1/44!(right.center)$) [below=0.0cm] {\pgfmathparse{\tableCountArray[\x]}\textbf{\pgfmathresult}};
}

\node[font=\tiny] at (group c1r15.south) [below=0.85cm] {\# Joined Tables};

\end{tikzpicture}



\vspace{-5mm}

\caption{Optimizer performance comparison for bounded MOQO using timeout of two hours}
\label{bMOQOresultsFig}
\end{figure}
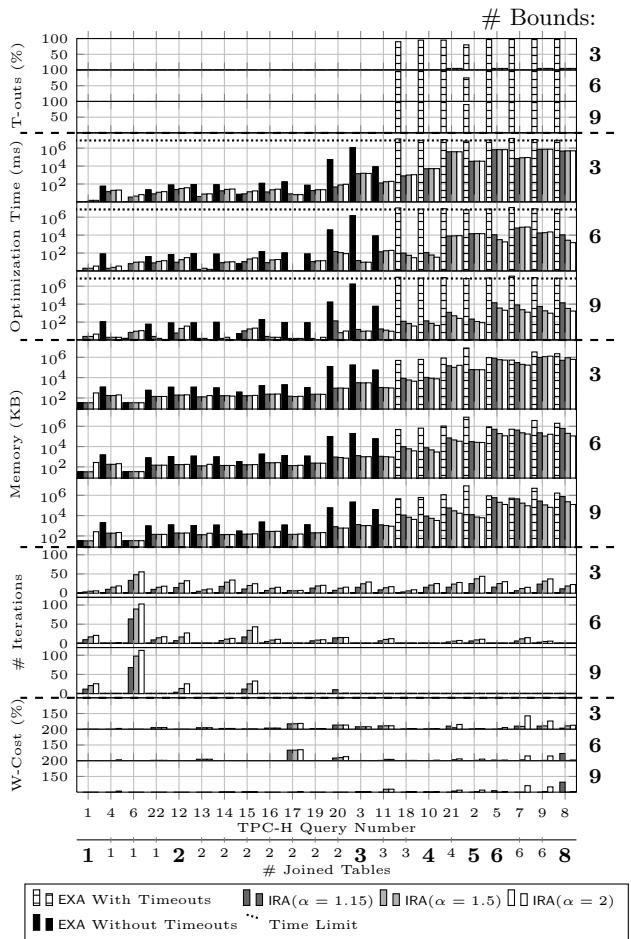

\section{Conclusion}
\label{conclusionSec}

Our MOQO approximation schemes find guaranteed near-optimal plans within seconds where exhaustive optimization takes hours. We analyzed the cost formulas of typical cost metrics in MOQO and identified common properties. We believe that our findings can be exploited for design and analysis of future MOQO algorithms. 

\balance

\begin{footnotesize}

\begin{thebibliography}{10}

\medskip

\bibitem{abul2012alternative}
Z.~Abul-Basher, Y.~Feng, and P.~Godfrey.
\newblock {Alternative Query Optimization for Workload Management}.
\newblock In {\em Database and Expert Systems Applications}, 2012.

\bibitem{Agarwal2012}
S.~Agarwal, A.~Iyer, and A.~Panda.
\newblock {Blink and It's Done: Interactive Queries on Very Large Data}.
\newblock {\em PVLDB}, 2012.

\bibitem{Babcock2005}
B.~Babcock and S.~Chaudhuri.
\newblock {Towards a Robust Query Optimizer: a Principled and Practical
  Approach}.
\newblock {\em SIGMOD Conf.}, 2005.

\bibitem{Babu2005}
S.~Babu, P.~Bizarro, and D.~DeWitt.
\newblock {Proactive Re-Optimization}.
\newblock {\em SIGMOD Conf.}, 2005.

\bibitem{Bizarro2009}
P.~Bizarro, N.~Bruno, and D.~DeWitt.
\newblock {Progressive Parametric Query Optimization}.
\newblock {\em Trans. on KDE}, Apr. 2009.

\bibitem{Chatterji2002}
S.~Chatterji and S.~Evani.
\newblock {On the Complexity of Approximate Query Optimization}.
\newblock {\em PODS}, 2002.

\bibitem{Chu2002}
F.~Chu, J.~Halpern, and J.~Gehrke.
\newblock {Least Expected Cost Query Optimization: What can we Expect?}
\newblock {\em SIGMOD Conf.}, 2002.

\bibitem{Erlebach2014}
T.~Erlebach, H.~Kellerer, and U.~Pferschy.
\newblock {Approximating multiobjective knapsack problems}.
\newblock {\em Management Science}, pages 1603--1612, 2002.

\bibitem{Flach2010}
T.~Flach.
\newblock {Optimizing Query Execution to Improve the Energy Efficiency of
  Database Management Systems}.
\newblock Technical report, 2010.

\bibitem{Ganguly1998}
S.~Ganguly.
\newblock {Design and Analysis of Parametric Query Optimization Algorithms}.
\newblock {\em PVLDB}, 1998.

\bibitem{ganguly1992query}
S.~Ganguly, W.~Hasan, and R.~Krishnamurthy.
\newblock {Query Optimization for Parallel Execution}.
\newblock In {\em SIGMOD Conf.}, 1992.

\bibitem{garofalakis1996multi}
M.~Garofalakis and Y.~Ioannidis.
\newblock {Multi-Dimensional Resource Scheduling for Parallel Queries}.
\newblock In {\em SIGMOD Conf.}, 1996.

\bibitem{guha2003efficient}
S.~Guha, D.~Gunopoulos, N.~Koudas, D.~Srivastava, and M.~Vlachos.
\newblock {Efficient approximation of optimization queries under parametric
  aggregation constraints}.
\newblock In {\em PVLDB}, 2003.

\bibitem{Haas2003}
P.~Haas.
\newblock {Speeding Up DB2 UDB Using Sampling}.
\newblock {\em The IDUG Solutions Journal}, pages 1--10, 2003.

\bibitem{Hulgeri2004}
A.~Hulgeri.
\newblock {\em {Parametric Query Optimization}}.
\newblock PhD thesis, 2004.

\bibitem{Kambhampati02havasu:a}
S.~Kambhampati, U.~Nambiar, Z.~Nie, and S.~Vaddi.
\newblock {Havasu: A Multi-Objective, Adaptive Query Processing Framework for
  Web Data Integration}.
\newblock {\em ASU CSE}, 2002.

\bibitem{kllapi2011schedule}
H.~Kllapi, E.~Sitaridi, M.~M. Tsangaris, and Y.~E. Ioannidis.
\newblock {Schedule Optimization for Data Processing Flows on the Cloud}.
\newblock In {\em SIGMOD Conf.}, 2011.

\bibitem{kossmann2002shooting}
D.~Kossmann, F.~Ramsak, S.~Rost, and Others.
\newblock {Shooting Stars in the Sky: an Online Algorithm for Skyline Queries}.
\newblock In {\em PVLDB}, 2002.

\bibitem{Kossmann2000}
D.~Kossmann and K.~Stocker.
\newblock {Iterative Dynamic Programming: a New Class of Query Optimization
  Algorithms}.
\newblock {\em Trans. on Database Systems}, 1(212):43--82, 2000.

\bibitem{Marinescu2011}
R.~Marinescu.
\newblock {Efficient approximation algorithms for multi-objective constraint
  optimization}.
\newblock {\em Algorithmic Decision Theory}, 2011.

\bibitem{Papadimitriou2001}
C.~Papadimitriou and M.~Yannakakis.
\newblock {Multiobjective Query Optimization}.
\newblock {\em PODS}, 2001.

\bibitem{Selinger1979}
P.~G. Selinger, M.~M. Astrahan, D.~D. Chamberlin, R.~A. Lorie, and T.~G. Price.
\newblock {Access Path Selection in a Relational Database Management System}.
\newblock In {\em SIGMOD Conf.}, pages 23--34, 1979.

\bibitem{Simitsis2005}
A.~Simitsis, P.~Vassiliadis, and T.~Sellis.
\newblock {State-Space Optimization of ETL Workflows}.
\newblock {\em Trans. on KDE}, Oct. 2005.

\bibitem{Simitsis2012}
A.~Simitsis, K.~Wilkinson, M.~Castellanos, and U.~Dayal.
\newblock {Optimizing Analytic Data Flows for Multiple Execution Engines}.
\newblock {\em SIGMOD Conf.}, 2012.

\bibitem{TPC2013}
TPC.
\newblock {TPC-H Benchmark}, 2013.

\bibitem{Vance1996}
B.~Vance and D.~Maier.
\newblock {Rapid Bushy Join-Order Optimization with Cartesian Products}.
\newblock {\em SIGMOD Conf.}, 1996.

\bibitem{xu2012pet}
Z.~Xu, Y.~C. Tu, and X.~Wang.
\newblock {PET: Reducing Database Energy Cost via Query Optimization}.
\newblock {\em PVLDB}, 2012.

\end{thebibliography}

\end{footnotesize}

\end{document}